\documentclass[journal]{IEEEtran}
\usepackage{amsmath,amsfonts,amssymb,amsthm,mathtools}

\usepackage{algorithm}
\usepackage{algpseudocode}
\usepackage{array}
\usepackage[caption=false,font=normalsize,labelfont=sf,textfont=sf]{subfig}
\usepackage{textcomp}
\usepackage{stfloats}
\usepackage{url}
\usepackage{multirow}
\usepackage{verbatim}
\usepackage{graphicx}
\usepackage{cite}
\usepackage{geometry}
\usepackage{float}
\usepackage{enumitem}
\usepackage{tabularx, booktabs, ragged2e}
\newcolumntype{L}{>{\RaggedRight\arraybackslash}X}
\usepackage{placeins} 
\usepackage{makecell} 
\usepackage{etoolbox} % not always needed, but safe
\newcommand{\Dd}{\text{Dd}}
\newcommand{\Atom}{\text{Atom}}
\newcommand{\Atoms}{\text{Atoms}}
\newcommand{\Cn}{\text{Cn}}
\newcommand{\enc}{\text{enc}}

\newcommand{\Cost}{\operatorname{Cost}}
\newcommand{\poly}{\operatorname{poly}}
\newcommand{\Hderive}{H_{\mathrm{derive}}}
\newcommand{\E}{\mathbb{E}}
\newcommand{\Ess}{\operatorname{Ess}}

\newcommand{\timeindex}{\mathsf{time}}

\newtheorem{axiom}{Axiom}[section]
\newtheorem{assumption}{Assumption}[section]
\newtheorem{theorem}{Theorem}[section]
\newtheorem{lemma}{Lemma}[section]
\newtheorem{proposition}{Proposition}[section]
\newtheorem{corollary}{Corollary}[section]
\newtheorem{definition}{Definition}[section]
\newtheorem{remark}{Remark}[section]
\newtheorem{example}{Example}[section]

\numberwithin{equation}{section}

\makeatletter
\renewenvironment{proof}[1][\proofname]{\par
  \pushQED{\qed}%
  \normalfont \topsep6\p@\@plus6\p@\relax
  \trivlist
  \item[\hskip\labelsep
        \itshape
    #1\@addpunct{.}]\ignorespaces
}{%
  \popQED\endtrivlist\@endpefalse
}
\makeatother

\begin{document}

\title{Derivation Depth as an Information Metric:\\
Axioms, Coding Theorems, and Storage--Computation Tradeoffs}

\author{Jianfeng~Xu$^{1}$%
\thanks{$^{1}$Koguan School of Law, China Institute for Smart Justice, School of Computer Science, Shanghai Jiao Tong University, Shanghai 200030, China. Email: xujf@sjtu.edu.cn}%
}

%\author{Anonymous Authors}
%\thanks{$^{1}$Koguan School of Law, China Institute for Smart Justice,
%  School of Computer Science, Shanghai Jiao Tong University,
%  Shanghai 200030, China. Email: xujf@sjtu.edu.cn}%
%\thanks{$^{2}$School of Computer Science, Shanghai Jiao Tong University,
%  Shanghai 200030, China. Email: zeyan0823@sjtu.edu.cn}%
%\thanks{*Corresponding author.}
%}

\maketitle

\begin{abstract}
We propose \emph{derivation depth} as a computable, premise-relative metric of online inferential cost in logical
query answering systems. We formalize information as a two-domain object \((S_O,S_C)\) linked by an effective
realization mechanism, and separate intrinsic semantic content from operational shortcuts by extracting a canonical
irredundant core \(\Atom(S_O)\).
Given any finite available premise base \(B\), we define base-relative derivation depth \(\Dd(\cdot\mid B)\) via a
computable predecessor operator and prove its well-definedness and computability.
By encoding derivation traces and combining a richness condition with incompressibility, we establish coding-type
bounds showing that, for generic information-rich queries, conditional Kolmogorov complexity satisfies
\(K(q\mid\langle B\rangle)=\Theta\!\bigl(\Dd(q\mid B)\log(|B|+\Dd(q\mid B))\bigr)\) up to constant factors and an
unavoidable logarithmic addressing term; for conjunctive workloads we provide a step-accurate BCQ instantiation.
We then derive a frequency-weighted storage--computation tradeoff with a break-even scale
\(f_c=\Theta(\rho\log(m+d))\), formulate budgeted caching as submodular maximization, and extend the framework to
noisy premise bases with loss/pollution.
\end{abstract}

\begin{IEEEkeywords}
Derivation depth, Kolmogorov complexity, Proof traces, Storage--computation tradeoff, Caching, Submodular optimization
\end{IEEEkeywords}

%=============================================================
%  I.  INTRODUCTION
%=============================================================
\section{Introduction}
\label{sec:introduction}

\IEEEPARstart{M}{odern} information systems increasingly operate under a coupled constraint:
they must answer rich logical queries over large knowledge bases with low latency, while facing tight budgets on
persistent storage, bandwidth, and online computation.
In practice, this constraint is managed via a continuum of mechanisms---from rule compilation and
materialized views to memoization and cache hierarchies---all of which can be summarized as a single design question:
\emph{which information should be stored (and in what form), and which information should be derived on demand,
so as to minimize total resource cost under the system's access pattern?}
This question arises in database view selection and physical design~\cite{chaudhuri1995optimizing,gupta1997materialized},
in caching and paging systems~\cite{sleator1985amortized,megiddo2003arc,belady1966study},
and more recently in retrieval-augmented architectures for large models~\cite{lewis2020retrieval,karpukhin2020dense}.
It also appears, in a broader form, in neuro-symbolic AI as a recurring tension between storing explicit symbolic
structures and recomputing (or approximating) them via learned computation~\cite{garcez2023neurosymbolic,lamb2020graph}.

\subsection{Problem setting and motivating tension}
\label{subsec:intro-problem}

A query answering system implicitly maintains two kinds of ``knowledge'':
(i) \emph{semantic premises} that constitute an irredundant core of what is known, and
(ii) \emph{operational shortcuts} that are stored to reduce online inference.
When the query workload changes, shortcuts may become wasteful; when the base grows, recomputation becomes expensive.
A satisfactory theory should therefore separate at least three notions:
\begin{enumerate}[label=\textup{(\roman*)}, leftmargin=2.0em]
\item \emph{Intrinsic content:} what the system logically entails from a minimal semantic core;
\item \emph{Operational cost:} how expensive it is to answer queries under the currently stored shortcuts;
\item \emph{Economic tradeoff:} when it is beneficial to pay storage to reduce online inference, given query frequencies.
\end{enumerate}

To formalize these, we model knowledge bases and queries inside a fixed effective logical substrate
(Section~\ref{sec:model} and Section~\ref{subsec:logic}), define a canonical irredundant semantic core
\(\Atom(S_O)\) (Definition~\ref{def:atom-so}), and measure online inference by a base-relative derivation depth
\(\Dd(\cdot\mid B)\) (Definition~\ref{def:derivation-depth}).
This supports two baselines: the \emph{intrinsic} baseline \(B=A:=\Atom(S_O)\) and the
\emph{operational} baseline \(B=S_O\), yielding intrinsic vs.\ operational depths
(Definition~\ref{def:int-op-depth} and Remark~\ref{rem:int-op-monotone}).

\subsection{Related work and conceptual positioning}
\label{subsec:intro-related}

\paragraph{Logical and semantic foundations (effective substrates).}
Our formal substrate uses a fixed logical language \(\mathcal L\) instantiated as \(\mathrm{FO(LFP)}\),
which over ordered finite structures captures polynomial-time computation via the Immerman--Vardi theorem
(Section~\ref{subsec:logic}; see also~\cite{immerman1999descriptive,immerman1982relational,vardi1982complexity}).
This choice provides a machine-independent semantic layer where satisfaction is decidable on finite structures
and proof checking is effective, enabling a clean definition of depth, traces, and encodings.

\paragraph{Information measures: Shannon entropy vs.\ algorithmic information.}
Shannon entropy~\cite{shannon1948mathematical} (standard treatment in~\cite{cover2006elements}) quantifies
average information under a distribution and underlies classical coding theorems.
In contrast, algorithmic information theory measures individual description length via Kolmogorov complexity
\cite{kolmogorov1965three,solomonoff1964formal,chaitin1969simplicity}.
MDL~\cite{rissanen1978modeling,grunwald2007minimum} operationalizes description-length ideas for modeling,
but does not directly quantify the online inferential cost of deriving consequences from a premise base.
Bennett's logical depth~\cite{bennett1988logical} relates description length to computation time; our work is aligned
in spirit but is tailored to \emph{base-relative query answering}, where we connect \(\Dd(\cdot\mid B)\) to conditional
algorithmic information via explicit derivation-trace encodings and generic incompressibility arguments
(Section~\ref{sec:derivation-entropy-ideal}), and relate \(\Dd\) to Bennett-style depth under explicit simulation
assumptions (Section~\ref{subsec:bennett}).

\paragraph{Time--space tradeoffs and caching/view materialization.}
Time--space tradeoffs are classical in complexity theory (e.g.,~\cite{yao1985should,beame1991general}).
In database systems, materialized views and physical design explicitly trade storage for query latency
(e.g.,~\cite{chaudhuri1995optimizing,gupta1997materialized,selinger1979access}).
Caching and paging policies (LRU/LFU-style heuristics and adaptive schemes such as ARC) implement this tradeoff online;
their theoretical analysis often uses competitive frameworks and offline optima (e.g.,~\cite{sleator1985amortized,belady1966study,megiddo2003arc}).
Our focus is complementary: we provide an information-theoretic lens that (i) separates intrinsic semantic content from
operational shortcuts via \(\Atom(S_O)\), (ii) links online derivation cost to conditional algorithmic information for
generic information-rich queries, and (iii) yields a frequency-weighted break-even phenomenon for caching
(Section~\ref{sec:tradeoff-ideal}).

\paragraph{Submodularity and approximation for system-wide allocation.}
Submodular optimization provides approximation guarantees for many budgeted selection problems
(e.g.,~\cite{nemhauser1978analysis,krause2014submodular}).
We cast cache allocation as maximizing expected depth reduction under a knapsack budget and use standard greedy tools to obtain
\((1-1/e)\)-type guarantees under diminishing returns assumptions (Section~\ref{subsec:system-allocation} and~\cite{sviridenko2004note});
for robust/noisy and non-monotone extensions we connect to non-monotone submodular maximization methods
(e.g.,~\cite{buchbinder2014submodular}).

\paragraph{Large-scale semantic systems and retrieval augmentation (motivation).}
Large knowledge graphs and semantic resources~\cite{bollacker2008freebase,carlson2010toward,miller1995wordnet} motivate
settings where storing everything explicitly is expensive.
Representation-learning and embedding methods~\cite{bordes2013translating} compress knowledge at the cost of exactness.
Retrieval-augmented systems~\cite{lewis2020retrieval,karpukhin2020dense} and large language models
(e.g., Transformers~\cite{vaswani2017attention} and large-scale language modeling~\cite{brown2020language})
further emphasize architectural choices between parametric storage and non-parametric retrieval/derivation.

\subsection{Research approach: from axioms to coding theorems to optimization}
\label{subsec:intro-approach}

Our development follows a ``semantic-to-operational'' pipeline (our axiomatization is compatible in spirit with
Objective Information Theory (OIT) viewpoints~\cite{xu2014objective,xu2024research}):

\paragraph{Axiomatize information instances and effective semantics}
We model information as a two-domain object \((S_O,S_C)\) linked by a computable enabling mechanism
(Section~\ref{sec:model}), separating semantic representation from carrier realization.

\paragraph{Canonicalize intrinsic content and define depth relative to available premises}
Given a finite knowledge base \(S_O\), we extract a canonical irredundant core \(\Atom(S_O)\)
(Definition~\ref{def:atom-so}), separating intrinsic premises from stored shortcuts
(Definition~\ref{def:intrinsic-operational-bases}).
We then define base-relative derivation depth \(\Dd(\cdot\mid B)\) using a computable predecessor operator \(P_O\)
(Definition~\ref{def:derivation-depth}) and prove well-definedness and computability
(Theorem~\ref{thm:computability-Dd}).

\paragraph{Prove depth-as-information theorems via derivation trace encodings}
We encode derivation traces with length \(\Theta(n\log(m+n))\) bits (Lemma~\ref{lem:encoding}),
introduce a richness condition (Definition~\ref{def:richness}), and derive two-sided generic bounds connecting depth to
conditional Kolmogorov complexity (Theorem~\ref{thm:derivation-depth-info-metric}).
For conjunctive workloads, we define a compositional (sum-rule) depth \(\Dd_\Sigma\) that is step-accurate by construction
(Definition~\ref{def:additive-depth} and Corollary~\ref{cor:bcq-instantiation}).

\paragraph{Turn information bounds into storage--computation tradeoffs and allocation principles}
Using the depth--information correspondence, we derive a frequency-weighted caching principle and a break-even frequency
scale \(f_c=\Theta(\rho\log(m+d))\) in the generic regime (Theorem~\ref{thm:freq_tradeoff} and
Corollary~\ref{cor:fc-window}).
At the system level, we formulate a budgeted optimization of expected depth reduction and show how submodularity supports
approximation guarantees (Section~\ref{subsec:system-allocation}).

\subsection{Main contributions}
\label{subsec:intro-contrib}

The main contributions of this paper are:

\begin{enumerate}[label=\textup{C\arabic*}), leftmargin=2.3em]
\item \textbf{A two-domain axiomatization with effective semantics} (Section~\ref{sec:model}).

\item \textbf{Canonical intrinsic content via irredundant semantic cores} (Definition~\ref{def:atom-so} and
Proposition~\ref{prop:atom-core-correct}).

\item \textbf{Derivation depth as an information metric} via trace encodings and generic incompressibility
(Theorem~\ref{thm:derivation-depth-info-metric}), with a BCQ instantiation via \(\Dd_\Sigma\)
(Corollary~\ref{cor:bcq-instantiation}).

\item \textbf{Frequency-weighted storage--computation tradeoffs} and a break-even frequency scale
(Theorem~\ref{thm:freq_tradeoff}).

\item \textbf{System-wide allocation and noise-aware extensions} (Section~\ref{subsec:system-allocation} and
Section~\ref{sec:noisy-extensions}).

\end{enumerate}

\subsection{Organization}
\label{subsec:intro-organization}

Section~\ref{sec:model} introduces the system model, axioms, and the intrinsic/operational separation via
\(\Atom(S_O)\) and \(\Dd(\cdot\mid B)\).
Section~\ref{sec:derivation-entropy-ideal} develops derivation depth as an information metric and proves coding-style bounds
connecting depth to conditional algorithmic information.
Section~\ref{sec:tradeoff-ideal} derives frequency-weighted storage--computation tradeoffs and system-wide allocation
principles, including greedy approximation guarantees under submodularity assumptions.
Section~\ref{sec:noisy-extensions} extends the framework to noisy premise bases.
Finally, Section~\ref{sec:conclusion} concludes and discusses limitations and directions.

%=============================================================
%  II. SYSTEM MODEL AND AXIOMATIC FOUNDATIONS
%=============================================================
\section{System Model and Axiomatic Foundations}
\label{sec:model}

This section fixes the formal substrate on which our information metrics and
optimization problems are built. The model consists of:
(i) a fixed logical language \(\mathcal L\) and an effective proof system;
(ii) a two-domain state description \((S_O,S_C)\) with computable realization;
(iii) a notion of \emph{irredundant semantic core} \(\Atom(S_O)\) that separates
\emph{intrinsic} inferential content from \emph{operational} shortcut effects; and
(iv) a computable predecessor operator \(P_O\) used solely to define and compute
base-relative derivation depth \(\Dd(\cdot\mid B)\).

%-------------------------------------------------------------
%  II.A Underlying logical system and effective state sets
%-------------------------------------------------------------
\subsection{Underlying Logical System \(\mathcal{L}\) and Effective State Sets}
\label{subsec:logic}

Throughout we fix a formal logical system \(\mathcal{L}\) taken to be
\emph{first-order logic with least fixed-point operators} \(\mathrm{FO(LFP)}\)
\cite{immerman1999descriptive}, extended with multiple sorts including at least
\(\mathsf{Obj}\) (entities), \(\mathsf{Time}\) (time points), and \(\mathsf{Carrier}\) (carriers).
We restrict attention to \emph{finite} structures over a discrete, bounded time domain
\(T=\{t_0,\ldots,t_n\}\).
Under this restriction, \(\mathcal{L}\) is consistent, its satisfaction relation
\(\mathcal{M}\models\varphi\) is decidable for every finite structure \(\mathcal{M}\) and formula
\(\varphi\in\mathcal{L}\), and---by the Immerman--Vardi theorem
\cite{immerman1982relational,vardi1982complexity}---\(\mathrm{FO(LFP)}\) captures exactly the class of
polynomial-time computable properties over ordered finite structures.

\begin{definition}[Expressible and effectively representable state sets]
\label{def:expressible-effective}
A state set \(S(X,T)\) is \emph{expressible in \(\mathcal{L}\)} if there exists a finite
\(\mathcal L\)-signature \(\Sigma\) and a finite \(\mathcal L\)-structure whose domain includes (encodings of)
\(X\) and \(T\), such that membership \(s\in S(X,T)\) and all relations used to describe the state dynamics
are definable by formulas of \(\mathcal L\) over \(\Sigma\).
It is \emph{effectively representable} if each \(s\in S(X,T)\) admits a finite encoding from which the
relevant \(\mathcal{L}\)-predicates can be evaluated effectively.
\cite{qiu2025research}.
\end{definition}

Henceforth, all state sets are assumed expressible and effectively representable in \(\mathcal{L}\).

\begin{assumption}[Fixed effective proof system and deductive closure]
\label{assump:proof-system}
We fix an effective proof system \(\mathsf{PS}\) for \(\mathcal L\) such that proof checking is decidable.
We write \(\Gamma \vdash_{\mathcal L} \varphi\) to denote derivability of a well-formed formula \(\varphi\)
from a \emph{finite} set of formulas \(\Gamma\) in \(\mathsf{PS}\), and define the deductive closure operator
\[
\Cn(\Gamma)\;:=\;\{\varphi:\ \Gamma\vdash_{\mathcal L}\varphi\}.
\]
\end{assumption}

\begin{remark}[On decidability of \(\Cn(\cdot)\)]
\label{rem:cn-decidability}
Assumption~\ref{assump:proof-system} guarantees only that \emph{verifying} a purported proof is decidable.
In general, the membership predicate \(\varphi \in \Cn(\Gamma)\) may be undecidable (and is typically only
semi-decidable). Whenever we require decidability of specific instances of redundancy/entailment, we state it
explicitly (e.g., Assumption~\ref{assump:core-extractable}) and work in fragments where it is satisfied.
\end{remark}

%-------------------------------------------------------------
%  II.B Computable Information Axioms
%-------------------------------------------------------------
\subsection{Computable Information Axioms}
\label{subsec:axioms}

An \emph{information system} consists of a semantic (ontological) state set \(S_O\) and a carrier
(physical) state set \(S_C\), each equipped with its own time domain and representation.
A computable \emph{enabling} (realization) mechanism links semantic states to carrier states~\cite{xu2024research}.

\begin{axiom}[Binary attribute (two-domain information)]
\label{ax:binary-attribute}
Information is modeled by two state domains \((S_O,S_C)\), where \(S_O\) is the semantic/ontological
state set and \(S_C\) is the carrier/physical state set.
\end{axiom}

\begin{axiom}[Existence duration (domain-wise time and precedence)]
\label{ax:existence-duration}
There exist time domains \((T_O,\prec_O)\) and \((T_C,\prec_C)\) and time-index maps
\(\timeindex_O:S_O\to T_O\) and \(\timeindex_C:S_C\to T_C\).
Moreover, there is a fixed cross-domain precedence relation \(\prec\) such that whenever a carrier
state \(s_c\) is an enabled realization of a semantic state \(s_o\), we have
\[
  \timeindex_O(s_o) \prec \timeindex_C(s_c).
\]
\end{axiom}

\begin{axiom}[State representation (effective encodability)]
\label{ax:state-representation}
There exist finite binary representations \(\enc_O:S_O\to\{0,1\}^*\) and \(\enc_C:S_C\to\{0,1\}^*\) such that:
\begin{enumerate}[label=\textup{(R\arabic*)}]
  \item \emph{Decidable identity:} equality of encoded states is decidable.
  \item \emph{Effective predicate evaluation:} all relations/predicates used to describe membership and
  structure on \(S_O\) and \(S_C\) in the fixed logic \(\mathcal L\) are effectively evaluable from the encodings.
\end{enumerate}
\end{axiom}

\begin{axiom}[Enabling mapping (computable totality and surjective coverage)]
\label{ax:enabling-mapping}
There exists a relation \(R_{\mathcal E}\subseteq S_O\times S_C\) and the induced set-valued map
\(\mathcal E:S_O\Rightarrow S_C\), \(\mathcal E(s_o):=\{s_c\in S_C:(s_o,s_c)\in R_{\mathcal E}\}\), such that:
\begin{enumerate}[label=\textup{(E\arabic*)}]
  \item \emph{Totality on semantic states:} for every \(s_o\in S_O\), \(\mathcal E(s_o)\neq\varnothing\).
  \item \emph{Surjective coverage of carrier reality:} for every \(s_c\in S_C\), there exists \(s_o\in S_O\)
  with \(s_c\in \mathcal E(s_o)\). Equivalently, \(\bigcup_{s_o\in S_O}\mathcal E(s_o)=S_C\).
  \item \emph{Computable enabling selector:} there exists a partial computable function
  \(e:\{0,1\}^*\to\{0,1\}^*\) such that for every \(s_o\in S_O\), \(e(\enc_O(s_o))\) halts and equals
  \(\enc_C(s_c)\) for some \(s_c\in\mathcal E(s_o)\).
  \item \emph{Temporal precedence of realization:} if \(s_c\in \mathcal E(s_o)\), then
  \(\timeindex_O(s_o) \prec \timeindex_C(s_c)\).
\end{enumerate}
\end{axiom}

%-------------------------------------------------------------
%  II.C Realization mappings and information instances
%-------------------------------------------------------------
\subsection{Realization Mappings and Information Instances}
\label{subsec:enabling}

By Axiom~\ref{ax:enabling-mapping}, there exists a relation
\(R_{\mathcal E}\subseteq S_O\times S_C\).
We write the induced set-valued enabling (realization) map
\begin{equation}\label{eq:enabling-map}
  \mathcal E:S_O\Rightarrow S_C,
  \;
  \mathcal E(s_o)
  :=
  \{\,s_c\in S_C:\ (s_o,s_c)\in R_{\mathcal E}\,\}.
\end{equation}

\begin{definition}[Computable enabling (realization) mechanism]
\label{def:enabling}
A relation \(R_{\mathcal E}\subseteq S_O\times S_C\) (equivalently, the induced \(\mathcal E\) in
\eqref{eq:enabling-map}) is called a \emph{computable enabling mechanism} if it satisfies all clauses
\textup{(E1)}--\textup{(E4)} of Axiom~\ref{ax:enabling-mapping}.
\end{definition}

\begin{definition}[Information instance]
\label{def:info-instance}
An \emph{information instance} is a tuple
\[
  \mathcal I
  \;=\;
  \langle
    S_O, T_O, S_C, T_C,\timeindex_O,\timeindex_C, R_{\mathcal E}
  \rangle,
\]
where \((S_O,S_C)\) are the two state domains (Axiom~\ref{ax:binary-attribute}),
\(\timeindex_O,\timeindex_C\) are the time-index maps (Axiom~\ref{ax:existence-duration}),
and \(R_{\mathcal E}\) is a computable enabling mechanism (Definition~\ref{def:enabling}).
\end{definition}

%-------------------------------------------------------------
%  II.D Synonymous state sets and ideal information
%-------------------------------------------------------------
\subsection{Synonymous State Sets and Ideal Information}
\label{subsec:ideal}

\begin{definition}[Compositional interpretation between sub-signatures]
\label{def:comp-interp}
Let \(\Sigma_1,\Sigma_2\) be sub-signatures of \(\mathcal{L}\).
A \emph{compositional interpretation} \(\tau:\Sigma_1\hookrightarrow\Sigma_2\) in \(\mathcal{L}\) consists of:
\textup{(i)} for each sort \(s\) of \(\Sigma_1\), a domain formula \(\delta_s(y)\) over \(\Sigma_2\);
\textup{(ii)} for each relation symbol \(R\) of \(\Sigma_1\), a formula \(\tau_R(y_1,\ldots,y_n)\) over \(\Sigma_2\).
The map \(\tau\) extends to all formulas of \(\Sigma_1\) by the standard recursive clauses:
\begin{align*}
  \tau(\neg\varphi) &\equiv \neg\tau(\varphi), \\
  \tau(\varphi \land \psi) &\equiv \tau(\varphi) \land \tau(\psi), \\
  \tau(\varphi \lor \psi) &\equiv \tau(\varphi) \lor \tau(\psi), \\
  \tau(\exists x\!:\!s.\;\varphi) &\equiv
    \exists y.\,\bigl[\delta_s(y) \land \tau(\varphi[x/y])\bigr].
\end{align*}
\end{definition}

\begin{definition}[Synonymous state sets]
\label{def:synonymous-states}
State sets \(S_1(X_1,T_1)\) and \(S_2(X_2,T_2)\), each expressible in \(\mathcal L\) as an interpretation of a
WFF family \(\Phi_1\) and \(\Phi_2\) (assumed finite or effectively enumerable) over sub-signatures
\(\Sigma_1\) and \(\Sigma_2\), are \emph{synonymous relative to \(\mathcal{L}\)}, written
\(S_1\equiv_\mathcal{L}S_2\), if there exist:
\begin{enumerate}[label=\textup{(\roman*)}]
  \item compositional interpretations \(\tau_{12}:\Sigma_1\hookrightarrow\Sigma_2\) and
  \(\tau_{21}:\Sigma_2\hookrightarrow\Sigma_1\) in \(\mathcal{L}\), and
  \item a bijection \(\sigma:\Phi_1\to\Phi_2\),
\end{enumerate}
such that for every \(\varphi\in\Phi_1\),
\[
  \vdash_{\mathcal{L}}\;\sigma(\varphi)\leftrightarrow\tau_{12}(\varphi),
\]
and round-trip coherence holds on generators (bi-interpretability).
\end{definition}

\begin{remark}[Relation to interpretability]
\label{rem:interpretability}
Definition~\ref{def:synonymous-states} is a formulation of mutual interpretability (bi-interpretability) at the
level of generating formula families; see standard references in model theory and finite model theory,
e.g.,~\cite{hodges1993model,ebbinghaus1995finite}.
\end{remark}

\begin{definition}[Ideal information]
\label{def:ideal-info}
An information instance \(\mathcal I\) is \emph{ideal} (with respect to \(\mathcal L\)) if
\[
  S_O \equiv_{\mathcal L} S_C.
\]
\end{definition}

%-------------------------------------------------------------
%  II.E Noisy information
%-------------------------------------------------------------
\subsection{Noisy Information}
\label{subsec:noisy}

\begin{assumption}[Common semantic universe for set operations]
\label{assump:semantic-universe}
There exists a fixed ambient set \(\mathbb{S}_O\) such that all semantic state sets considered satisfy
\(S_O\subseteq \mathbb{S}_O\).
Moreover, \(\mathbb{S}_O\) is effectively representable in the sense that \(\enc_O\) and \(\timeindex_O\) extend to
\(\mathbb S_O\), and \(\mathbb S_O\) is closed under the syntactic renamings/interpretations used to form synonymous
representatives in Section~\ref{subsec:ideal} (in particular, the representatives produced in
Lemma~\ref{lem:sem-trans} can be chosen as subsets of \(\mathbb S_O\)).
\end{assumption}

\begin{definition}[Noisy semantic base (set perturbation)]
\label{def:noisy}
Let \(S_O\subseteq \mathbb S_O\) be the intended semantic state set.
A \emph{noisy semantic base} associated with \(S_O\) is any set of the form
\[
  \tilde S_O := (S_O \setminus S_O^-) \cup S_O^+,
\]
where \(S_O^- \subseteq S_O\) (lost states) and \(S_O^+ \subseteq \mathbb S_O\setminus S_O\) (spurious states).
\end{definition}

\begin{lemma}[Semantic transferability under signature isomorphism]
\label{lem:sem-trans}
Let \(\mathcal{L}=\mathrm{FO(LFP)}\) be the underlying logical system over finite relational structures.
For any state set \(S(X,T)\) expressible in \(\mathcal{L}\) over a sub-signature \(\Sigma=\{R_1,\ldots,R_n\}\),
and any arity-matching \(\Sigma'=\{R'_1,\ldots,R'_n\}\), there exists \(S'(Y,T')\) expressible over \(\Sigma'\)
such that \(S\equiv_{\mathcal{L}}S'\).
\end{lemma}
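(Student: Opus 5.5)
The idea is to build \(S'\) by pure renaming and then read off every clause of Definition~\ref{def:synonymous-states} from that renaming. Since \(\Sigma\) and \(\Sigma'\) have the same number of symbols with matching arities, there is a bijection \(\rho:\Sigma\to\Sigma'\), \(R_i\mapsto R'_i\), which preserves arity. I will extend \(\rho\) to the sorts: each sort \(s\) of \(\Sigma\) is sent to a primed copy \(s'\). I then take \(S'(Y,T')\) to be the state set presented by the transported structure. Its domain is \(Y:=X\) and \(T':=T\), possibly re-tagged as elements of \(\mathbb S_O\) under Assumption~\ref{assump:semantic-universe}. Each \(R'_i\) is interpreted as the relation that \(R_i\) had. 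If \(S\) is presented by a WFF family \(\Phi\) over \(\Sigma\), then \(S'\) is presented by \(\Phi':=\{\rho(\varphi):\varphi\in\Phi\}\), where \(\rho(\varphi)\) is the syntactic substitution of \(R'_i\) for \(R_i\) throughout \(\varphi\).

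The first step checks that \(S'\) is expressible and effectively representable. Syntactic substitution of relation symbols is a computable map on formulas. It commutes with the Boolean connectives, with the quantifiers and with the \(\mathrm{LFP}\) operator, because it never touches variables or fixed-point relation variables. Hence \(\Phi'\) is finite or effectively enumerable whenever \(\Phi\) is. The transported structure is isomorphic to the original one via the identity on the universe, so \(\mathcal M\models\varphi\) iff \(\mathcal M'\models\rho(\varphi)\), and the effective predicate evaluation (R2) carries over unchanged.

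The second step builds the interpretations required in Definition~\ref{def:synonymous-states}. I set \(\tau_{12}\) to have domain formulas \(\delta_s(y):=(y=y)\) relativised to sort \(s'\), and relation formulas \(\tau_{R_i}(\bar y):=R'_i(\bar y)\). The reverse map \(\tau_{21}\) is defined symmetrically with \(\rho^{-1}\). The bijection is \(\sigma:=\rho|_\Phi:\Phi\to\Phi'\); it is injective because \(\rho\) is injective on symbols, so it is injective on formulas by induction on syntax. A routine induction over the clauses of Definition~\ref{def:comp-interp} shows that \(\tau_{12}(\varphi)\) and \(\rho(\varphi)\) differ only by trivial relativisations and renaming of bound variables. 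The trivial relativisations are of the form \(\exists y[\top\wedge\cdots]\). Therefore \(\vdash_{\mathcal L}\sigma(\varphi)\leftrightarrow\tau_{12}(\varphi)\). Round-trip coherence on generators holds because \(\tau_{21}\circ\tau_{12}\) sends each \(R_i(\bar x)\) to \(R_i(\bar x)\) up to the same trivial relativisation, and likewise in the other direction.

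I expect the main obstacle to be the \(\mathrm{LFP}\) case of the induction. Definition~\ref{def:comp-interp} lists only the first-order clauses, so I have to add the natural clause \(\tau([\mathrm{LFP}_{Z,\bar x}\,\psi](\bar t)) \equiv [\mathrm{LFP}_{Z,\bar y}\,(\bigwedge\delta(\bar y)\wedge\tau(\psi))](\bar t)\). I then need to check that positivity of \(Z\) is preserved, which holds because \(\tau\) leaves \(Z\) fixed and adds no negations around it. I also need provable equivalence under a trivial \(\delta\), which I would justify by soundness together with the fact that both sides define the same fixed point in every finite structure, since they are literally the same operator. If the proof system \(\mathsf{PS}\) is not assumed complete for this, I would fall back on the fact that the two formulas coincide up to \(\alpha\)-equivalence once \(\delta\) is taken literally as \(\top\), so the biconditional is an instance of \(\varphi\leftrightarrow\varphi\).
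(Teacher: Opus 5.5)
Your proposal is correct and follows the paper's own argument: rename \(R_j\mapsto R'_j\), transport the generating formula family, and take the resulting syntactic isomorphism as the bi-interpretability witness; you simply spell out what the paper's one-line proof leaves implicit, namely explicit \(\tau_{12},\tau_{21},\sigma\) and an \(\mathrm{LFP}\) clause that is missing from Definition~\ref{def:comp-interp}. One small correction to your fallback: even with \(\delta_s:=\top\), the mandated clause \(\exists y.\,[\delta_s(y)\wedge\tau(\varphi[x/y])]\) leaves \(\top\wedge\) conjuncts, so the biconditional is not literally an instance of \(\varphi\leftrightarrow\varphi\); it still requires \(\mathsf{PS}\) to prove \(\top\wedge\chi\leftrightarrow\chi\) and replacement of equivalents (including under \(\mathrm{LFP}\)), a minimal adequacy of \(\mathsf{PS}\) that the paper also assumes tacitly.
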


\begin{proof}
Uniformly rename symbols \(R_j\mapsto R'_j\) and transport the generating formula family; this yields a
bi-interpretability witness by syntactic isomorphism.
\end{proof}

\begin{definition}[Noisy information (semantic description relative to a fixed carrier)]
\label{def:noisy-info}
Fix a carrier domain \(S_C\).
A \emph{noisy information} associated with \(S_C\) is an abstract object
\[
  \tilde{\mathcal I}=\langle \tilde S_O,\, S_C,\, (\tau_{OC},\tau_{CO},\sigma)\rangle,
\]
where \((\tau_{OC},\tau_{CO},\sigma)\) witnesses \(\tilde S_O \equiv_{\mathcal L} S_C\).
\end{definition}

\begin{proposition}[Existence of carrier-fixed noisy information]
\label{prop:noisy-exist}
Let \(\mathcal I\) be an information instance with semantic domain \(S_O\) and carrier domain \(S_C\).
Then there exists a noisy semantic base \(\tilde S_O\) of the form
\(\tilde S_O=(S_O\setminus S_O^-)\cup S_O^+\) such that \(\tilde S_O \equiv_{\mathcal L} S_C\),
and hence a noisy information object \(\tilde{\mathcal I}\) with semantic component \(\tilde S_O\).
\end{proposition}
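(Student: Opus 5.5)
The plan is to build \(\tilde S_O\) directly out of the carrier domain, transported into the semantic universe, and then read off the perturbation sets \(S_O^-\) and \(S_O^+\) from the set difference with \(S_O\). The key observation is that Definition~\ref{def:noisy} is very permissive. Any set \(\tilde S_O\subseteq\mathbb S_O\) can be written in the form \((S_O\setminus S_O^-)\cup S_O^+\) by taking
\[
S_O^-:=S_O\setminus\tilde S_O,\qquad S_O^+:=\tilde S_O\setminus S_O .
\]
Then \(S_O^-\subseteq S_O\) and \(S_O^+\subseteq\mathbb S_O\setminus S_O\), and
\[
(S_O\setminus S_O^-)\cup S_O^+=(S_O\cap\tilde S_O)\cup(\tilde S_O\setminus S_O)=\tilde S_O .
\]
So the whole burden is to exhibit some \(\tilde S_O\subseteq\mathbb S_O\) with \(\tilde S_O\equiv_{\mathcal L}S_C\).

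\textbf{Step 1.} By the standing assumption after Definition~\ref{def:expressible-effective}, \(S_C\) is expressible in \(\mathcal L\). Fix a sub-signature \(\Sigma_C=\{R_1,\dots,R_n\}\) and a generating WFF family \(\Phi_C\) for it. Choose an arity-matching signature \(\Sigma'=\{R'_1,\dots,R'_n\}\) on the semantic side; fresh symbols will do.

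\textbf{Step 2.} Apply Lemma~\ref{lem:sem-trans} to \(S_C\) and \(\Sigma'\). This yields a state set \(S'\) expressible over \(\Sigma'\) with \(S'\equiv_{\mathcal L}S_C\). The witness is the renaming interpretations \(\tau_{CO}:R_j\mapsto R'_j\) and \(\tau_{OC}:R'_j\mapsto R_j\), with trivial domain formulas, together with the bijection \(\sigma\) on generators induced by the renaming. By Assumption~\ref{assump:semantic-universe}, this representative can be chosen with \(S'\subseteq\mathbb S_O\). Set \(\tilde S_O:=S'\).

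\textbf{Step 3.} Define \(S_O^\pm\) as above, check the set identity, and package \(\tilde{\mathcal I}=\langle\tilde S_O,S_C,(\tau_{OC},\tau_{CO},\sigma)\rangle\). This is a noisy information object in the sense of Definition~\ref{def:noisy-info}. Since \(\equiv_{\mathcal L}\) as defined is directional in its bijection, the witness should be presented in the orientation \(\tilde S_O\to S_C\) that the definition requires. Inverting the renaming bijection handles this.

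The main obstacle is Step 2's membership in \(\mathbb S_O\). Lemma~\ref{lem:sem-trans} only produces some abstract \(S'\), and it is exactly the closure clause of Assumption~\ref{assump:semantic-universe} that lets us place it inside the ambient universe; I would invoke that clause explicitly rather than try to construct the embedding. A secondary point worth a remark is that the result is nonconstructive about how much noise is incurred: \(S_O^-\) and \(S_O^+\) may be large. The proposition asserts only existence, and the degenerate case \(S_O^-=S_O\) is allowed.
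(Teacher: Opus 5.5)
Your proposal is correct and takes the same approach as the paper: apply Lemma~\ref{lem:sem-trans} to get a representative \(S'\equiv_{\mathcal L}S_C\), then set \(S_O^-:=S_O\setminus S'\) and \(S_O^+:=S'\setminus S_O\). The points you spell out---the set identity, placing \(S'\) inside \(\mathbb S_O\) via Assumption~\ref{assump:semantic-universe}, and reversing the orientation of the witness---are details the paper's one-line proof leaves implicit.
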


\begin{proof}
Apply Lemma~\ref{lem:sem-trans} to obtain some semantic representative \(S'_O\equiv_\mathcal L S_C\),
then set \(S_O^-:=S_O \setminus S'_O\) and \(S_O^+:=S'_O\setminus S_O\).
\end{proof}

%-------------------------------------------------------------
%  II.F Dependencies, irredundant cores, and derivation depth
%-------------------------------------------------------------
\subsection{Semantic Dependencies, Irredundant Cores, and Derivation Depth}
\label{subsec:atomic-derivation}

This subsection introduces two logically distinct ingredients:
(i) an \emph{irredundant semantic core} \(\Atom(S_O)\) used to separate intrinsic vs.\ operational inference;
(ii) a computable predecessor operator \(P_O\) used to define and compute derivation depth \(\Dd(\cdot\mid B)\).
Crucially, \(\Atom(S_O)\) is defined via the deductive closure \(\Cn(\cdot)\) rather than via \(P_O\).

\paragraph{Available premise bases.}
A finite set \(B\subseteq \mathbb S_O\) is called an \emph{available premise base} if its elements are treated as
depth-\(0\) premises when measuring derivation depth from \(B\).

\begin{assumption}[Finite and effectively listable knowledge bases]
\label{assump:finite-so}
The knowledge bases \(S_O\) considered in this paper are finite and effectively listable under a fixed canonical order.
\end{assumption}

\begin{assumption}[Effective redundancy test (core extractability)]
\label{assump:core-extractable}
For the knowledge bases \(S_O\) considered, the predicate
\(
s\in \Cn(\Gamma)
\)
is decidable whenever \(\Gamma\subseteq S_O\) is finite and \(s\in S_O\).
(Equivalently: redundancy of a stored formula relative to the remaining stored formulas is decidable.)
\end{assumption}

\begin{remark}[When Assumption~\ref{assump:core-extractable} is reasonable]
\label{rem:core-extractable-context}
Assumption~\ref{assump:core-extractable} is strong for unrestricted logics, but it is satisfied in many
practically relevant decidable fragments used for knowledge bases and rule systems (e.g., Datalog/Horn-style
settings and bounded-domain theories), where entailment and redundancy checking are decidable and often tractable;
see, e.g.,~\cite{abiteboul1995foundations,dantsin2001complexity}.
\end{remark}

\begin{definition}[Semantic atomic core (canonical irredundant generating set)]
\label{def:atom-so}
Let \(S_O\) be a finite knowledge base.
Define \(\Atom(S_O)\) to be the output of the following fixed deterministic \emph{irredundantization} procedure:

\begin{enumerate}[label=\textup{(\roman*)}]
  \item Initialize \(A\gets S_O\).
  \item Scan elements of \(S_O\) in the fixed canonical order; for each \(s\in S_O\), if
  \(s\in \Cn(A\setminus\{s\})\), then set \(A\gets A\setminus\{s\}\).
  \item Output \(A\) and denote it by \(\Atom(S_O)\).
\end{enumerate}
\end{definition}

\begin{proposition}[Core correctness: equivalence and irredundancy]
\label{prop:atom-core-correct}
Under Assumptions~\ref{assump:finite-so} and~\ref{assump:core-extractable}, for every finite knowledge base \(S_O\),
the set \(A:=\Atom(S_O)\) satisfies:
\begin{enumerate}[label=\textup{(\roman*)}]
  \item \emph{Closure equivalence:} \(\Cn(A)=\Cn(S_O)\).
  \item \emph{Irredundancy:} for every \(a\in A\), \(a\notin \Cn(A\setminus\{a\})\).
  \item \emph{Canonicality:} \(\Atom(S_O)\) is uniquely determined by \(S_O\) and the fixed canonical order.
\end{enumerate}
\end{proposition}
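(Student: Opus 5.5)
The plan is to prove the three clauses by induction over the scan of Definition~\ref{def:atom-so}. I only use that \(\Cn\) is a closure operator for the fixed proof system \(\mathsf{PS}\): it is extensive (\(\Gamma\subseteq\Cn(\Gamma)\)), monotone (\(\Gamma\subseteq\Gamma'\Rightarrow\Cn(\Gamma)\subseteq\Cn(\Gamma')\)), and idempotent (\(\Cn(\Cn(\Gamma))=\Cn(\Gamma)\)). These follow from the usual structural properties of derivability (reflexivity, weakening, cut/transitivity) in an effective proof system. First I note that the procedure is well defined and terminates. By Assumption~\ref{assump:finite-so}, \(S_O\) is finite and listed in a fixed canonical order \(s_1,\ldots,s_N\). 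Every set \(A\setminus\{s\}\) tested is a finite subset of \(S_O\) with \(s\in S_O\), so Assumption~\ref{assump:core-extractable} makes each test decidable. Hence the procedure is a terminating deterministic algorithm. Canonicality (iii) is then immediate: the output depends only on \(S_O\), the fixed order, and the decision outcomes, which are fixed by \(S_O\).

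For (i), let \(A_0=S_O\) and let \(A_k\) be the set after processing \(s_k\). I show \(\Cn(A_k)=\Cn(S_O)\) by induction on \(k\). If \(s_k\) is kept, there is nothing to prove. If it is removed, then \(s_k\in\Cn(A_k)\) with \(A_k=A_{k-1}\setminus\{s_k\}\), so \(A_{k-1}\subseteq\Cn(A_k)\). Monotonicity and idempotence then give \(\Cn(A_{k-1})\subseteq\Cn(A_k)\), and the reverse inclusion is monotonicity. Taking \(k=N\) gives \(\Cn(A)=\Cn(S_O)\).

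For (ii), fix \(a\in A\), say \(a=s_k\). It was kept at step \(k\), so \(a\notin\Cn(A_{k-1}\setminus\{a\})\). Since elements are only ever removed, \(A\subseteq A_{k-1}\), hence \(A\setminus\{a\}\subseteq A_{k-1}\setminus\{a\}\). Monotonicity then yields \(\Cn(A\setminus\{a\})\subseteq\Cn(A_{k-1}\setminus\{a\})\), so \(a\notin\Cn(A\setminus\{a\})\).

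The main point to get right is this irredundancy step. A single left-to-right pass suffices only because later deletions shrink the base, and shrinking can never make an earlier-kept element derivable. So irredundancy is preserved through the monotonicity of \(\Cn\) in the correct direction, and no re-scan is needed. I will state the closure-operator properties explicitly as consequences of Assumption~\ref{assump:proof-system}, since the whole argument rests on them.
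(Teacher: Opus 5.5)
Your proposal is correct and follows the paper's proof essentially step for step. Closure equivalence comes by induction over the scan, irredundancy follows from monotonicity of \(\Cn\) because later removals only shrink the conditioning set, and canonicality follows from determinism. Your version is also a bit more careful: it makes explicit the extensivity and idempotence of \(\Cn\), and the decidability of each test via Assumption~\ref{assump:core-extractable}, which the paper's one-line step \(\Cn(A)=\Cn(A\setminus\{s\})\) uses only implicitly.
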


\begin{proof}
Each removal step deletes some \(s\) with \(s\in\Cn(A\setminus\{s\})\), hence \(\Cn(A)=\Cn(A\setminus\{s\})\).
By induction over the finite scan, the final set \(A\) has \(\Cn(A)=\Cn(S_O)\).
Irredundancy holds because any \(a\in A\) was not removable at its scan time, and subsequent removals only shrink
the conditioning set, which cannot create a new derivation of \(a\) from an even smaller set without having existed
already (monotonicity of \(\Cn\)). Canonicality follows from determinism.
\end{proof}

\begin{definition}[Intrinsic vs.\ operational premise bases]
\label{def:intrinsic-operational-bases}
For a knowledge base \(S_O\), define:
\begin{align*}
 & A := \Atom(S_O) \quad\text{(core premises)},\\
 & J := S_O\setminus A \quad\text{(stored shortcuts)}.
\end{align*}
We call \(A\) the \emph{intrinsic} premise base and \(A\cup J=S_O\) the \emph{operational} premise base.
\end{definition}

\paragraph{Predecessor structure for depth.}
We model single-step inferential dependency by a predecessor operator \(P_O\).

\begin{axiom}[Finite and computable predecessors]
\label{ax:finite-predecessors}
For every \(s\in \mathbb S_O\), the set \(P_O(s)\) is finite and effectively computable from \(\enc_O(s)\).
\end{axiom}

\begin{axiom}[Well-foundedness of predecessor unfolding]
\label{ax:wellfounded-predecessor}
For every \(s\in \mathbb S_O\), the backward unfolding of \(s\) along \(P_O\) contains no infinite chain.
\end{axiom}

\begin{proposition}[Finite, acyclic, and computable predecessor unfolding]
\label{prop:no-cycle-dag}
Under Axioms~\ref{ax:finite-predecessors} and~\ref{ax:wellfounded-predecessor},
for any \(s\in \mathbb S_O\), the backward unfolding of \(s\) along \(P_O\) is a finite directed acyclic graph (DAG),
and is effectively traversable from \(\enc_O(s)\).
\end{proposition}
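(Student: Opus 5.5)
The argument is König's lemma applied to the backward unfolding of \(s\). Define the unfolding \(U(s)\) as the directed graph whose vertex set is the smallest set containing \(s\) and closed under \(P_O\) (if \(t\in U(s)\) and \(u\in P_O(t)\) then \(u\in U(s)\)), with an edge \(t\to u\) whenever \(u\in P_O(t)\). To avoid ambiguity about the identity of states, I identify vertices with their encodings \(\enc_O(\cdot)\); equality of encodings is decidable by (R1) of Axiom~\ref{ax:state-representation}, which Assumption~\ref{assump:semantic-universe} extends to \(\mathbb S_O\). I prove three things in turn: acyclicity, finiteness, and effective traversability.

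\emph{Acyclicity.} Suppose \(U(s)\) has a directed cycle \(t_0\to t_1\to\cdots\to t_k=t_0\). Since \(t_0\) is reachable from \(s\), going from \(s\) to \(t_0\) and then around the cycle forever gives an infinite backward chain from \(s\) along \(P_O\). This contradicts Axiom~\ref{ax:wellfounded-predecessor}. I read that axiom as forbidding infinite walks in the unfolding, not just infinite chains of pairwise-distinct states. I will say so explicitly, because under the weaker reading cycles would have to be excluded separately.

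\emph{Finiteness.} This is the main step. Consider the tree of finite backward paths \(s=u_0\to u_1\to\cdots\to u_n\), where each node is extended by its successors in \(P_O(u_n)\). By Axiom~\ref{ax:finite-predecessors} every node has finitely many children, so the tree is finitely branching. By Axiom~\ref{ax:wellfounded-predecessor} it has no infinite branch. König's lemma then says the tree is finite. Every vertex of \(U(s)\) is the endpoint of some path in the tree, so \(U(s)\) is finite. The path tree also has finite height, and that height bounds the longest chain in \(U(s)\); this fact is useful later for defining depth.

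\emph{Effective traversability.} Run a breadth-first search from \(\enc_O(s)\), keeping a visited list of encodings. When a vertex is popped, compute \(P_O(t)\) from \(\enc_O(t)\) using Axiom~\ref{ax:finite-predecessors}. Add the corresponding edges, and enqueue each predecessor whose encoding is not yet in the visited list, using decidable equality for the membership check. Every step is computable. The queue only ever holds vertices of the finite set \(U(s)\), and each vertex is enqueued at most once, so the procedure halts. When it halts it has output exactly the vertex and edge sets of \(U(s)\).

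The delicate point is not a computation but the interface with Axiom~\ref{ax:wellfounded-predecessor}: the argument needs "no infinite chain" to rule out infinite walks, including ones that revisit a vertex. With that reading, König's lemma supplies both the finiteness of \(U(s)\) and the termination of the search.
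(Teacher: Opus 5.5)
Your proof is correct and follows the paper's route: finite branching from Axiom~\ref{ax:finite-predecessors}, K\"onig's lemma together with Axiom~\ref{ax:wellfounded-predecessor} for finiteness, and computability of \(P_O\) for effective traversal. You are more explicit than the paper on two points. First, acyclicity: the paper leaves it implicit, relying on the same reading you state, that ``no infinite chain'' excludes infinite backward walks. Second, the duplicate check via decidable equality in the search.
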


\begin{proof}
Finite branching follows from Axiom~\ref{ax:finite-predecessors}.
If the unfolding were infinite, K\"onig's lemma~\cite{kunen2014set} would yield an infinite backward chain, contradicting
Axiom~\ref{ax:wellfounded-predecessor}. Computable traversal follows from computability of \(P_O(\cdot)\).
\end{proof}

\begin{assumption}[Alignment of dependency and inference (relative to a premise base)]
\label{assump:alignment}
Fix a finite available premise base \(B\subseteq \mathbb S_O\).
For every state/formula \(s\in \mathbb S_O\setminus B\), the predecessor set \(P_O(s)\) coincides with the set of
immediate premises in a fixed normal-form single-step inference concluding \(s\) in the chosen proof system for \(\mathcal L\)
(soundness and completeness at one step). Elements of \(B\) are treated as available premises and need not be inferred.
\end{assumption}

\begin{definition}[Base-relative derivation depth]
\label{def:derivation-depth}
Let \(B\subseteq \mathbb S_O\) be a finite set of \emph{available premises}.
Under Assumption~\ref{assump:alignment}, the \emph{derivation depth} of \(s\in \mathbb S_O\) from \(B\) is
\begin{equation}\label{eq:Dd-inductive}
  \Dd(s \mid B)
  :=
  \begin{cases}
    0, & \text{if } s\in B,\\[0.4ex]
    1+\displaystyle\max_{s'\in P_O(s)}\Dd(s' \mid B), & \text{otherwise},
  \end{cases}
\end{equation}
with the convention that the maximum over an empty set is \(0\).
\end{definition}

\begin{remark}[Nullary inference steps]
\label{rem:nullary}
The convention \(\max \varnothing = 0\) allows \(\Dd(s\mid B)=1\) when \(s\notin B\) and \(P_O(s)=\varnothing\),
which corresponds to admitting nullary (axiom) inference steps in the fixed normal form.
If a given proof system has no such steps, one may equivalently restrict attention to instances where
\(P_O(s)\neq\varnothing\) for all \(s\notin B\).
\end{remark}

\begin{theorem}[Well-definedness and computability of derivation depth]
\label{thm:computability-Dd}
Let \(B\subseteq \mathbb S_O\) be a finite, effectively decidable premise base.
Under Assumptions~\ref{assump:alignment} and Axioms~\ref{ax:finite-predecessors}--\ref{ax:wellfounded-predecessor},
for every \(s\in \mathbb S_O\), \(\Dd(s\mid B)\) is a unique, finite, and computable non-negative integer.
\end{theorem}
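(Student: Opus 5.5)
The plan is to reduce everything to well-founded recursion on the finite DAG supplied by Proposition~\ref{prop:no-cycle-dag}. Fix \(s\in\mathbb S_O\) and let \(G_s\) be the backward unfolding of \(s\) along \(P_O\). This is a finite DAG, and it is effectively traversable from \(\enc_O(s)\). The recursion \eqref{eq:Dd-inductive} evaluated at any node \(u\) of \(G_s\) only refers to values at nodes of \(P_O(u)\subseteq G_s\). I would therefore define the height \(h(u)\) of a node as the length of the longest backward \(P_O\)-path starting at \(u\). Since \(G_s\) is finite and acyclic, \(h\) is a well-defined natural number. All claims would then be proved by strong induction on \(h(u)\), restricted to nodes of \(G_s\).

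\textbf{Existence and finiteness.} For \(u\) with \(u\in B\), set the value to \(0\). Otherwise, every \(u'\in P_O(u)\) has \(h(u')<h(u)\), so by induction \(\Dd(u'\mid B)\) is already a finite non-negative integer. Because \(P_O(u)\) is finite (Axiom~\ref{ax:finite-predecessors}), the maximum is a finite integer. The convention \(\max\varnothing=0\) covers the case of leaves not in \(B\) (Remark~\ref{rem:nullary}). Hence \(1+\max\) is a finite positive integer. Note that the \(u\in B\) case stops the recursion early, but this does not affect the induction.

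\textbf{Uniqueness.} Suppose two functions \(D_1,D_2\) on \(G_s\) both satisfy \eqref{eq:Dd-inductive}. By the same induction on \(h\), they agree on nodes of height \(0\): such a node is either in \(B\), giving value \(0\), or has empty \(P_O\), giving value \(1\). At a node of larger height, the right-hand side of \eqref{eq:Dd-inductive} involves only agreeing values, so \(D_1\) and \(D_2\) agree there too. Uniqueness on each \(G_s\) gives a unique global function, since overlapping unfoldings force consistency.

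\textbf{Computability.} The algorithm takes \(\enc_O(s)\) and, using Axiom~\ref{ax:finite-predecessors} and the decidable membership test for \(B\), performs a memoized depth-first evaluation. At node \(u\), it tests \(u\in B\). If the answer is no, it computes \(P_O(u)\), recursively evaluates each predecessor, and returns \(1+\max\). Two points must be checked. First, equality of nodes is decidable by (R1) of Axiom~\ref{ax:state-representation}, so memoization is sound. Second, termination follows because every recursion branch is a backward \(P_O\)-chain, all such chains are finite by Axiom~\ref{ax:wellfounded-predecessor}, and branching is finite. By K\"onig's lemma, as in Proposition~\ref{prop:no-cycle-dag}, the call tree is finite.

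The main obstacle is subtle rather than technical. Assumption~\ref{assump:alignment} is only invoked to fix which \(P_O\) is meant; the recursion itself never needs derivability or decidability of \(\Cn\). I would state this explicitly so that the undecidability caveat of Remark~\ref{rem:cn-decidability} does not interfere.
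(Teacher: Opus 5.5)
Your proposal is correct and follows essentially the same route as the paper. Both arguments use Proposition~\ref{prop:no-cycle-dag} to obtain a finite, effectively traversable DAG, which makes the recursion \eqref{eq:Dd-inductive} well-founded, and then evaluate it by dynamic programming using computability of \(P_O\) and decidable membership in \(B\); your height induction for existence and uniqueness and your memoized top-down evaluation just spell out what the paper's one-line bottom-up dynamic program leaves implicit.
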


\begin{proof}
By Proposition~\ref{prop:no-cycle-dag}, the predecessor unfolding is a finite computable DAG.
Thus the recursion~\eqref{eq:Dd-inductive} is well-founded. A bottom-up dynamic program computes \(\Dd(s\mid B)\)
using computability of \(P_O(\cdot)\) and decidability of membership in \(B\).
\end{proof}

\begin{definition}[Intrinsic and operational derivation depths]
\label{def:int-op-depth}
Let \(S_O\) be a knowledge base with core \(A:=\Atom(S_O)\).
For any query/formula \(q\) derivable from \(A\) (hence from \(S_O\)), define
\[
  n_{\mathrm{int}}(q) := \Dd(q \mid A),
  \qquad
  n_{\mathrm{op}}(q) := \Dd(q \mid S_O).
\]
\end{definition}

\begin{remark}[Monotonicity and interpretation]
\label{rem:int-op-monotone}
Since \(A\subseteq S_O\), enlarging the premise base cannot increase derivation depth; hence
\[
  n_{\mathrm{op}}(q) \le n_{\mathrm{int}}(q).
\]
The intrinsic depth \(n_{\mathrm{int}}\) captures inferential content relative to an irredundant core,
whereas the operational depth \(n_{\mathrm{op}}\) captures actual online cost under stored shortcuts.
This separation is the basis of the storage--computation optimization in later sections.
\end{remark}

\begin{definition}[Semantic atomicity measure]
\label{def:atomicity-measure}
The \emph{semantic atomicity} of an information instance \(\mathcal I\) with semantic space \(S_O\) is
\[
  \mathsf{A}(\mathcal I)\;:=\;\lvert \Atom(S_O)\rvert.
\]
\end{definition}

\begin{remark}[Connection to logical depth]
\label{rem:bennett-preview}
The depth functional \(\Dd(\cdot)\) is related to Bennett's logical depth \cite{bennett1988logical}.
Later sections connect intrinsic/operational derivation depth to conditional logical depth under explicit simulation assumptions.
\end{remark}

%%%%%%%%%%%%%%%%%%%%%%%%%%%%%%%%%%%%%%%%%%%%%%%%%%%%%%%%%%%%%%%%%%%%%%%%%%%%%%%%%%%%%%%%%%%%%%%%%%%%%%%%%
%%%   III. DERIVATION DEPTH AS AN INFORMATION METRIC
%%%%%%%%%%%%%%%%%%%%%%%%%%%%%%%%%%%%%%%%%%%%%%%%%%%%%%%%%%%%%%%%%%%%%%%%%%%%%%%%%%%%%%%%%%%%%%%%%%%%%%%%%

\section{Derivation Depth as an Information Metric}
\label{sec:derivation-entropy-ideal}

This section develops the ideal (noise-free) theory: we show that, for generic information-rich queries,
base-relative derivation depth provides a quantitative information-theoretic proxy for conditional
algorithmic complexity. We also introduce a BCQ-oriented \emph{compositional} (sum-rule)
depth functional \(\Dd_\Sigma(\cdot\mid B)\) for conjunctive/combined queries, while keeping
\(\Dd(\cdot\mid B)\) as the paper's primary depth notion.

A key point is that all depths are measured \emph{relative to a finite available premise base}
\(B\subseteq \mathbb S_O\): elements of \(B\) are treated as depth-\(0\) premises
(Definition~\ref{def:derivation-depth}).

For readability, throughout this section we write \(\mathcal{P}\) for the fixed effective proof
system \(\mathsf{PS}\) in Assumption~\ref{assump:proof-system}.

\begin{remark}[Logarithm convention]
\label{rem:log-convention}
Throughout Sections~\ref{sec:derivation-entropy-ideal}--\ref{sec:tradeoff-ideal}, \(\log\) denotes \(\log_2\)
(bit-length scale), while \(\ln\) denotes the natural logarithm.
\end{remark}

%%%%%%%%%%%%%%%%%%%%%%%%%%%%%%%%%%%%%%%%%%%%%%%%%%%%%%%%%%%%%%%%%%%%%%%%%%%%%%%%%%%%%%%%%%%%%%%%%%%%%%%%%
%%%   III.A Derivation Entropy and the Information-Rich Regime
%%%%%%%%%%%%%%%%%%%%%%%%%%%%%%%%%%%%%%%%%%%%%%%%%%%%%%%%%%%%%%%%%%%%%%%%%%%%%%%%%%%%%%%%%%%%%%%%%%%%%%%%%

\subsection{Derivation Entropy and the Information-Rich Regime}
\label{subsec:derivation-entropy-def}

\begin{assumption}[Finite premise bases for asymptotic analysis]
\label{assump:finite-premise-base}
In Sections~\ref{sec:derivation-entropy-ideal}--\ref{sec:tradeoff-ideal}, we consider finite available
premise bases \(B\subseteq \mathbb S_O\) that are effectively listable.
We write \(m:=|B|\) and fix a canonical computable encoding \(\langle B\rangle\in\{0,1\}^*\)
(e.g., a length-prefixed list of encodings in a fixed order).
\end{assumption}

Fix a premise base \(B\) satisfying Assumption~\ref{assump:finite-premise-base}, and consider queries
\(q \in \Cn(B)\) derivable in \(\mathcal{P}\) over \(\mathcal{L}\).

\begin{definition}[Derivation entropy]
\label{def:derivation-entropy}
Let \(B\) be a premise base and \(q \in \Cn(B)\) a derivable query.
The \emph{derivation entropy} of \(q\) given \(B\) is
\begin{equation}\label{eq:derive-entropy-def}
  \Hderive(q \mid B) := \Dd(q \mid B)\cdot \ln 2,
\end{equation}
where \(\Dd(q \mid B)\) is the base-relative derivation depth from Definition~\ref{def:derivation-depth}.
\end{definition}

\begin{remark}[Terminology]
\label{rem:derive-entropy-terminology}
The quantity \(\Hderive(q\mid B)\) is an \emph{entropy-like} rescaling of an online derivation-cost functional;
it is not a Shannon entropy unless a query distribution is specified (cf.\ Remark~\ref{rem:shannon-relationship}).
\end{remark}

\begin{definition}[Conditional algorithmic entropy]
\label{def:cond-algo-entropy}
Fix a universal Turing machine \(U\) and the encoding convention \(\langle\cdot\rangle\).
For a finite premise base \(B\) and a query \(q\), define
\[
  H_K(q \mid B) := K\bigl(q \mid \langle B \rangle\bigr)\cdot \ln 2.
\]
\end{definition}

\begin{remark}[Prefix-free complexity convention]
\label{rem:prefix-free-K}
Throughout, \(K(\cdot\mid\cdot)\) denotes \emph{prefix-free} (self-delimiting) conditional Kolmogorov complexity
with respect to the fixed universal machine \(U\) and encoding convention \(\langle\cdot\rangle\).
This convention aligns with the self-delimiting encodings used later for derivation traces.
\end{remark}

\begin{remark}[Relation to Shannon entropy (in expectation)]
\label{rem:shannon-relationship}
Let \(Q\) be a random query drawn from a computable distribution \(P\) over a finite query set.
Standard results (for a prefix-free variant of \(K\)) imply
\[
\mathbb{E}_P\!\left[K\!\left(Q \mid \langle B\rangle\right)\right]
\le H_P\!\left(Q \mid \langle B\rangle\right) + O(1),
\]
where \(H_P(\cdot\mid\cdot)\) is Shannon conditional entropy.
Thus, while our development is individual-sequence (distribution-free), it is compatible with the
Shannon-theoretic viewpoint in expectation under computable query sources.
\end{remark}

\begin{assumption}[Computable query source]
\label{assump:computable-source}
Let \(Q\) be a random query supported on a finite set of encodings.
We assume the conditional distribution \(P_{Q\mid \langle B\rangle}\) is computable uniformly in \(\langle B\rangle\),
i.e., there exists an algorithm that, given \(\langle B\rangle\), outputs the probability mass function of \(Q\)
(or samples from it) to arbitrary precision.
\end{assumption}

\begin{proposition}[Expected Kolmogorov complexity matches Shannon entropy up to \(O(1)\)]
\label{prop:K-vs-Shannon}
Under Assumption~\ref{assump:computable-source}, let \(K(\cdot\mid\cdot)\) be prefix-free conditional Kolmogorov
complexity (Remark~\ref{rem:prefix-free-K}). Then
\[
  \E\!\left[K\!\left(Q\mid \langle B\rangle\right)\right]
  =
  H_{P}\!\left(Q\mid \langle B\rangle\right) + O(1),
\]
where the \(O(1)\) term depends only on the fixed universal machine \(U\), the encoding conventions, and the
fixed algorithm witnessing computability of \(P_{Q\mid \langle B\rangle}\), but is independent of the particular
realization of \(\langle B\rangle\) and of the sample \(Q\).
Standard references include, e.g.,~\cite{li2008introduction}.
\end{proposition}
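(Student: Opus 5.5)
The plan is to prove the two inequalities separately, conditioning throughout on a fixed string \(b=\langle B\rangle\). Write \(p_b(x):=P_{Q\mid\langle B\rangle}(x\mid b)\) for the conditional mass function, supported on a finite set \(\mathcal Q\). All entropies are in bits, matching the \(\log_2\) convention of Remark~\ref{rem:log-convention}. The statement is then the standard fact that, for a computable source, expected prefix complexity equals entropy to within an additive constant, applied uniformly in the side information \(b\). The target is to show, with a constant \(c\) independent of \(b\),
\[
H_P(Q\mid \langle B\rangle=b)\;\le\;\E\bigl[K(Q\mid b)\bigr]\;\le\;H_P(Q\mid \langle B\rangle=b)+c.
\]
If \(\langle B\rangle\) is itself random, averaging over \(b\) then gives the displayed statement.

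\emph{Lower bound.} Prefix-freeness (Remark~\ref{rem:prefix-free-K}) means the shortest programs \(\{x^*_b : x\in\mathcal Q\}\) for the fixed conditional input \(b\) form a prefix-free set. Kraft's inequality therefore gives \(\sum_x 2^{-K(x\mid b)}\le 1\). Shannon's noiseless coding bound, i.e.\ Gibbs' inequality applied to \(p_b\) and the subprobability \(2^{-K(\cdot\mid b)}\), then yields \(\sum_x p_b(x)K(x\mid b)\ge H(p_b)\) with no additive loss.

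\emph{Upper bound.} Here I would use Assumption~\ref{assump:computable-source}: a fixed algorithm \(M\) computes \(p_b\) from \(b\), to arbitrary precision. From \(M\) and \(b\), build a Shannon--Fano--Elias (arithmetic) code for \(p_b\) over \(\mathcal Q\), listed in a canonical order. This code assigns each \(x\) a prefix-free codeword of length \(\lceil\log(1/p_b(x))\rceil+1\). Only finite-precision approximations of \(p_b\) are needed: enumerate rational lower approximations and use the standard trick of coding with respect to \(p_b(x)/2\)-accurate dyadic values, at a cost of \(O(1)\) extra bits. A fixed decoder program \(D\) reads \(b\) from the conditioning tape, runs \(M\), reconstructs the code, and decodes the codeword. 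Since \(D\) does not depend on \(b\), we get \(K(x\mid b)\le \log(1/p_b(x))+2+|D|+O(1)\). Taking expectation under \(p_b\) gives \(\E[K(Q\mid b)]\le H(p_b)+c\), where \(c\) depends only on \(U\), the encoding conventions, and \(M\), exactly as the statement asserts. Alternatively, the whole upper bound follows from the conditional coding theorem \(K(x\mid b)\le -\log p_b(x)+K(M)+O(1)\) in~\cite{li2008introduction}.

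\emph{Main obstacle.} The delicate point is uniformity: the constant must not depend on \(b\). The decoder must therefore not hard-code anything about \(b\), \(\mathcal Q\), or the precision required. I would handle this by assuming the algorithm also outputs (or lets the decoder enumerate from \(b\)) the finite support \(\mathcal Q\), and by choosing the precision adaptively inside \(D\), which keeps \(|D|\) fixed. A second subtlety is that sampling access alone does not obviously suffice. If only a sampler is given, I would first convert it into a lower-semicomputable approximation of \(p_b\) and then invoke the coding theorem for lower-semicomputable semimeasures, which again costs only \(O(1)\).
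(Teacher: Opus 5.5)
Your proposal is correct and follows essentially the same route as the paper: a Shannon--Fano/arithmetic code computed uniformly from \(\langle B\rangle\) gives the upper bound, and Kraft's inequality for the prefix-free conditional complexities gives the lower bound. You go further than the paper's sketch in three places: you keep the constant uniform in \(\langle B\rangle\), you handle finite-precision approximation, and you treat the sampler-only case. Your Gibbs-inequality step also yields the lower bound with no additive constant.
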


\begin{proof}
Since \(P_{Q\mid \langle B\rangle}\) is computable uniformly from \(\langle B\rangle\) and the support is finite,
there exists a prefix code \(C_{\langle B\rangle}\) (e.g., Shannon--Fano or arithmetic coding) computable from
\(\langle B\rangle\) such that the code length \(\ell(q)\) satisfies
\(\ell(q)\le -\log P(q\mid \langle B\rangle)+O(1)\).
Hence \(K(q\mid \langle B\rangle)\le \ell(q)+O(1)\le -\log P(q\mid \langle B\rangle)+O(1)\).
Taking expectation gives
\(\E[K(Q\mid \langle B\rangle)] \le H_P(Q\mid \langle B\rangle)+O(1)\).

For the reverse direction, prefix-free complexities induce (up to an additive constant) valid prefix code lengths,
so Kraft's inequality implies that any such lengths have expected value at least the Shannon entropy, yielding
\(\E[K(Q\mid \langle B\rangle)] \ge H_P(Q\mid \langle B\rangle)-O(1)\).
\end{proof}

\begin{remark}[Compatibility with the individual-sequence viewpoint]
\label{rem:entropy-compat}
Proposition~\ref{prop:K-vs-Shannon} is used only to translate our individual-query statements into
distributional (expected) ones. All main coding theorems below remain distribution-free.
\end{remark}

\begin{definition}[Complexity factor]
\label{def:complexity-factor}
For a premise base \(B\) with \(m=|B|\) and a query \(q\), for any nonnegative integer complexity parameter
\(n\) representing the \emph{online derivation cost} (depth or step complexity) of \(q\) from \(B\), define
\[
  L_B := \log(m+n).
\]
\end{definition}

\begin{definition}[Information-rich regime]
\label{def:info-rich}
Let \(B\) be a premise base with \(m:=|B|\).
We say that a family of instances \((q,B)\) lies in the \emph{information-rich regime} if
\[
  K\bigl(q \mid \langle B \rangle\bigr) = \omega(\log m),
\]
with asymptotics taken along the family with \(m\to\infty\).
\end{definition}

\begin{remark}[Meaning of the condition]
\label{rem:indexing-overhead}
The term \(\log m\) is the indexing overhead: the number of bits required merely to name
one premise among \(m\) candidates. The information-rich regime excludes queries whose conditional
description is essentially a constant number of premise indices (pure lookups), and focuses on queries
whose specification carries substantially more information than naming.
\end{remark}

%%%%%%%%%%%%%%%%%%%%%%%%%%%%%%%%%%%%%%%%%%%%%%%%%%%%%%%%%%%%%%%%%%%%%%%%%%%%%%%%%%%%%%%%%%%%%%%%%%%%%%%%%
%%%   III.B Traces, BCQs, and Encoding Length
%%%%%%%%%%%%%%%%%%%%%%%%%%%%%%%%%%%%%%%%%%%%%%%%%%%%%%%%%%%%%%%%%%%%%%%%%%%%%%%%%%%%%%%%%%%%%%%%%%%%%%%%%

\subsection{Derivation Traces, BCQs, and Encoding Length}
\label{subsec:encoding}

The main bridge to Kolmogorov complexity is constructed at the level of \emph{derivation traces}.
For conjunctive/combined queries (BCQs), we additionally introduce a compositional (sum-rule) depth
functional \(\Dd_\Sigma(\cdot\mid B)\) that directly models aggregation cost.

\subsubsection{Derivation traces and shortest-trace length}
\label{subsubsec:traces}

\begin{definition}[Derivation trace and length (base-parametric)]
\label{def:derivation-trace}
Fix a finite premise base \(B\) with \(m:=|B|\) and a proof system \(\mathcal P\) with a finite rule set~\cite{cook1979relative}.
A (sequential) \emph{derivation trace} of a query \(q\) from \(B\) is a pair
\[
  \pi = \bigl((r_1,\vec p_1),\ldots,(r_\ell,\vec p_\ell);\ o\bigr),
\]
where each \(r_i\) is a rule identifier, each \(\vec p_i\) is a vector of premise pointers (indices),
and \(o\) is an \emph{output pointer} (an index into the final pool), such that the following deterministic
replay procedure outputs \(q\):
\begin{enumerate}[label=\textup{(\roman*)}]
  \item Initialize the pool \(P_0\) as an ordered list consisting of the \(m\) premises \(B\) in the canonical
  order induced by \(\langle B\rangle\).
  \item For \(i=1,\ldots,\ell\), interpret \(\vec p_i\) as indices into the current pool \(P_{i-1}\), apply rule
  \(r_i\) of \(\mathcal P\) to the referenced premises (in the prescribed order), obtain a conclusion formula
  \(f_i\), and append it to obtain \(P_i := P_{i-1}\,\|\,\langle f_i\rangle\).
  \item Output the formula stored at index \(o\) in \(P_\ell\).
\end{enumerate}
The trace is \emph{valid} if each rule application is checkable in \(\mathcal P\) (hence decidable), and it
\emph{derives \(q\)} if the replay output equals \(q\). Its \emph{length} is \(|\pi|:=\ell\).
\end{definition}

\begin{definition}[Base premises referenced by a derivation trace]
\label{def:atoms-of-trace}
Let \(\pi=((r_1,\vec p_1),\ldots,(r_\ell,\vec p_\ell);\ o)\) be a derivation trace of \(q\) from \(B\)
(Definition~\ref{def:derivation-trace}).
Define \(\Atoms(\pi)\subseteq B\) as the set of base premises referenced by \(\pi\), namely those premises
whose indices in the initial pool \(P_0\) are referenced by some premise pointer in \(\pi\), or by the
output pointer when \(\ell=0\).
\end{definition}

\begin{definition}[Minimal derivation-trace length (base-parametric)]
\label{def:min-trace-length}
For a derivable query \(q\in\Cn(B)\), define the minimal derivation-trace length
\begin{align*}
  N(q\mid B)
  :=
  \min\bigl\{&|\pi|:\ \pi \text{ is a valid derivation trace} \\
  & \text{deriving } q \text{ from } B\bigr\}.
\end{align*}
(We allow \(|\pi|=0\), in which case the trace outputs \(q\) by selecting it from the initial pool \(P_0\).)
\end{definition}

\subsubsection{BCQs and compositional (sum-rule) depth}
\label{subsubsec:bcq-additive}

\begin{definition}[Boolean conjunctive query (BCQ)]
\label{def:bcq}
A \emph{Boolean conjunctive query (BCQ)} is a formula of the form
\[
  q \;=\; \bigwedge_{j=1}^{t} a_j,
\]
where each \(a_j\) is an atomic formula of \(\mathcal L\) (or a designated atomic query form) and \(t\ge 1\).
We write \(\mathrm{len}(q):=t\).
\end{definition}

\begin{definition}[Compositional (sum-rule) derivation depth for BCQs]
\label{def:additive-depth}
Fix a premise base \(B\) and the proof system \(\mathcal P\).
For a BCQ \(q\) derivable from \(B\), define its \emph{compositional} (sum-rule) depth by
\[
  \Dd_{\Sigma}(q\mid B) \;:=\; N(q\mid B).
\]
\end{definition}

\begin{remark}[Role of \(\Dd_{\Sigma}\) and why Section IV still uses \(\Dd\)]
\label{rem:Dd-vs-DdSigma}
The base-relative depth \(\Dd(\cdot\mid B)\) from Definition~\ref{def:derivation-depth} is the paper's
primary inferential depth notion (height of a dependency unfolding).
For BCQs, the online computation cost is often governed by \emph{aggregation} across conjuncts; this motivates
the compositional depth \(\Dd_{\Sigma}(\cdot\mid B)\), which is step-accurate by construction.
In Section~\ref{sec:tradeoff-ideal}, we keep \(\Dd\) as the default cost symbol, and provide a BCQ
instantiation by replacing \(\Dd\) with \(\Dd_{\Sigma}\) when the query workload is BCQ-dominated.
\end{remark}

\begin{lemma}[BCQ aggregation overhead is linear in the number of conjuncts]
\label{lem:bcq-aggregation}
Assume \(\mathcal P\) contains a conjunction-introduction rule allowing one to derive
\(\varphi\land \psi\) from \(\varphi\) and \(\psi\) in \(O(1)\) trace steps.
Let \(q=\bigwedge_{j=1}^{t} a_j\) be a BCQ derivable from \(B\). Then
\[
  \Dd_{\Sigma}(q\mid B)
  \;\le\;
  \sum_{j=1}^{t} \Dd_{\Sigma}(a_j\mid B) \;+\; O(t),
\]
where each \(a_j\) is treated as a one-conjunct BCQ.
\end{lemma}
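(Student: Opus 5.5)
The plan is to build an explicit trace for \(q\) by concatenating optimal traces for the conjuncts and then chaining conjunction-introduction steps. Since \(\Dd_\Sigma(q\mid B)=N(q\mid B)\) is a minimum over valid traces (Definition~\ref{def:min-trace-length}), exhibiting one valid trace of the stated length gives the bound.

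\textbf{Step 1 (component traces).} For each \(j\), the conjunct \(a_j\) is derivable from \(B\): this is either assumed, or it follows because conjunction elimination is available. I would state this explicitly as the hypothesis under which the right-hand side is defined, since \(\Dd_\Sigma(a_j\mid B)\) is only defined for derivable \(a_j\). Fix a minimal valid trace \(\pi_j=((r^j_1,\vec p^j_1),\ldots,(r^j_{\ell_j},\vec p^j_{\ell_j});o_j)\) with \(\ell_j=N(a_j\mid B)\).

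\textbf{Step 2 (concatenation with pointer relocation).} Replay \(\pi_1,\pi_2,\ldots,\pi_t\) in sequence on one growing pool. The key observation is that the pool semantics of Definition~\ref{def:derivation-trace} only appends, so earlier indices are never invalidated. Let \(s_j:=\sum_{i<j}\ell_i\). When we replay \(\pi_j\) after the earlier blocks, make two adjustments:
\begin{itemize}
\item pointers into \(P_0\) (indices \(\le m\)) stay unchanged;
\item pointers to intermediate conclusions (index \(m+k\)) are shifted to \(m+s_j+k\).
\end{itemize}
Each relocated step references exactly the same formulas as in the standalone replay, so validity is preserved. This gives, by induction on \(j\), that the pool contains \(a_j\) at a computable index \(o'_j\) (shifted if \(o_j>m\), unchanged if \(o_j\le m\), the latter covering \(\ell_j=0\)).

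\textbf{Step 3 (aggregation).} Append \(t-1\) blocks of conjunction introduction. Each block forms \(\psi_k:=\psi_{k-1}\land a_k\) from the index of \(\psi_{k-1}\) and \(o'_k\), with \(\psi_1=a_1\) and the bracketing matching the fixed parse of \(\bigwedge_{j=1}^t a_j\). By hypothesis each block costs at most \(c=O(1)\) steps. Set the output pointer to the last appended formula, or to \(o'_1\) if \(t=1\). The total length is
\[
\sum_j N(a_j\mid B)+c(t-1)\le\sum_j\Dd_\Sigma(a_j\mid B)+O(t).
\]
Minimality of \(N(q\mid B)\) then yields the claim.

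\textbf{Main obstacle.} I expect the delicate point to be the bookkeeping in Step 2, namely that relocation preserves validity, together with making sure that the syntactic shape of the aggregated conjunction exactly matches \(q\). If \(\bigwedge\) is read with a different association, I would fix left-association as the canonical reading. An alternative is to note that any fixed binary bracketing of \(t\) conjuncts uses exactly \(t-1\) introductions, so only the order of combination changes and the count stays \(t-1\).
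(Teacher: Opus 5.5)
Your proposal is correct and follows the paper's proof: concatenate shortest traces for the conjuncts over the shared base pool, append $t-1$ conjunction-introduction steps under a fixed parenthesization, and invoke minimality of $N(q\mid B)$. Your pointer relocation, the edge cases $t=1$ and $\ell_j=0$, and your remark that each $a_j$ must be derivable for the right-hand side to be defined spell out what the paper's two-line proof leaves implicit.
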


\begin{proof}
Take shortest traces \(\pi_j\) deriving each \(a_j\) from \(B\), concatenate them (reusing the same base pool),
and then conjoin the obtained atoms using \(t-1\) conjunction-introduction steps under a fixed parenthesization.
This yields a valid trace for \(q\) of length \(\sum_j |\pi_j| + O(t)\), hence the bound.
\end{proof}

\subsubsection{Derivation encoding length}
\label{subsubsec:encoding-bound}

\begin{lemma}[Derivation encoding]
\label{lem:encoding}
Let \(\mathcal{P}\) be a proof system with finite rule set \(R\) and maximum rule arity \(k\)
(i.e., each rule has at most \(k\) premises), where \(|R|\) and \(k\) are constants independent of the
premise base size. For any derivation trace \(\pi\) of length \(n:=|\pi|\) from a premise base \(B\) with
\(m=|B|\), there exists an encoding satisfying
\begin{align}\label{eq:encoding-bound}
  &|\enc(\pi)| \nonumber\\
  \le &k \cdot n \cdot \log(m+n+1) \!+\! O(n) \!+\! O(\log(m+n+1)),
\end{align}
where the \(O(n)\) term is linear in \(n\) with a constant depending only on \(|R|\) and \(k\).
\end{lemma}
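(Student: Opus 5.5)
The plan is to give an explicit self-delimiting encoding of \(\pi=((r_1,\vec p_1),\ldots,(r_\ell,\vec p_\ell);\,o)\) in three parts. The first is a header fixing \(n\) and \(m\). The second is a fixed-width block for each step. The third is the output pointer. Decoding happens relative to \(\langle B\rangle\), which is available as the conditioning string, so \(m\) could be read off for free. To stay robust, though, I will encode both \(n\) and \(m\) with a standard prefix-free integer code such as Elias-\(\delta\). That costs \(O(\log(n+1))+O(\log(m+1))=O(\log(m+n+1))\) bits and accounts for the last additive term in \eqref{eq:encoding-bound}.

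Once \(n\) and \(m\) are known, the decoder sets a uniform pointer width \(w:=\lceil\log(m+n+1)\rceil\). This width is enough because during replay (Definition~\ref{def:derivation-trace}) the pool \(P_{i-1}\) has size \(m+i-1\le m+n\), so every premise pointer and the output pointer \(o\) (which indexes \(P_n\) of size \(m+n\)) fits in \(w\) bits. Each step \((r_i,\vec p_i)\) is then written in three fields:
\begin{itemize}
\item a rule identifier of \(\lceil\log|R|\rceil\) bits;
\item the number of premises actually used, at most \(k\), in \(\lceil\log(k+1)\rceil\) bits (this can be dropped if each rule has fixed arity, since it is then determined by \(r_i\));
\item at most \(k\) pointers of \(w\) bits each.
\end{itemize}
A single step therefore costs at most \(k\,w+\lceil\log|R|\rceil+\lceil\log(k+1)\rceil\le k\log(m+n+1)+c_{R,k}\) bits, where \(c_{R,k}=k+O(\log|R|+\log k)\) absorbs the ceiling losses. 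Summing over \(n\) steps gives \(k\,n\log(m+n+1)+O(n)\), with the constant depending only on \(|R|\) and \(k\) as the statement requires. Appending \(o\) costs \(w=O(\log(m+n+1))\) more bits.

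I also need to check that the code is uniquely decodable and in fact prefix-free. The argument is sequential parsing. The header is self-delimiting, and after it every field has a width determined by data already decoded (\(n\), \(m\), and \(r_i\) or the arity field), so the decoder always knows where each field ends. This gives an injective, prefix-free map \(\pi\mapsto\enc(\pi)\), consistent with the prefix-free convention of Remark~\ref{rem:prefix-free-K}. It also matters for the later Kolmogorov bounds that decoding together with replay is computable given \(\langle B\rangle\); that follows from decidable proof checking (Assumption~\ref{assump:proof-system}).

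The step most likely to cause trouble is the bookkeeping behind the word ``constant''. I must make sure that the \(n=0\) case, where only the header and \(o\) are present, and the ceiling slack in \(w\) are both absorbed into \(O(\log(m+n+1))\) and \(O(n)\), and that no hidden dependence on \(m\) enters the per-step constant. Using a uniform width \(w\) instead of the tighter \(\lceil\log(m+i-1)\rceil\) keeps this clean. The ceiling slack contributes at most \(k\) bits per step, which goes into the \(O(n)\) term.
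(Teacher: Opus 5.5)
Your proposal is correct and follows essentially the same route as the paper's proof. Both use a self-delimiting header for \(m\) and \(n\), a per-step rule identifier plus at most \(k\) pointers of about \(\log(m+n)\) bits each, and a final output pointer; your uniform width \(w=\lceil\log(m+n+1)\rceil\), optional arity field, and explicit sequential-parsing argument for prefix-freeness are just more careful bookkeeping than the paper spells out.
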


\begin{proof}
By Assumption~\ref{assump:proof-system}, proof checking in \(\mathcal P\) is decidable.
Moreover, by Axiom~\ref{ax:state-representation}, semantic formulas/states admit finite encodings and all
predicates needed to check rule applicability are effectively evaluable.
Hence each derivation step admits a finite verifiable description.

Write \(\pi = ((r_1,\vec p_1),\ldots,(r_n,\vec p_n))\). We encode \(\pi\) by concatenating a self-delimiting header
together with per-step information.

\emph{Header.} Encode \(m\) and \(n\) using a self-delimiting integer code, requiring \(O(\log m + \log n)\) bits.

\emph{Step \(i\).} At step \(i\), premises are selected from the pool consisting of the \(m\) initial premises
together with the \(i-1\) previously derived formulas, hence pool size \(< m+n\).
Encode:
(i) the rule identifier \(r_i\in R\), costing \(\lceil\log|R|\rceil = O(1)\) bits; and
(ii) the premise pointers \(\vec p_i\), at most \(k\) indices into a set of size \(<m+n\), costing at most
\(k(\log(m+n)+1)\) bits.

\emph{Output pointer.} Encode the final output index \(o\) as an integer in a range of size \(<m+n+1\),
costing \(O(\log(m+n+1))\) bits.

Summing over \(n\) steps yields \eqref{eq:encoding-bound}, see e.g. Li–Vitányi~\cite{li2008introduction} for self-delimiting encodings.
\end{proof}

%%%%%%%%%%%%%%%%%%%%%%%%%%%%%%%%%%%%%%%%%%%%%%%%%%%%%%%%%%%%%%%%%%%%%%%%%%%%%%%%%%%%%%%%%%%%%%%%%%%%%%%%%
%%%   III.C Richness and Incompressibility (organized)
%%%%%%%%%%%%%%%%%%%%%%%%%%%%%%%%%%%%%%%%%%%%%%%%%%%%%%%%%%%%%%%%%%%%%%%%%%%%%%%%%%%%%%%%%%%%%%%%%%%%%%%%%

\subsection{Richness and Incompressibility}
\label{subsec:richness-incomp}

Lower bounds rely on a combinatorial condition ensuring that there are many distinct queries at a given
online cost scale. For our coding arguments, the natural cost scale is shortest-trace length (steps);
BCQs are covered by \(\Dd_\Sigma=N\).

\begin{definition}[Richness condition (trace-length form)]
\label{def:richness}
A proof system \(\mathcal{P}\) satisfies the \emph{richness condition} with parameter \(\delta_0>0\) if there exists
\(m_0>0\) such that for all \(m\ge m_0\), all integers \(n\) with \(\log m \le n \le \poly(m)\), and all premise bases
\(B\) with \(|B| = m\),
\begin{equation}\label{eq:richness-cond}
  \bigl|\{\,\langle q\rangle : N(q \mid B) = n\,\}\bigr|
  \ge
  2^{(1-\delta_0)\, n \log(m+n)}.
\end{equation}
The cardinality is taken over distinct encodings \(\langle q\rangle\) under the fixed encoding conventions.
\end{definition}

\begin{remark}[Scope of richness]
\label{rem:richness-scope}
Definition~\ref{def:richness} is stated uniformly over all premise bases \(B\) of size \(m\) for clarity.
For the main incompressibility arguments, it suffices that the lower bound \eqref{eq:richness-cond} holds
on the \emph{class of bases} under consideration (e.g., the atomic-fact bases in
Proposition~\ref{prop:bcq-richness-horn}), rather than for every possible \(B\).
\end{remark}

\begin{remark}[Compatibility with \(\Dd\) and with BCQs]
\label{rem:richness-compat}
Definition~\ref{def:richness} is stated in terms of step complexity \(N(\cdot\mid B)\), which is the quantity
directly controlled by Lemma~\ref{lem:encoding}.
For general queries, Assumption~\ref{assump:serializable} (introduced below) allows transferring results to
\(\Dd(\cdot\mid B)\) up to constant factors.
For BCQs, \(\Dd_\Sigma(\cdot\mid B)=N(\cdot\mid B)\) by definition, so richness immediately supports the BCQ
(sum-rule) instantiation.
\end{remark}

\subsubsection{A BCQ richness witness in a Horn/Datalog fragment}
\label{subsubsec:richness-bcq}

\begin{proposition}[Instance-wise richness for BCQs under compositional depth]
\label{prop:bcq-richness-horn}
Assume \(\mathcal P\) contains binary conjunction introduction (as in Lemma~\ref{lem:bcq-aggregation}) and,
moreover, that in the chosen trace normal form, each single trace step can introduce \emph{at most one}
new occurrence of the connective \(\land\) in the derived formula.
Let the premise base \(B\) consist of \(m\) pairwise distinct atomic formulas \(\{a_1,\ldots,a_m\}\) in \(\mathcal L\).

Fix any integer \(n\ge 1\) and define the BCQ family
\[
  \mathcal Q_n^{\mathrm{BCQ}}
  :=
  \left\{
   \begin{aligned}
    q_{\mathbf{i}}
    =&
    a_{i_0}\land a_{i_1}\land \cdots \land a_{i_n}
    \;:\; \\
    &\mathbf{i}=(i_0,\ldots,i_n)\in [m]^{n+1} \\
    & \text{with all } i_j \text{ distinct}
    \end{aligned}
  \right\},
\]
with a fixed left-associative parenthesization (so distinct index sequences yield distinct encodings).
Then:
\begin{enumerate}[label=\textup{(\roman*)}]
\item Every \(q\in \mathcal Q_n^{\mathrm{BCQ}}\) satisfies
\[
  \Dd_{\Sigma}(q\mid B)=N(q\mid B)=n.
\]
\item \(|\mathcal Q_n^{\mathrm{BCQ}}| = m\cdot (m-1)\cdots (m-n)\), hence
\(\log |\mathcal Q_n^{\mathrm{BCQ}}| = (n+1)\log m - O(n)\).
\end{enumerate}
Consequently, if there exists a fixed constant \(C\ge 1\) such that \(n\le m^C\) in the asymptotic regime,
then the richness inequality \eqref{eq:richness-cond} holds (for these bases) with any choice of
\[
  \delta_0 \;\ge\; 1-\frac{1}{C+1},
\]
for all sufficiently large \(m\).
\end{proposition}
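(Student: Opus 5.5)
Part (i) is a matching upper and lower bound on \(N(q\mid B)\).

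For the upper bound, fix \(q_{\mathbf i}=(\cdots((a_{i_0}\land a_{i_1})\land a_{i_2})\cdots)\land a_{i_n}\). Build the trace \(\pi\) whose first step applies conjunction introduction to the pool indices of \(a_{i_0},a_{i_1}\) in \(P_0\). Each later step \(j\ge 2\) applies conjunction introduction to the previous output (index \(m+j-1\)) and the index of \(a_{i_j}\). The output pointer is \(o=m+n\). Replay (Definition~\ref{def:derivation-trace}) produces exactly the left-associated conjunction, so \(N(q\mid B)\le n\).

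For the lower bound, I would use a counting invariant. The premises in \(B\) are atomic, so they contain zero occurrences of \(\land\), while \(q_{\mathbf i}\) contains exactly \(n\). Let \(c(P_i)\) be the maximum number of \(\land\)-occurrences in any pool formula. By the normal-form hypothesis, a step concluding \(f_i\) adds at most one new occurrence. The step I expect to be delicate is showing that this gives \(c(f_i)\le c(P_{i-1})+1\). The hypothesis must be read as ``the conclusion has at most one more \(\land\) than its most \(\land\)-rich premise''. Conjunction introduction applied to two conjunctions could otherwise add their counts (e.g. \((a\land b)\land(c\land d)\) from two depth-1 formulas). So I would state that reading explicitly as the meaning of ``introduce at most one new occurrence''. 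If that reading is unavailable, I would fall back on tracking the length of the longest left spine, which each step extends by at most one. Under either invariant, induction gives \(c(P_\ell)\le \ell\), and outputting \(q\) forces \(\ell\ge n\). Combining the bounds gives \(N(q\mid B)=n=\Dd_\Sigma(q\mid B)\) by Definition~\ref{def:additive-depth}.

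Part (ii) is elementary counting. The map \(\mathbf i\mapsto q_{\mathbf i}\) is injective because the atoms \(a_j\) are pairwise distinct and the parenthesization is fixed. The number of injective sequences is \(m(m-1)\cdots(m-n)\), assuming \(n<m\); otherwise the family is empty and richness is claimed only where \(n+1\le m\). Taking logs, \(\sum_{j=0}^n\log(m-j)\ge (n+1)\log(m-n)\). For \(n\le m/2\) this is \((n+1)\log m-O(n)\). For larger \(n\), I would instead use Stirling's formula on \(m!/(m-n-1)!\), again giving \((n+1)\log m-O(n)\).

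For the richness consequence, combine (i) and (ii): the number of distinct encodings with \(N(q\mid B)=n\) is at least \(2^{(n+1)\log m-O(n)}\). When \(n\le m^C\) we have \(m+n\le 2m^C\), so \(\log(m+n)\le C\log m+1\). With \(\delta_0\ge 1-1/(C+1)\), the target exponent is \((1-\delta_0)n\log(m+n)\le \frac{n(C\log m+1)}{C+1}\). It therefore suffices that
\[
(n+1)\log m-O(n)\ge \tfrac{C}{C+1}n\log m+\tfrac{n}{C+1}.
\]
The gap on the left is \(\tfrac{n}{C+1}\log m\), which exceeds the \(O(n)\) error once \(m\) is large. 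This holds for \(\log m\le n\) in the admissible range with \(n<m\), giving \eqref{eq:richness-cond} for all sufficiently large \(m\).
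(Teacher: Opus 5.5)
Your proposal follows the paper's proof. Both use the explicit \(n\)-step left-associated trace for the upper bound, the \(\land\)-count against the one-new-\(\land\)-per-step hypothesis for the lower bound, the falling-factorial count, and absorption of the \(O(n)\) loss into the richness exponent. Where you deviate, you are more careful than the paper, in two useful ways.

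First, the paper's one-line lower bound (``\(q\) has \(n\) occurrences, each step introduces at most one'') tacitly needs a per-formula invariant, precisely because conjunction introduction adds the counts of its premises. Of your two options, keep the left-spine one. Your first reading is stronger than the literal hypothesis, since it forbids \((a\land b),(c\land d)\mapsto(a\land b)\land(c\land d)\). The left-spine invariant holds for unrestricted conjunction introduction (and elimination), which is the situation the paper tacitly assumes.

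Second, your estimate \(\log(m+n)\le C\log m+1\) is what actually closes the richness step. The paper's cruder \(\log(m+n)\le (C+1)\log m\) only bounds the target by \(n\log m\). That leaves slack of a single \(\log m\), which cannot absorb the \(O(n)\) term; at \(n=\Theta(m)\) the loss really is \(\Theta(n)\). Your version leaves slack \(\tfrac{n}{C+1}\log m\).

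The one loose end, shared with the paper, is the range \(m\le n\le m^C\). This lies inside the claimed range because \(C\ge 1\), and there \(\mathcal Q_n^{\mathrm{BCQ}}\) is empty. So restricting to \(n<m\) proves less than the statement asserts. Your own tools fix this, since neither of your invariants uses distinctness of indices. Take all left-associated conjunctions \(a_{i_0}\land\cdots\land a_{i_n}\) with \(\mathbf i\in[m]^{n+1}\) unrestricted. These are \(m^{n+1}\) distinct encodings, and each has \(N(q\mid B)=n\) by the same trace and the same left-spine bound. Their logarithmic count is exactly \((n+1)\log m\), with no \(O(n)\) loss, so your computation yields \eqref{eq:richness-cond} for every \(n\le m^C\).
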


\begin{proof}
(i) For any \(q_{\mathbf{i}}\in\mathcal Q_n^{\mathrm{BCQ}}\), each \(a_{i_j}\in B\) is available at depth \(0\).
A left-associative construction uses exactly \(n\) binary conjunction-introduction steps, hence
\(N(q_{\mathbf{i}}\mid B)\le n\).
On the other hand, \(q_{\mathbf{i}}\) contains exactly \(n\) occurrences of \(\land\) under the fixed
parenthesization. By the normal-form assumption that each trace step can introduce at most one new \(\land\),
any derivation trace producing \(q_{\mathbf{i}}\) must have length at least \(n\).
Therefore \(N(q_{\mathbf{i}}\mid B)=n\), and since \(\Dd_\Sigma=N\) for BCQs, \(\Dd_\Sigma(q_{\mathbf{i}}\mid B)=n\).

(ii) The counting formula for \(|\mathcal Q_n^{\mathrm{BCQ}}|\) is immediate.

For richness, when \(n\le m^C\), we have \(\log(m+n)\le (C+1)\log m\) for all sufficiently large \(m\), so
\begin{align*}
&\log |\{q:N(q\mid B)=n\}| \\
\;\ge\; &\log |\mathcal Q_n^{\mathrm{BCQ}}| \\
= &(n+1)\log m - O(n) \\
\;\ge\; &(1-\delta_0)n\log(m+n) \\
\end{align*}
for any \(\delta_0\ge 1-\frac{1}{C+1}\) and all sufficiently large \(m\), after absorbing the lower-order \(O(n)\) term.
\end{proof}

\subsubsection{Incompressibility at the \(n\log(m+n)\) scale}
\label{subsubsec:incompressibility}

\begin{lemma}[Derivation incompressibility (step-length form)]
\label{lem:incompressibility}
Let \(B\) be a premise base with \(m = |B|\), let \(q\in\Cn(B)\) be derivable, and set
\[
  n := N(q\mid B)\ge 1,
  \;
  L_B := \log(m+n),
  \;
  L := n\cdot L_B.
\]
Assume \(\log m \le n \le \poly(m)\).

\begin{enumerate}[label=\textup{(\roman*)}]
  \item \emph{Upper bound (all queries):} There exists a constant \(c_{\mathrm{enc}}>0\) (depending only on \(\mathcal{P}\), \(U\),
    and the fixed encoding conventions) such that
    \begin{equation}\label{eq:K-upper}
       K\bigl(q \mid \langle B \rangle\bigr) \le c_{\mathrm{enc}} \cdot L.
    \end{equation}

  \item \emph{Counting bound:} For any \(\delta>0\), the set of \(\delta\)-compressible length-\(n\) queries
  \begin{align}\label{eq:compressible-def}
    \mathcal{C}_\delta := &\bigl\{q : N(q \mid B) = n,\nonumber\\
    & K(q \mid \langle B \rangle) < (1-\delta) L\bigr\},
  \end{align}
  satisfies
  \begin{equation}\label{eq:compressible-count}
    |\mathcal{C}_\delta| < 2^{(1-\delta)L}.
  \end{equation}

  \item \emph{Generic lower bound (fraction form, under richness):}
  Suppose \(\mathcal{P}\) satisfies the richness condition (Definition~\ref{def:richness}) with parameter \(\delta_0>0\),
  and let \(\delta>\delta_0\). Let \(\mathcal{Q}_n := \{q : N(q \mid B) = n\}\). Then the fraction of
  \(\delta\)-compressible queries at length \(n\) satisfies
  \begin{equation}\label{eq:fraction-compressible}
    \frac{|\mathcal{C}_\delta|}{|\mathcal{Q}_n|}
    \le 2^{-(\delta-\delta_0)L}.
  \end{equation}
  Equivalently, at least a \(1-2^{-(\delta-\delta_0)L}\) fraction of queries in \(\mathcal{Q}_n\) satisfy
  \begin{equation}\label{eq:K-lower}
    K(q \mid \langle B \rangle) \ge (1-\delta) L.
  \end{equation}
\end{enumerate}
\end{lemma}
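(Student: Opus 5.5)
For part~(i), I will turn the encoding of Lemma~\ref{lem:encoding} into a conditional description. I fix a shortest valid trace \(\pi\) with \(|\pi|=n=N(q\mid B)\). A fixed program, given \(\langle B\rangle\) as conditional input and the self-delimiting string \(\enc(\pi)\) as its program suffix, does the following: it parses the header, rebuilds the pool \(P_0\) from \(\langle B\rangle\) in canonical order, and replays the steps via Definition~\ref{def:derivation-trace}. Replay is effective because rule application is effective (Assumption~\ref{assump:proof-system}, Axiom~\ref{ax:state-representation}). The program then outputs \(\langle q\rangle\). This gives
\[
K(q\mid\langle B\rangle)\le |\enc(\pi)|+O(1)\le k\,n\log(m+n+1)+O(n)+O(\log(m+n+1)).
\]
To absorb everything into \(c_{\mathrm{enc}}\,n\log(m+n)\), I use the following three facts:
\begin{itemize}
\item \(\log(m+n+1)\le \log(m+n)+1\);
\item \(n\ge 1\) and \(\log(m+n)\ge 1\), so \(O(n)\le O(nL_B)\);
\item the additive \(O(\log(m+n+1))+O(1)\) is likewise \(O(L)\).
\end{itemize}
The constants depend only on \(|R|\), \(k\), \(U\) and the conventions. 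I expect this step to be routine. The only care needed is the self-delimiting header, so that prefix-free \(K\) applies.

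For part~(ii), I will use the standard counting argument. Each \(q\in\mathcal C_\delta\) has a conditional program \(p_q\) for \(U\) with input \(\langle B\rangle\) and \(|p_q|<(1-\delta)L\). Distinct \(q\) need distinct programs, since the output is determined by \(p_q\) and \(\langle B\rangle\). So \(|\mathcal C_\delta|\) is at most the number of binary strings of length less than \((1-\delta)L\), which is at most \(2^{\lceil(1-\delta)L\rceil}-1\). If \((1-\delta)L\) is an integer this is \(<2^{(1-\delta)L}\). Otherwise I will invoke prefix-freeness and Kraft's inequality: \(\sum_q 2^{-|p_q|}\le 1\) and each term exceeds \(2^{-(1-\delta)L}\), which gives the strict bound directly. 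I will use the Kraft route to avoid integrality issues.

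For part~(iii), I divide the bounds. Richness (Definition~\ref{def:richness}) applies because \(\log m\le n\le\poly(m)\) and \(m\) is large, and it gives \(|\mathcal Q_n|\ge 2^{(1-\delta_0)L}\). Combined with (ii), this yields \(|\mathcal C_\delta|/|\mathcal Q_n|< 2^{(1-\delta)L-(1-\delta_0)L}=2^{-(\delta-\delta_0)L}\). The complement statement follows because \(\mathcal C_\delta\subseteq\mathcal Q_n\), and every \(q\in\mathcal Q_n\setminus\mathcal C_\delta\) satisfies \(K(q\mid\langle B\rangle)\ge(1-\delta)L\).

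The main subtlety I anticipate is bookkeeping over encodings. Richness counts distinct encodings \(\langle q\rangle\), while \(\mathcal C_\delta\) and \(\mathcal Q_n\) are sets of queries. I will identify queries with their encodings throughout, which Axiom~\ref{ax:state-representation}(R1) licenses via decidable identity, so that the two counts refer to the same objects. I will also make sure the \(m\ge m_0\) threshold from richness is stated as a hypothesis of (iii).
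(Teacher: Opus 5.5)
Your proposal is correct and follows the paper's proof essentially step for step. For (i) you use a fixed replay program fed the self-delimiting trace encoding of Lemma~\ref{lem:encoding}; for (ii) the standard counting bound on short conditional programs; and for (iii) you divide that bound by the richness lower bound $|\mathcal Q_n|\ge 2^{(1-\delta_0)L}$. Two points are small rigor refinements of steps the paper only cites or leaves implicit: your Kraft-inequality handling of a possibly non-integer threshold $(1-\delta)L$, and your explicit $m\ge m_0$ hypothesis when invoking richness.
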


\begin{proof}
\textbf{Part (i).}
Let \(\pi\) be a shortest derivation trace of \(q\) from \(B\), so \(|\pi|=n\).
Consider a program that contains \(\enc(\pi)\), reads \(\langle B\rangle\) as conditional input, replays \(\pi\) in
\(\mathcal P\), and outputs \(\langle q\rangle\). The simulation overhead is constant.
By Lemma~\ref{lem:encoding}, \(|\enc(\pi)| \le k n \log(m+n) + O(n)=O(L)\), yielding \eqref{eq:K-upper}.

\textbf{Part (ii).}
For any fixed conditional input \(y\), the number of strings \(x\) with \(K(x\mid y) < \ell\) is \(<2^\ell\).
Apply this with \(y=\langle B\rangle\) and \(\ell=(1-\delta)L\) to obtain \eqref{eq:compressible-count}. 
For any fixed conditional input \(y\), the number of strings \(x\) with \(K(x\mid y) < \ell\) is \(<2^\ell\);
see, e.g.,~\cite{li2008introduction}.

\textbf{Part (iii).}
By richness, \(|\mathcal{Q}_n| \ge 2^{(1-\delta_0)L}\).
By part (ii), \(|\mathcal{C}_\delta| < 2^{(1-\delta)L}\).
Thus \(|\mathcal{C}_\delta|/|\mathcal{Q}_n|\le 2^{-(\delta-\delta_0)L}\), proving \eqref{eq:fraction-compressible},
and the complement yields \eqref{eq:K-lower}.
\end{proof}

%%%%%%%%%%%%%%%%%%%%%%%%%%%%%%%%%%%%%%%%%%%%%%%%%%%%%%%%%%%%%%%%%%%%%%%%%%%%%%%%%%%%%%%%%%%%%%%%%%%%%%%%%
%%%   III.D Main Theorem (organized with \Dd kept primary)
%%%%%%%%%%%%%%%%%%%%%%%%%%%%%%%%%%%%%%%%%%%%%%%%%%%%%%%%%%%%%%%%%%%%%%%%%%%%%%%%%%%%%%%%%%%%%%%%%%%%%%%%%

\subsection{Main Theorem: Depth as an Information Metric}
\label{subsec:main-theorem}

The coding argument is naturally stated for step complexity \(N(q\mid B)\) (shortest trace length).
To keep \(\Dd(\cdot\mid B)\) as the paper's primary notion, we relate \(\Dd\) and \(N\) via a
serializability assumption.

\begin{assumption}[Serializable depth witnesses]
\label{assump:serializable}
For the class of premise bases \(B\) and queries \(q\) considered, whenever \(q\in\Cn(B)\) and
\(\Dd(q\mid B) < \infty\), there exists a derivation trace \(\pi\) of \(q\) from \(B\) such that
\[
  |\pi| = \Theta\!\bigl(\Dd(q\mid B)\bigr).
\]
Equivalently, there exist constants \(c_{\min},c_{\max}>0\) such that for all such \((q,B)\),
\[
  c_{\min}\,\Dd(q\mid B)\le N(q\mid B)\le c_{\max}\,\Dd(q\mid B).
\]
\end{assumption}

\begin{remark}[On the strength of Assumption~\ref{assump:serializable}]
\label{rem:serializable-strength}
Assumption~\ref{assump:serializable} is an \emph{instance-class} hypothesis: in general, the unfolding height
\(\Dd(q\mid B)\) and the shortest sequential trace length \(N(q\mid B)\) need not be within constant factors.
When the assumption fails, all coding and incompressibility statements remain valid at the step scale \(N(q\mid B)\),
and the BCQ instantiation via \(\Dd_\Sigma=N\) (Corollary~\ref{cor:bcq-instantiation}) provides a step-accurate alternative.
\end{remark}

\begin{theorem}[Derivation depth as an information metric (base-parametric)]
\label{thm:derivation-depth-info-metric}
Let \(B\) be a finite premise base with \(m=|B|\), and let \(q\in\Cn(B)\) be derivable with
\[
  d := \Dd(q\mid B)\ge 1.
\]
Define
\[
  L_B := \log(m+d),
  \qquad
  L := d\cdot L_B.
\]
Assume \(\log m \le d \le \poly(m)\), Assumption~\ref{assump:finite-premise-base}, and
Assumption~\ref{assump:serializable}.

\begin{enumerate}[label=\textup{(\roman*)}]
  \item \emph{Upper bound (all queries).}
  There exists a constant \(c_1>0\) (depending only on \(\mathcal P\), \(U\), the encoding conventions, and
  the serializability constants) such that
  \begin{equation}\label{eq:K-upper-depth}
    K\bigl(q\mid \langle B\rangle\bigr)\le c_1\cdot d\,\log(m+d)
    \;=\; c_1\cdot L.
  \end{equation}
  Equivalently,
  \[
    \Dd(q\mid B)\ \ge\ \frac{K(q\mid\langle B\rangle)}{c_1\,\log(m+\Dd(q\mid B))}.
  \]

  \item \emph{Generic lower bound (under richness).}
  Suppose \(\mathcal P\) satisfies the richness condition (Definition~\ref{def:richness}) with parameter \(\delta_0>0\).
  Let \(n:=N(q\mid B)\) and assume \(q\) is generic/incompressible at its step-length scale so that
  \(K(q\mid\langle B\rangle)\ge (1-\delta)\, n\log(m+n)\) for some \(\delta>\delta_0\)
  (cf.\ Lemma~\ref{lem:incompressibility}(iii)).
  Then there exists a constant \(c_2>0\) such that
  \[
    \Dd(q\mid B)\ \le\ \frac{K(q\mid\langle B\rangle)}{c_2\,(1-\delta)\,\log(m+\Dd(q\mid B))}.
  \]
\end{enumerate}

Consequently, for generic/incompressible queries (in the above sense),
\[
  \Dd(q\mid B)
  =
  \Theta\!\left(
    \frac{K(q\mid\langle B\rangle)}{\log(m+\Dd(q\mid B))}
  \right),
\]
and equivalently
\[
  \Hderive(q\mid B)
  =
  \Theta\!\left(
    \frac{H_K(q\mid B)}{\log(m+\Dd(q\mid B))}
  \right).
\]
Equivalently, for generic/incompressible queries,
\[
  K\bigl(q\mid\langle B\rangle\bigr)
  =
  \Theta \bigl(\Dd(q\mid B)\,\log(m+\Dd(q\mid B))\bigr).
\]
\end{theorem}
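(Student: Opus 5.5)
The plan is to reduce everything to the step-length scale \(n:=N(q\mid B)\), where Lemma~\ref{lem:encoding} and Lemma~\ref{lem:incompressibility} already give the coding bounds, and then transfer to \(d=\Dd(q\mid B)\) through Assumption~\ref{assump:serializable}. That assumption gives \(c_{\min} d\le n\le c_{\max} d\). The only nontrivial bookkeeping is comparing the logarithmic factors \(\log(m+n)\) and \(\log(m+d)\). Since \(n=\Theta(d)\) and \(d\ge 1\), we have \(m+n\le \max(1,c_{\max})(m+d)\), hence \(\log(m+n)\le \log(m+d)+O(1)\le c'\log(m+d)\) once \(m\ge 2\). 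Symmetrically, \(\log(m+d)\le c''\log(m+n)\). So \(n\log(m+n)=\Theta(d\log(m+d))\) with constants depending only on \(c_{\min},c_{\max}\).

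\textbf{Part (i).} I would not invoke Lemma~\ref{lem:incompressibility}(i) directly, because its hypothesis \(\log m\le n\) may fail when \(c_{\min}<1\). Instead I would argue from the encoding itself. Take a shortest trace \(\pi\) with \(|\pi|=n\). Encode it by Lemma~\ref{lem:encoding} with length at most \(kn\log(m+n+1)+O(n)+O(\log(m+n+1))\). Prepend the constant-size replay program, which reads \(\langle B\rangle\), checks and replays \(\pi\) in \(\mathcal P\), and prints \(\langle q\rangle\). This gives \(K(q\mid\langle B\rangle)\le C\, n\log(m+n)\) for all \(n\ge1\) and large \(m\); the \(O(n)\) and additive terms are absorbed because \(\log(m+n)\ge1\). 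Substituting \(n\le c_{\max}d\) and the log comparison above yields \(K\le c_1 d\log(m+d)\). The inverted form is just division.

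\textbf{Part (ii).} From the genericity hypothesis, \(K\ge(1-\delta)n\log(m+n)\). Use \(n\ge c_{\min}d\), and use \(\log(m+n)\ge \log(m+d)-O(1)\ge \tfrac12\log(m+d)\) for large \(m\) (when \(c_{\min}<1\), \(m+n\ge c_{\min}(m+d)\)). This gives \(K\ge(1-\delta)\,c_2\, d\log(m+d)\) with \(c_2=c_{\min}/2\), which rearranges to the displayed inequality. Richness is not logically needed for this implication; its role, via Lemma~\ref{lem:incompressibility}(iii), is to guarantee that the genericity hypothesis holds for all but a \(2^{-(\delta-\delta_0)L}\) fraction of \(\mathcal Q_n\). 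I will remark on this, noting that applying Lemma~\ref{lem:incompressibility}(iii) needs \(\log m\le n\le\poly(m)\), which follows from \(\log m\le d\le\poly(m)\) up to the constant \(c_{\min}\), and which is satisfied when \(c_{\min}\ge1\) or after adjusting \(m_0\).

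\textbf{Combination.} Combining (i) and (ii) gives \((1-\delta)c_2 L\le K\le c_1L\), so \(K=\Theta(d\log(m+d))\) with \(\delta\) fixed. Dividing by \(\log(m+d)\) gives the \(\Theta\) statement for \(\Dd\). Multiplying both sides by \(\ln2\) gives the statement for \(\Hderive\) and \(H_K\) via Definitions~\ref{def:derivation-entropy} and \ref{def:cond-algo-entropy}.

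\textbf{Main obstacle.} I expect this to be the careful handling of the logarithmic factor and the range conditions under the constant-factor change \(n\leftrightarrow d\): keeping all constants uniform (independent of \(m,q,B\)) and justifying that the lower-order terms \(O(n)\) and \(O(\log(m+n))\) are dominated. I would handle these by fixing \(m\ge m_0\) large enough that \(\log m\ge 1\) and all absorptions hold.
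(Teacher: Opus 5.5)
Your proposal is correct and follows essentially the same route as the paper. For (i) you bound \(K(q\mid\langle B\rangle)\) at the step scale \(n=N(q\mid B)\) directly via the trace encoding of Lemma~\ref{lem:encoding}; for (ii) you take the genericity inequality as the input; and you transfer both to \(d\) through Assumption~\ref{assump:serializable} together with \(\log(m+n)=\Theta(\log(m+d))\), which you handle more carefully than the paper does.

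One small simplification: your concern that \(\log m\le n\) might fail when \(c_{\min}<1\) is moot. Lemma~\ref{lem:depth-leq-trace} gives \(d\le N(q\mid B)\), so the range hypotheses on \(d\) pass to \(n\) directly. By contrast, ``adjusting \(m_0\)'' would not repair a violation of the lower end of that range.
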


\begin{proof}
Let \(d=\Dd(q\mid B)\) and \(n=N(q\mid B)\).

\textbf{(i) Upper bound.}
By Assumption~\ref{assump:serializable}, \(n\le c_{\max}d\).
Let \(\pi\) be a shortest trace, so \(|\pi|=n\). By Lemma~\ref{lem:encoding},
\[
|\enc(\pi)| \le k\,n\,\log(m+n)+O(n)=O\!\bigl(n\log(m+n)\bigr).
\]
Hence \(K(q\mid\langle B\rangle)\le O(n\log(m+n))\).
Using \(n\le c_{\max}d\) and \(\log(m+n)=O(\log(m+d))\) yields
\(K(q\mid\langle B\rangle)\le c_1 d\log(m+d)\).

\textbf{(ii) Generic lower bound.}
By Lemma~\ref{lem:incompressibility}(iii), generic \(q\) at step-length \(n\) satisfy
\(K(q\mid\langle B\rangle)\ge (1-\delta)n\log(m+n)\).
By Assumption~\ref{assump:serializable}, \(n\ge c_{\min}d\), and again \(\log(m+n)=\Theta(\log(m+d))\).
Rearranging gives the stated upper bound on \(d\) with some constant \(c_2>0\).
\end{proof}

\begin{corollary}[BCQ instantiation (sum-rule depth without serializability)]
\label{cor:bcq-instantiation}
Let \(q\) be a BCQ and define its compositional (sum-rule) depth by
\(\Dd_\Sigma(q\mid B)\) (Definition~\ref{def:additive-depth}).
Then all statements of Theorem~\ref{thm:derivation-depth-info-metric} remain valid if we replace
\(\Dd(q\mid B)\) by \(\Dd_\Sigma(q\mid B)\) and drop Assumption~\ref{assump:serializable}.
\end{corollary}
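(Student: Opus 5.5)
The plan is to rerun the proof of Theorem~\ref{thm:derivation-depth-info-metric} with \(d:=\Dd_\Sigma(q\mid B)\) in place of \(\Dd(q\mid B)\). The only place Assumption~\ref{assump:serializable} entered that proof was to move between \(\Dd\) and the step complexity \(N\), through the sandwich \(c_{\min}\Dd\le N\le c_{\max}\Dd\). For a BCQ, Definition~\ref{def:additive-depth} gives \(\Dd_\Sigma(q\mid B)=N(q\mid B)\) exactly. So the sandwich holds with \(c_{\min}=c_{\max}=1\) by definition, and serializability can be dropped. Throughout we set \(n=d=N(q\mid B)\) and \(L=d\log(m+d)\), which coincides with the step-scale quantity \(L=n\log(m+n)\) of Lemma~\ref{lem:incompressibility}. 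The hypotheses \(d\ge 1\) and \(\log m\le d\le\poly(m)\) then transfer verbatim.

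\textbf{Upper bound (i).} We take a shortest trace \(\pi\) with \(|\pi|=d\) and invoke Lemma~\ref{lem:encoding} together with Lemma~\ref{lem:incompressibility}(i). This gives \(K(q\mid\langle B\rangle)\le c_{\mathrm{enc}}\,d\log(m+d)\), so we may take \(c_1=c_{\mathrm{enc}}\). The rearranged form \(\Dd_\Sigma\ge K/(c_1\log(m+\Dd_\Sigma))\) is pure algebra. No conversion \(\log(m+n)=\Theta(\log(m+d))\) is needed, because \(n=d\).

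\textbf{Generic lower bound (ii).} Assume richness with parameter \(\delta_0\), either in the form of Definition~\ref{def:richness} or restricted to the relevant class of bases as permitted by Remark~\ref{rem:richness-scope}. Proposition~\ref{prop:bcq-richness-horn} supplies such a witness for atomic-fact bases. Now assume genericity, \(K(q\mid\langle B\rangle)\ge(1-\delta)d\log(m+d)\) with \(\delta>\delta_0\); Lemma~\ref{lem:incompressibility}(iii) guarantees that this holds for all but a \(2^{-(\delta-\delta_0)L}\) fraction of \(\mathcal Q_d\). Dividing gives \(\Dd_\Sigma\le K/((1-\delta)\log(m+\Dd_\Sigma))\), so \(c_2=1\) works.

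\textbf{Consequences.} Combining (i) and (ii) yields \(K(q\mid\langle B\rangle)=\Theta(\Dd_\Sigma\log(m+\Dd_\Sigma))\). Multiplying by \(\ln 2\) gives the \(\Hderive/H_K\) form, with \(\Hderive\) now read as \(\Dd_\Sigma\ln 2\).

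\textbf{Main obstacle.} The only delicate point is bookkeeping rather than mathematics. We must check that every appeal to \(\Dd\) in Theorem~\ref{thm:derivation-depth-info-metric} passes through \(N\), and that nothing uses the recursive predecessor definition of \(\Dd\), such as Assumption~\ref{assump:alignment} or Theorem~\ref{thm:computability-Dd}. We must also note that the generic set and the richness count are both indexed by \(N\), hence by \(\Dd_\Sigma\). Once this is verified, the statement follows by direct substitution.
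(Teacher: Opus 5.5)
Your proposal is correct and follows the paper's own argument. Since \(\Dd_\Sigma(q\mid B)=N(q\mid B)\) by Definition~\ref{def:additive-depth}, the serializability sandwich holds trivially with \(c_{\min}=c_{\max}=1\), and the encoding and incompressibility bounds apply directly at the step-length scale; you simply make explicit the bookkeeping that the paper's one-line proof leaves implicit (\(c_1=c_{\mathrm{enc}}\), \(c_2=1\), and reading \(\Hderive\) as \(\Dd_\Sigma\ln 2\)).
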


\begin{proof}
For BCQs, \(\Dd_\Sigma(q\mid B)=N(q\mid B)\) by definition, so the coding and incompressibility arguments
apply directly at the step-length scale.
\end{proof}

\begin{corollary}[Intrinsic vs.\ operational depth as two instantiations]
\label{cor:intrinsic-operational}
Let \(S_O\) be a knowledge base and let \(A:=\Atom(S_O)\) be its irredundant semantic core
(Definition~\ref{def:atom-so}).
For any query \(q\) derivable from \(A\) (hence from \(S_O\)), define
\[
  n_{\mathrm{int}}(q) := \Dd\bigl(q \mid A\bigr),
  \qquad
  n_{\mathrm{op}}(q) := \Dd(q \mid S_O).
\]
Then \(n_{\mathrm{op}}(q)\le n_{\mathrm{int}}(q)\) (Remark~\ref{rem:int-op-monotone}).

Moreover, whenever the hypotheses of Theorem~\ref{thm:derivation-depth-info-metric} hold,
it applies both to \((q,B)=(q,A)\) (intrinsic analysis) and to \((q,B)=(q,S_O)\) (operational analysis),
with base sizes \(m_{\mathrm{int}}:=|A|\) and \(m_{\mathrm{op}}:=|S_O|\), respectively.
\end{corollary}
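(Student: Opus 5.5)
The corollary has three parts: the monotonicity inequality \(n_{\mathrm{op}}(q)\le n_{\mathrm{int}}(q)\), and the applicability of Theorem~\ref{thm:derivation-depth-info-metric} to the two bases \(A\) and \(S_O\). I will treat the inequality first and then the instantiation claims.

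\textbf{Step 1 (monotonicity in the base).} Remark~\ref{rem:int-op-monotone} asserts this inequality but does not prove it, so I will prove it directly. Write \(B_1:=A\subseteq B_2:=S_O\). By Theorem~\ref{thm:computability-Dd} and Proposition~\ref{prop:no-cycle-dag}, \(\Dd(\cdot\mid B_i)\) is well defined by recursion on the finite, acyclic predecessor unfolding of \(q\). I will show \(\Dd(s\mid B_2)\le\Dd(s\mid B_1)\) for every node \(s\) of that unfolding, by well-founded induction along \(P_O\) (leaves first). There are three cases.
\begin{itemize}
\item If \(s\in B_2\), then \(\Dd(s\mid B_2)=0\), and the inequality is trivial.
\item If \(s\notin B_2\), then \(s\notin B_1\) because \(B_1\subseteq B_2\). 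Both depths then use the second clause of \eqref{eq:Dd-inductive}, applied to the same predecessor set \(P_O(s)\).
\item In that second case, the induction hypothesis on each \(s'\in P_O(s)\) gives \(\Dd(s'\mid B_2)\le\Dd(s'\mid B_1)\). Taking \(1+\max\) on both sides preserves the inequality, including when \(P_O(s)=\varnothing\) under the convention \(\max\varnothing=0\).
\end{itemize}

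\textbf{Alignment caveat.} The recursion uses the single fixed operator \(P_O\) on \(\mathbb S_O\), so the predecessor sets are the same for both bases. However, Assumption~\ref{assump:alignment} is stated relative to a fixed \(B\). I will therefore note that the alignment hypothesis must be granted for both \(A\) and \(S_O\). Because it only constrains states outside the base, and \(\mathbb S_O\setminus S_O\subseteq\mathbb S_O\setminus A\), alignment for \(A\) implies alignment for \(S_O\).

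\textbf{Step 2 (instantiation).} Both \(A\) and \(S_O\) are finite by Assumption~\ref{assump:finite-so}. \(S_O\) is effectively listable by that same assumption. \(A\) is computable from \(S_O\) by Definition~\ref{def:atom-so} together with Assumption~\ref{assump:core-extractable}. Hence both satisfy Assumption~\ref{assump:finite-premise-base} and have canonical encodings \(\langle A\rangle\) and \(\langle S_O\rangle\).

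Derivability needs checking for each base. For \(q\) derivable from \(A\), monotonicity of \(\Cn\) gives \(q\in\Cn(S_O)\). Alternatively, Proposition~\ref{prop:atom-core-correct}(i) gives \(\Cn(A)=\Cn(S_O)\), so derivability holds for both bases.

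The remaining hypotheses are taken as given for each base separately, as the corollary says "whenever the hypotheses hold". These are the range condition \(\log m\le d\le\poly(m)\), serializability, and, for part (ii), richness and genericity. Substituting \(B=A\) with \(m_{\mathrm{int}}=|A|\), \(d=n_{\mathrm{int}}(q)\), and \(B=S_O\) with \(m_{\mathrm{op}}=|S_O|\), \(d=n_{\mathrm{op}}(q)\), yields the two instantiations verbatim.

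\textbf{Main obstacle.} The only step with real content is Step 1, specifically making sure the base-relative alignment hypothesis does not yield different predecessor structures for the two bases. I will handle this as described above: the recursion depends on \(B\) only through the membership test, and \(P_O\) is base-independent.
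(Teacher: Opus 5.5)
Your proposal is correct and follows the paper's route. The inequality is exactly the base-monotonicity of \(\Dd\) asserted in Remark~\ref{rem:int-op-monotone}; the paper proves it later as Lemma~\ref{lem:premise-monotonicity} by the same well-founded recursion over the predecessor unfolding, and the \emph{moreover} clause is a direct substitution of \(B=A\) and \(B=S_O\) into Theorem~\ref{thm:derivation-depth-info-metric}. Your extra observation that alignment relative to \(A\) implies alignment relative to \(S_O\) (since \(\mathbb S_O\setminus S_O\subseteq\mathbb S_O\setminus A\)) is correct and is left implicit in the paper.
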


\begin{remark}[Interpretation]
\label{rem:interpretation}
Theorem~\ref{thm:derivation-depth-info-metric} connects descriptional complexity (\(K(\cdot\mid\cdot)\)) with inferential
depth (\(\Dd(\cdot\mid\cdot)\)). The logarithmic factor \(\log(m+\Dd)\) represents the addressing cost of selecting
premises from a pool of size \(\lesssim m+\Dd\) in derivation encodings.
\end{remark}

%%%%%%%%%%%%%%%%%%%%%%%%%%%%%%%%%%%%%%%%%%%%%%%%%%%%%%%%%%%%%%%%%%%%%%%%%%%%%%%%%%%%%%%%%%%%%%%%%%%%%%%%%
%%%   III.E Connection to Bennett's Logical Depth
%%%%%%%%%%%%%%%%%%%%%%%%%%%%%%%%%%%%%%%%%%%%%%%%%%%%%%%%%%%%%%%%%%%%%%%%%%%%%%%%%%%%%%%%%%%%%%%%%%%%%%%%%

\subsection{Connection to Bennett's Logical Depth}
\label{subsec:bennett}

We now relate base-relative derivation depth (Definition~\ref{def:derivation-depth}) to Bennett's logical depth.
Since \(\Dd(\cdot\mid B)\) measures the height of the predecessor unfolding, while Bennett depth is defined via
running time of near-minimal programs, we connect the two by (i) serializability and (ii) simulation assumptions.

\begin{definition}[Bennett's logical depth~\cite{bennett1988logical,li2008introduction}]
\label{def:bennett-depth}
For strings \(x\) and \(y\) and a significance level \(t \ge 0\), the logical depth of \(x\) at significance \(t\)
conditional on \(y\) is
\begin{align}\label{eq:bennett-def}
  \mathrm{depth}_t(x \mid y)
  := \min\bigl\{T(p) :\;& |p| \le K(x \mid y) + t, \nonumber\\
  & U(p, y) = x\bigr\},
\end{align}
where \(U\) is a fixed universal Turing machine, \(T(p)\) is the running time of \(U(p,y)\), and \(K(\cdot\mid\cdot)\)
is (prefix-free) conditional Kolmogorov complexity under \(U\).
\end{definition}

\begin{lemma}[Depth is bounded by trace length]
\label{lem:depth-leq-trace}
Fix a finite premise base \(B\). Assume Assumption~\ref{assump:alignment} (so that the predecessor operator \(P_O\)
matches the fixed one-step normal form of the proof system \(\mathcal P\) used to define derivation traces).
Let \(\pi\) be any valid derivation trace (Definition~\ref{def:derivation-trace}) that derives \(q\) from \(B\).
If \(|\pi|=\ell\), then
\[
  \Dd(q \mid B) \le \ell.
\]
\end{lemma}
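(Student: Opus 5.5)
The plan is to argue by induction along the trace, proving a stronger invariant: every formula in the replay pool \(P_i\) has depth bounded by its position in the derivation. Concretely, we show that each base premise in \(P_0\) satisfies \(\Dd(\cdot\mid B)=0\), and that the formula \(f_i\) appended at step \(i\) satisfies \(\Dd(f_i\mid B)\le i\). The statement then follows immediately. The output pointer \(o\) selects either an element of \(B\), giving depth \(0\le\ell\), or some \(f_i\) with \(i\le\ell\), giving depth \(\le i\le \ell\).

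The base case is Definition~\ref{def:derivation-depth} directly, since \(s\in B\) gives \(\Dd(s\mid B)=0\). For the inductive step, suppose the invariant holds for \(P_{i-1}\). Step \(i\) applies rule \(r_i\) to premises referenced by \(\vec p_i\), each of which has depth at most \(i-1\) by hypothesis. There are two cases. If \(f_i\in B\), then \(\Dd(f_i\mid B)=0\le i\) and nothing needs proving. Otherwise \(f_i\notin B\), and Assumption~\ref{assump:alignment} applies. It says \(P_O(f_i)\) coincides with the immediate premises of the fixed normal-form single-step inference concluding \(f_i\), so \(P_O(f_i)\) is exactly the premise set referenced by \(\vec p_i\). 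Hence, by \eqref{eq:Dd-inductive},
\[
\Dd(f_i\mid B)=1+\max_{s'\in P_O(f_i)}\Dd(s'\mid B)\le 1+(i-1)=i,
\]
where we use the convention \(\max\varnothing=0\) for nullary rules (Remark~\ref{rem:nullary}). Theorem~\ref{thm:computability-Dd} guarantees that all these depths are finite, so the arithmetic is legitimate.

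The main obstacle is exactly the identification \(P_O(f_i)=\{\text{premises used at step } i\}\). Alignment fixes one canonical single-step inference per conclusion, whereas the trace might conclude \(f_i\) by a different rule or from different premises. My first approach is to read ``valid trace'' (Definition~\ref{def:derivation-trace}) together with Assumption~\ref{assump:alignment} as requiring trace steps to be in the fixed normal form, so that the one-step inference concluding \(f_i\) is the one the trace used. This is the reading under which the statement's parenthetical gloss makes sense. If that reading is too strong, I would fall back to a stronger invariant over all formulas concluded at step \(\le i\). In that version, completeness of the normal form at one step means the canonical premises \(P_O(f_i)\) are derivable; but bounding their depth by \(i-1\) would then need an additional argument, so I would state the normal-form reading explicitly as the operative hypothesis.

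A final minor point concerns repeated formulas. The same formula may appear several times in the pool, and may even coincide with an element of \(B\). Since \(\Dd\) is a function of the formula rather than of its pool position, duplicates cause no difficulty: the invariant bounds the depth of each occurrence, so it holds for the formula regardless of which occurrence is referenced.
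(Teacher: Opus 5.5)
Your proposal is correct and follows the paper's proof essentially verbatim. Both argue by induction on \(i\) that every formula in the pool \(P_i\) has depth at most \(i\), and both make the inductive step rest on reading trace validity plus Assumption~\ref{assump:alignment} as giving that \(P_O(f_i)\) equals the premises used at step \(i\). Your explicit handling of the case \(f_i\in B\), and your stating the normal-form reading as the operative hypothesis, are small refinements of points the paper leaves implicit.
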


\begin{proof}
Let \(\pi=((r_1,\vec p_1),\ldots,(r_\ell,\vec p_\ell);\ o)\) and let \(f_i\) be the conclusion produced at step \(i\)
during replay, so that \(q\) is selected from the final pool \(P_\ell\).

We claim by induction on \(i\) that every formula in the pool \(P_i\) has derivation depth at most \(i\) from \(B\).
For \(i=0\), \(P_0\) consists exactly of \(B\), hence all elements have depth \(0\).
For the inductive step, \(P_i\) is obtained from \(P_{i-1}\) by appending \(f_i\).
By validity of the trace and Assumption~\ref{assump:alignment}, the immediate premises used to derive \(f_i\)
are exactly \(P_O(f_i)\), and each predecessor in \(P_O(f_i)\) is an element of \(P_{i-1}\).
By the induction hypothesis, each such predecessor has depth at most \(i-1\), hence
\[
\Dd(f_i\mid B)=1+\max_{s'\in P_O(f_i)}\Dd(s'\mid B)\le 1+(i-1)=i.
\]
Thus every element of \(P_i\) has depth at most \(i\). In particular, the output \(q\) selected from \(P_\ell\)
satisfies \(\Dd(q\mid B)\le \ell\).
\end{proof}

\begin{theorem}[Derivation depth as conditional logical depth]
\label{thm:bennett-equivalence}
Let \(S_O\) be a knowledge base with intrinsic base \(A:=\Atom(S_O)\), and assume
Assumption~\ref{assump:serializable}. Let \(U\) be a universal Turing machine and \(\langle\cdot\rangle\) the fixed encoding.

Assume the following two properties hold for the class of instances under consideration:
\begin{enumerate}[label=\textup{(\roman*)}, leftmargin=2.5em]
  \item \emph{Efficient proof simulation (EPS):}
  there exists a constant \(a_1>0\) such that, given \(\langle A\rangle\) and a derivation trace encoding \(\enc(\pi)\)
  of length \(|\pi|=\ell\), \(U\) can verify \(\pi\) and output \(\langle q\rangle\) in time at most \(a_1\cdot \ell\).
  \item \emph{No proof shortcut (NPS):}
  there exists a constant \(a_2>0\) such that if some program \(p\) outputs \(\langle q\rangle\) in time \(T(p)\)
  given oracle access to \(\langle A\rangle\), then there exists a derivation trace \(\pi\) of \(q\) from \(A\)
  with \(|\pi|\le a_2\cdot T(p)\).
\end{enumerate}

Then for every derivable query \(q \in \Cn(A)\) with \(m=|A|\) and \(d=\Dd(q \mid A)\ge 1\),
letting \(L := d\log(m+d)\), there exists a constant \(c_t>0\) such that
\[
  \mathrm{depth}_{c_t L}\bigl(\langle q \rangle \mid \langle A\rangle\bigr)
  = \Theta\!\bigl(\Dd(q \mid A)\bigr).
\]
\end{theorem}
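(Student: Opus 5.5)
The plan is to prove the two inequalities $\mathrm{depth}_{c_tL}(\langle q\rangle\mid\langle A\rangle)=O(d)$ and $=\Omega(d)$ separately, with serializability as the bridge in both directions.

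\emph{Choice of $c_t$.} Take $c_t$ large enough that the trace-based program from Theorem~\ref{thm:derivation-depth-info-metric}(i) lies inside the significance window. Concretely, let $\pi^\star$ be a shortest trace of $q$ from $A$, so $|\pi^\star|=N(q\mid A)\le c_{\max}d$ by Assumption~\ref{assump:serializable}. Let $p^\star$ be the program ``replay $\enc(\pi^\star)$ against the conditional input $\langle A\rangle$ and print the output formula''. By Lemma~\ref{lem:encoding}, $|p^\star|\le c_1 L$ with $c_1$ as in Theorem~\ref{thm:derivation-depth-info-metric}(i). Since $K(\langle q\rangle\mid\langle A\rangle)\ge 0$, the bound $|p^\star|\le K(\langle q\rangle\mid\langle A\rangle)+c_1L$ holds trivially. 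Setting $c_t:=c_1$ therefore makes $p^\star$ admissible in the minimization~\eqref{eq:bennett-def}, and no genericity hypothesis is needed for this step.

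\emph{Upper bound.} Use (EPS). The running time of $p^\star$ on $U$ is at most $a_1|\pi^\star|\le a_1c_{\max}d$. Hence $\mathrm{depth}_{c_tL}(\langle q\rangle\mid\langle A\rangle)\le a_1c_{\max}\,d=O(d)$.

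\emph{Lower bound.} Let $p$ be any program with $|p|\le K+c_tL$ and $U(p,\langle A\rangle)=\langle q\rangle$. By (NPS) there is a trace $\pi$ of $q$ from $A$ with $|\pi|\le a_2T(p)$. Then $N(q\mid A)\le a_2T(p)$. Lemma~\ref{lem:depth-leq-trace} gives directly $d=\Dd(q\mid A)\le|\pi|\le a_2T(p)$; alternatively, Assumption~\ref{assump:serializable} gives $c_{\min}d\le N$. Either way $T(p)\ge d/a_2$. Minimizing over admissible $p$ yields $\mathrm{depth}_{c_tL}\ge d/a_2=\Omega(d)$. Combining with the upper bound gives the claimed $\Theta(\Dd(q\mid A))$.

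\emph{Main obstacle.} The delicate step is the interface between the significance parameter and the encoding. We must make sure $c_t$ is uniform, that is, independent of $q$, $m$ and $d$. This holds because the constant in Lemma~\ref{lem:encoding} and the serializability constant $c_{\max}$ depend only on $\mathcal P$, $U$ and the encoding conventions. We must also make sure the replay program's description overhead, namely the $O(1)$ simulator plus the self-delimiting header, is absorbed into $c_1L$. This uses $d\ge 1$, so that $L\ge\log(m+1)$ dominates the header terms. A secondary concern is that (EPS) is stated as time for $U$ given $\langle A\rangle$ and $\enc(\pi)$. I will interpret running $p^\star$ as exactly that computation, so the time bound transfers verbatim; any reading-in cost is assumed to be folded into $a_1$ by the hypothesis itself.
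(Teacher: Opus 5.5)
Your proposal is correct and follows the paper's proof essentially step for step. Serializability plus Lemma~\ref{lem:encoding} give an $O(L)$-bit trace-replay program that is admissible once $c_t$ is large enough; your explicit choice $c_t:=c_1$ via $K\ge 0$ is just a sharper form of the paper's ``choose $c_t$ large enough''. (EPS) then bounds the replay's running time by $O(d)$, and (NPS) together with Lemma~\ref{lem:depth-leq-trace} gives $T(p)\ge d/a_2$ for every admissible $p$.

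One small slip: your alternative route through Assumption~\ref{assump:serializable} gives $T(p)\ge c_{\min}d/a_2$, not $d/a_2$. This is still $\Omega(d)$, so it does not affect the result.
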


\begin{proof}
Let \(m=|A|\) and \(d=\Dd(q\mid A)\).

\emph{Upper bound.}
By Assumption~\ref{assump:serializable}, there exists a derivation trace \(\pi\) deriving \(q\) from \(A\)
with length \(|\pi|=\ell \le c_{\mathrm{ser}}\, d\) for some constant \(c_{\mathrm{ser}}>0\).
By Lemma~\ref{lem:encoding},
\begin{align*}
  &|\enc(\pi)| \\
  \le &k\,\ell\,\log(m+\ell)+O(\ell) \\
  = &O \bigl(d\log(m+d)\bigr) \\
  = &O(L).
\end{align*}
Let \(p\) be a program that contains \(\enc(\pi)\), uses \(\langle A\rangle\) as conditional input, runs the EPS
verification/simulation, and outputs \(\langle q\rangle\).
Since \(|p|=O(L)\) and \(K(\langle q\rangle\mid \langle A\rangle)\le |p|+O(1)\), choosing \(c_t\) large enough ensures
\(|p| \le K(\langle q\rangle\mid \langle A\rangle) + c_t L\).
By (EPS), \(T(p)\le a_1\ell \le a_1c_{\mathrm{ser}}d\), hence
\(\mathrm{depth}_{c_t L}(\langle q\rangle\mid \langle A\rangle)=O(d)\).

\emph{Lower bound.}
Let \(T^* := \mathrm{depth}_{c_t L}(\langle q \rangle \mid \langle A\rangle)\).
By Definition~\ref{def:bennett-depth}, there exists a program \(p^*\) of admissible length at significance \(c_tL\)
that outputs \(\langle q\rangle\) in time \(T^*\) given \(\langle A\rangle\).
By (NPS), there exists a derivation trace \(\pi^*\) of \(q\) from \(A\) with \(|\pi^*|\le a_2T^*\).
By Lemma~\ref{lem:depth-leq-trace} (with base \(A\)),
\(
d=\Dd(q\mid A)\le |\pi^*|\le a_2T^*
\),
hence \(T^*\ge d/a_2\). Therefore
\(\mathrm{depth}_{c_t L}(\langle q\rangle\mid \langle A\rangle)=\Omega(d)\).

Combining the two bounds yields
\(\mathrm{depth}_{c_t L}(\langle q\rangle\mid \langle A\rangle)=\Theta(d)=\Theta(\Dd(q\mid A))\).
\end{proof}

\begin{remark}[On the time model in Theorem~\ref{thm:bennett-equivalence}]
\label{rem:bennett-time-model}
Assumption (EPS) treats one trace step (including addressing/pointer handling) as \(O(1)\) time on the underlying machine.
On standard bit-cost Turing machines, replaying a trace typically incurs an additional \(\Theta(\log(m+\ell))\) addressing
overhead per step, leading to a natural \(\Theta(\Dd(q\mid A)\log(m+\Dd(q\mid A)))\) time scale.
This matches the ``addressing cost'' factor that already appears in the coding bounds of
Theorem~\ref{thm:derivation-depth-info-metric}.
\end{remark}

%%%%%%%%%%%%%%%%%%%%%%%%%%%%%%%%%%%%%%%%%%%%%%%%%%%%%%%%%%%%%%%%%%%%%%%%%%%%%%%%%%%%%%%%%%%%%%%%%%%%%%%%%
%%%   III.F Tightness and Locality Refinements (kept mostly as-is)
%%%%%%%%%%%%%%%%%%%%%%%%%%%%%%%%%%%%%%%%%%%%%%%%%%%%%%%%%%%%%%%%%%%%%%%%%%%%%%%%%%%%%%%%%%%%%%%%%%%%%%%%%

\subsection{Tightness and Locality Refinements}
\label{subsec:tightness-extensions}

We first show the asymptotic characterization is tight up to constant factors.
Then we discuss refinements exploiting semantic locality.

%--------------------------------------------------------------
%  III.F.1 Tightness
%--------------------------------------------------------------
\subsubsection{Tightness analysis}
\label{subsubsec:tightness}

\begin{theorem}[Tightness of the asymptotic characterization]
\label{thm:tightness}
There exists a proof system \(\mathcal{P}\) and an infinite family of pairs \((q, B)\) with
\(m = |B|\) and \(d = \Dd(q \mid B)\) satisfying \(d \to \infty\), \(d = o(m)\), and the
information-rich condition (Definition~\ref{def:info-rich}), such that
\[
  \Dd(q \mid B)
  =
  \Theta\!\left(
    \frac{K\bigl(q \mid \langle B \rangle\bigr)}
         {\log\bigl(|B| + \Dd(q \mid B)\bigr)}
  \right).
\]
\end{theorem}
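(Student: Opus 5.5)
I will prove Theorem~\ref{thm:tightness} by an explicit construction built on the BCQ witness of Proposition~\ref{prop:bcq-richness-horn}. The upper bound will come from Theorem~\ref{thm:derivation-depth-info-metric}(i). The matching lower bound will come from a counting/incompressibility argument applied to a single chosen query per base size.

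\emph{Construction.} Let \(\mathcal P\) consist of a single binary conjunction-introduction rule in the normal form of Proposition~\ref{prop:bcq-richness-horn}, so each step introduces at most one new \(\land\). Identify \(P_O(\varphi\land\psi)=\{\varphi,\psi\}\) and \(P_O(a)=\varnothing\) for atoms, so that Assumption~\ref{assump:alignment} holds.

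For each \(m\), take \(B_m=\{a_1,\ldots,a_m\}\) to be distinct atoms. Choose \(d=d(m)=\lceil\sqrt m\,\rceil\), so that \(\log m\le d=o(m)\) and \(d\to\infty\).

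Instead of left-associative chains, which would make \(\Dd\) (the unfolding height) equal to \(n\) trivially but break the link with \(N\) for other shapes, I will use left-associative conjunctions \(q_{\mathbf i}=a_{i_0}\land\cdots\land a_{i_d}\) with distinct indices. For these, the predecessor unfolding is a left comb. This gives \(\Dd(q_{\mathbf i}\mid B_m)=d\) exactly, by induction on the prefix: each prefix \(a_{i_0}\land\cdots\land a_{i_j}\) has depth \(j\), since atoms in \(B_m\) have depth \(0\). Proposition~\ref{prop:bcq-richness-horn}(i) gives \(N(q_{\mathbf i}\mid B_m)=d\) as well, so serializability (Assumption~\ref{assump:serializable}) holds on this family with \(c_{\min}=c_{\max}=1\).

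\emph{Choosing an incompressible member.} Lemma~\ref{lem:incompressibility}(ii) says fewer than \(2^{\ell}\) strings have \(K(\cdot\mid\langle B_m\rangle)<\ell\). By Proposition~\ref{prop:bcq-richness-horn}(ii), \(|\mathcal Q_d^{\mathrm{BCQ}}|\ge (m-d)^{d+1}\). Hence some \(q_m\) in the family satisfies
\[
K(q_m\mid\langle B_m\rangle)\ \ge\ (d+1)\log(m-d) - 1 \;=\; \Omega(d\log m).
\]
This step works directly and needs no uniform richness: I pick one incompressible query per \(m\) by pigeonhole. Since \(d=o(m)\), we have \(\log(m+d)=\Theta(\log m)\), so \(K(q_m\mid\langle B_m\rangle)=\Omega(d\log(m+d))\). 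In particular \(K=\omega(\log m)\) because \(d\to\infty\), which verifies the information-rich condition of Definition~\ref{def:info-rich}.

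\emph{Upper bound and conclusion.} Lemma~\ref{lem:encoding} with \(k=2\), \(|R|=1\), applied to the length-\(d\) trace, together with the replay argument of Lemma~\ref{lem:incompressibility}(i), gives \(K(q_m\mid\langle B_m\rangle)\le c\,d\log(m+d)\). A simpler alternative is to encode the \(d+1\) indices directly. Combining the two bounds gives \(K(q_m\mid\langle B_m\rangle)=\Theta(d\log(m+d))\), which rearranges to the displayed \(\Theta\)-statement with \(d=\Dd(q_m\mid B_m)\).

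The main obstacle is the consistency check between \(\Dd\), defined via \(P_O\), and the trace normal form. I must make sure that the alignment assumption is coherent for this rule set: each non-atomic formula should have a unique normal-form last step, so that \(P_O\) is well defined, and the comb shape should force height exactly \(d\). The counting and encoding steps are routine given the earlier lemmas.
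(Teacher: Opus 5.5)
Your proposal is correct and follows essentially the paper's construction: left-associated conjunctions of \(d\approx\sqrt m\) distinct premises from a base of \(m\) atoms, depth exactly \(d\) via the comb-shaped predecessor unfolding, the trace encoding of Lemma~\ref{lem:encoding} for the upper bound on \(K\), and a counting argument for the matching lower bound. Your pigeonhole selection of one incompressible \(q_m\) per base size is the right form of the counting step. The paper instead asserts \(K(q_I\mid\langle B_m\rangle)\ge\log\binom{m}{n+1}-O(1)\) for every subset \(I\), which fails for compressible choices such as \(I=\{0,\ldots,n\}\). Separately, your sentence contrasting ``left-associative chains'' with ``left-associative conjunctions'' is garbled, but this does not affect the argument.
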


\begin{proof}
We give an explicit family over a \emph{fixed} finite relational signature in the standard descriptive-complexity setting.

\emph{Fixed signature and numerals.}
Fix a finite relational signature
\[
  \Sigma_0 := \{A(\cdot),\, <,\, \mathrm{BIT}(\cdot,\cdot)\},
\]
where \(<\) is a built-in linear order on the object sort and \(\mathrm{BIT}(x,y)\) is the standard bit predicate
available on ordered finite structures (cf.\ descriptive complexity references such as~\cite{immerman1999descriptive,ebbinghaus1995finite}).
For each integer \(i\ge 0\), we use the standard shorthand \(\underline{i}\) (``the numeral \(i\)'') to denote
a \(\Sigma_0\)-definable element with binary value \(i\). Formally, one may read an atom \(A(\underline{i})\) as the
closed formula
\[
  a_i \;:=\; \exists x\;\bigl(\mathrm{Num}_i(x)\land A(x)\bigr),
\]
where \(\mathrm{Num}_i(x)\) is a \(\Sigma_0\)-formula that holds exactly on the element encoding \(i\) in the
built-in number representation; \(\mathrm{Num}_i\) can be chosen with length \(O(\log i)\) under the usual BIT-based
encodings.

\emph{Proof system and one-step predecessor normal form.}
Let \(\mathcal P\) contain the binary conjunction-introduction rule: from \(\varphi\) and \(\psi\), derive
\(\varphi\land\psi\).
We work in a one-step normal form for this rule, so under Assumption~\ref{assump:alignment} we have
\[
  P_O(\varphi\land\psi)=\{\varphi,\psi\}.
\]

\emph{Premise bases.}
For each \(m\in\mathbb N\), define the premise base over the \emph{same} signature \(\Sigma_0\) by
\[
  B_m := \{a_0,a_1,\ldots,a_{m-1}\},
  \qquad\text{where } a_i := A(\underline{i}).
\]
Then \(|B_m|=m\), and all instances live in the fixed logical language \(\mathcal L=\mathrm{FO(LFP)}\) over the
fixed signature \(\Sigma_0\).

\emph{Parameters.}
Let \(n:=\lfloor\sqrt{m}\rfloor\) for \(m\ge 4\). Then \(n\to\infty\) and \(n=o(m)\).

\emph{Queries (canonical encoding).}
For each subset \(I\subseteq [m]\) with \(|I|=n+1\), write \(I=\{i_0<i_1<\cdots<i_n\}\) and define the left-associated
conjunction
\[
  q_I := (\cdots((a_{i_0}\land a_{i_1})\land a_{i_2})\cdots)\land a_{i_n}.
\]
Under the fixed parenthesization and increasing order, distinct \(I\)'s yield distinct encodings \(\langle q_I\rangle\).

\emph{Derivation depth.}
Since each \(a_{i_j}\in B_m\) is depth \(0\), there exists a derivation trace of \(q_I\) from \(B_m\) of length \(n\),
hence by Lemma~\ref{lem:depth-leq-trace} we have \(\Dd(q_I\mid B_m)\le n\).
Conversely, because \(P_O(\varphi\land\psi)=\{\varphi,\psi\}\), the predecessor unfolding of the left-associated
conjunction contains a chain of length \(n\) through the intermediate conjunctions, so \(\Dd(q_I\mid B_m)\ge n\).
Therefore \(\Dd(q_I\mid B_m)=n\).

\emph{Kolmogorov complexity bounds.}
Given \(\langle B_m\rangle\), specifying \(q_I\) amounts to specifying the subset \(I\).
There are \(\binom{m}{n+1}\) choices, hence by the standard counting bound for prefix-free Kolmogorov complexity
(see, e.g.,~\cite{li2008introduction}),
\[
  K\bigl(q_I \mid \langle B_m \rangle\bigr)\ge \log\binom{m}{n+1}-O(1).
\]
For \(n+1\le m/2\) (true for all sufficiently large \(m\)), \(\binom{m}{n+1}\ge \left(\frac{m}{n+1}\right)^{n+1}\),
so \(K(q_I\mid \langle B_m\rangle)=\Omega(n\log m)\).

On the other hand, Lemma~\ref{lem:encoding} (with \(k=2\)) yields an encoding of a length-\(n\) conjunction-building trace of
\(q_I\) from \(B_m\) of size \(\le 2n\log(m+n)+O(n)\), hence
\(K(q_I\mid \langle B_m\rangle)=O(n\log(m+n))\).
Since \(n=o(m)\), \(\log(m+n)=\Theta(\log m)\), so
\[
  K(q_I\mid \langle B_m\rangle)=\Theta(n\log m).
\]

Finally, \(\Dd(q_I\mid B_m)=n\) and \(\log(m+\Dd(q_I\mid B_m))=\Theta(\log m)\), giving
\[
  \Dd(q_I\mid B_m)
  =
  \Theta\!\left(
    \frac{K(q_I\mid \langle B_m\rangle)}{\log(m+\Dd(q_I\mid B_m))}
  \right).
\]
Moreover \(K(q_I\mid \langle B_m\rangle)=\Theta(n\log m)=\omega(\log m)\), so the family lies in the information-rich regime.
\end{proof}

%--------------------------------------------------------------
%  III.E.2 Locality
%--------------------------------------------------------------
\subsubsection{Semantic locality and effective complexity}
\label{subsubsec:locality}

The bounds in Theorem~\ref{thm:derivation-depth-info-metric} treat all base premises uniformly via
\(L_B=\log(m+n)\), where \(m=|B|\).
For semantic knowledge bases \(S_O\), the most relevant instantiation is the intrinsic base
\(A:=\Atom(S_O)\). For semantically local queries that effectively depend on only a small subset of \(A\),
one can tighten the logarithmic factor by replacing \(|A|\) with an \emph{effective} core size.

\begin{definition}[Essential core premises and effective parameters]
\label{def:essential}
Fix a knowledge base \(S_O\) with intrinsic base \(A:=\Atom(S_O)\).
For a derivable query \(q \in \Cn(A)\), let \(\Pi^*(q)\) denote the set of \emph{shortest}
derivation traces of \(q\) from \(A\), i.e.,
\[
  \Pi^*(q) := \{\pi:\ \pi \text{ derives } q \text{ from } A,\ |\pi| = N(q\mid A)\},
\]
where \(N(q\mid A)\) is defined in Definition~\ref{def:min-trace-length}. Define:
\begin{enumerate}[label=\textup{(\roman*)}]
  \item \emph{Essential atom set:}
  \[
    \Ess(q \mid S_O) := \bigcap_{\pi \in \Pi^*(q)} \Atoms(\pi)\ \subseteq\ A.
  \]

  \item \emph{Extended essential atom set:}
  \[
    \Ess^+(q \mid S_O) := \bigcup_{\pi \in \Pi^*(q)} \Atoms(\pi)\ \subseteq\ A.
  \]

  \item \emph{Effective knowledge-base size:} \(m_{\mathrm{eff}}(q) := |\Ess^+(q \mid S_O)|\).

  \item \emph{Effective complexity factor:}
  \[
    L_{\mathrm{eff}}(q) := \log\bigl(m_{\mathrm{eff}}(q) + \Dd(q \mid A)\bigr).
  \]

  \item \emph{Semantic locality coefficient:} \(\lambda(q) := m_{\mathrm{eff}}(q)/|A| \in (0,1]\).
\end{enumerate}
\end{definition}

\begin{remark}[Well-definedness and relation to \(\Dd\)]
\label{rem:essential-welldef}
Since \(q\) is derivable from \(A\), \(\Pi^*(q)\neq\varnothing\).
Moreover, Lemma~\ref{lem:depth-leq-trace} gives \(\Dd(q\mid A)\le N(q\mid A)\), while
Assumption~\ref{assump:serializable} implies \(N(q\mid A)=O(\Dd(q\mid A))\) in the regimes considered.
Hence \(N(q\mid A)=\Theta(\Dd(q\mid A))\), so defining locality via shortest traces is compatible with using
\(\Dd(\cdot\mid A)\) as the main depth parameter.
\end{remark}

\begin{proposition}[Effective derivation-depth characterization under locality]
\label{prop:effective-bound}
Let \(S_O\) be a knowledge base with \(A:=\Atom(S_O)\).
Let \(q \in \Cn(A)\) be a derivable query with \(n := \Dd(q \mid A) \ge 1\) and
\(m_{\mathrm{eff}} := m_{\mathrm{eff}}(q)\) (Definition~\ref{def:essential}).
Assume \(\log m_{\mathrm{eff}} \le n \le \poly(m_{\mathrm{eff}})\), and Assumption~\ref{assump:serializable}.
Suppose the proof system \(\mathcal{P}\) satisfies the richness condition (Definition~\ref{def:richness})
relative to premise bases of size \(m_{\mathrm{eff}}\).
If \(q\) is generic/incompressible with respect to \(\langle \Ess^+(q \mid S_O) \rangle\) (at the corresponding
depth scale), then
\begin{equation}\label{eq:effective-bound}
  \Dd(q \mid A)
  =
  \Theta\!\left(
    \frac{K\bigl(q \mid \langle \Ess^+(q \mid S_O) \rangle\bigr)}
         {L_{\mathrm{eff}}(q)}
  \right).
\end{equation}
\end{proposition}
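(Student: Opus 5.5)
The plan is to rerun the proof of Theorem~\ref{thm:derivation-depth-info-metric} with the full base \(A\) replaced by the restricted base \(E:=\Ess^+(q\mid S_O)\), where \(|E|=m_{\mathrm{eff}}\). The preliminary step is to show that \(q\) is derivable from \(E\) with the same shortest-trace length. Take any \(\pi\in\Pi^*(q)\). By Definition~\ref{def:essential}, \(\Atoms(\pi)\subseteq E\). We then reindex the initial pool: the replay procedure of Definition~\ref{def:derivation-trace} only reads base premises through pointers into \(P_0\). So rewriting each base pointer as its index in the canonical order of \(E\), and shifting pointers to derived formulas by \(m_{\mathrm{eff}}-m\), gives a valid trace \(\pi_E\) from \(E\) of the same length producing \(q\). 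Hence \(N(q\mid E)\le N(q\mid A)\). Conversely, any trace from \(E\subseteq A\) lifts back to a trace from \(A\) by the inverse reindexing, so \(N(q\mid E)=N(q\mid A)=:n'\). By Remark~\ref{rem:essential-welldef}, \(n'=\Theta(\Dd(q\mid A))=\Theta(n)\), which gives \(\log(m_{\mathrm{eff}}+n')=\Theta(L_{\mathrm{eff}}(q))\).

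\emph{Upper bound on \(K\).} Apply Lemma~\ref{lem:encoding} to \(\pi_E\) with base size \(m_{\mathrm{eff}}\). A program that holds \(\enc(\pi_E)\), reads \(\langle E\rangle\) as conditional input, replays, and outputs \(\langle q\rangle\) shows
\[
K(q\mid\langle E\rangle)\le c\,n'\log(m_{\mathrm{eff}}+n')=O\bigl(n\,L_{\mathrm{eff}}(q)\bigr).
\]
Rearranging gives \(\Dd(q\mid A)=\Omega\bigl(K(q\mid\langle E\rangle)/L_{\mathrm{eff}}(q)\bigr)\).

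\emph{Lower bound on \(K\).} Use richness relative to bases of size \(m_{\mathrm{eff}}\), together with the hypothesis \(\log m_{\mathrm{eff}}\le n\le\poly(m_{\mathrm{eff}})\); since \(n'=\Theta(n)\), this range transfers to \(n'\) up to constants. Lemma~\ref{lem:incompressibility}(iii), applied with \(B:=E\), then says that generic queries at step length \(n'\) satisfy \(K(q\mid\langle E\rangle)\ge(1-\delta)n'\log(m_{\mathrm{eff}}+n')\). This inequality is exactly the content of the genericity hypothesis on \(q\). Combining it with \(n'\ge c_{\min}n\) gives \(\Dd(q\mid A)=O\bigl(K(q\mid\langle E\rangle)/L_{\mathrm{eff}}(q)\bigr)\), and together with the previous paragraph this yields \eqref{eq:effective-bound}.

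\emph{Main obstacle.} The delicate step is the transfer between \(A\) and \(E\) in the first paragraph. It relies on two facts: that rule applications depend only on the referenced formulas and not on pool positions, and that the serializability constants of Assumption~\ref{assump:serializable} remain usable when the trace length is measured from \(E\) rather than from \(A\). I would handle both by working throughout with \(N(q\mid A)\), which equals \(N(q\mid E)\), and invoking serializability only for the pair \((q,A)\). There is a further subtlety: the conditioning string \(\langle E\rangle\) depends on \(q\), so the statement is pointwise in \(q\) and the counting argument should be read as applying to the fixed base \(E\). This is what the genericity hypothesis is phrased to cover, so I would state that interpretation explicitly.
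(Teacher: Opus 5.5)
Your proposal is correct and follows the paper's route. You restrict to \(E=\Ess^+(q\mid S_O)\), note that a shortest trace from \(A\) references only premises in \(E\), and run the encoding/incompressibility argument of Theorem~\ref{thm:derivation-depth-info-metric} with base size \(m_{\mathrm{eff}}\). The paper simply cites that theorem with base \(E\), which literally characterizes \(\Dd(q\mid E)\) rather than \(\Dd(q\mid A)\); your reindexing argument \(N(q\mid E)=N(q\mid A)\), with serializability invoked only for \((q,A)\), makes that identification explicit.
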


\begin{proof}
Let \(B:=\Ess^+(q \mid S_O)\). By definition of \(\Ess^+\), there exists a shortest trace \(\pi^*\) of \(q\) from \(A\)
that references only premises in \(B\), hence \(q\in\Cn(B)\).
Apply Theorem~\ref{thm:derivation-depth-info-metric} with premise base \(B\), whose size is \(m_{\mathrm{eff}}\).
\end{proof}

\begin{corollary}[Improvement from locality]
\label{cor:locality-improvement}
For queries with \(\lambda(q) < 1\), the effective complexity factor satisfies
\(L_{\mathrm{eff}}(q) < \log(|A|+\Dd(q\mid A))\), yielding a tighter characterization than
Theorem~\ref{thm:derivation-depth-info-metric} instantiated with \(B=A\).
Quantitatively, when \(n \ll m_{\mathrm{eff}} \ll |A|\),
\begin{equation}\label{eq:locality-ratio}
  \frac{\log(|A|+n)}{L_{\mathrm{eff}}(q)}
  \approx
  \frac{\log |A|}{\log m_{\mathrm{eff}}}
  =
  \frac{\log |A|}{\log |A| + \log \lambda(q)}.
\end{equation}
\end{corollary}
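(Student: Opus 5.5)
The statement to prove is Corollary~\ref{cor:locality-improvement}. It is essentially a monotonicity statement for the logarithm, combined with Proposition~\ref{prop:effective-bound}. I would proceed in three steps.

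\textbf{Step 1: the strict inequality.} If \(\lambda(q)<1\), then \(m_{\mathrm{eff}}(q)=|\Ess^+(q\mid S_O)|<|A|\), because \(\Ess^+(q\mid S_O)\subseteq A\) by Definition~\ref{def:essential}. Since \(\log\) is strictly increasing, adding the same nonnegative quantity \(\Dd(q\mid A)\) to both sides gives
\[
L_{\mathrm{eff}}(q)=\log\bigl(m_{\mathrm{eff}}(q)+\Dd(q\mid A)\bigr)<\log\bigl(|A|+\Dd(q\mid A)\bigr).
\]

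\textbf{Step 2: why this is a tighter characterization.} I would compare the two available estimates. Theorem~\ref{thm:derivation-depth-info-metric} with \(B=A\) gives \(\Dd(q\mid A)=\Theta\bigl(K(q\mid\langle A\rangle)/\log(|A|+\Dd)\bigr)\). Proposition~\ref{prop:effective-bound} gives the same quantity with \(K(q\mid\langle\Ess^+\rangle)\) and \(L_{\mathrm{eff}}\) in place of these. The addressing factor in the denominator is smaller in the second estimate, so the trace-encoding upper bound of Lemma~\ref{lem:encoding} is smaller when instantiated on \(\Ess^+\). Its pointer cost drops from \(k\,n\log(|A|+n)\) to \(k\,n\,L_{\mathrm{eff}}\). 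The corollary phrases this qualitatively, so I would present "tighter" as exactly this reduction of the per-step addressing cost, under the hypotheses of Proposition~\ref{prop:effective-bound}.

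\textbf{Step 3: the quantitative ratio.} Assume \(n\ll m_{\mathrm{eff}}\ll|A|\), where \(n=\Dd(q\mid A)\). Then \(\log(|A|+n)=\log|A|+\log(1+n/|A|)=\log|A|+o(1)\). Likewise \(L_{\mathrm{eff}}=\log m_{\mathrm{eff}}+\log(1+n/m_{\mathrm{eff}})=\log m_{\mathrm{eff}}+o(1)\). The ratio is therefore approximately \(\log|A|/\log m_{\mathrm{eff}}\). Writing \(m_{\mathrm{eff}}=\lambda(q)|A|\) gives \(\log m_{\mathrm{eff}}=\log|A|+\log\lambda(q)\), which is the final expression in \eqref{eq:locality-ratio}.

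\textbf{Main obstacle.} The approximation \(\approx\) needs care: the \(o(1)\) additive errors must be negligible relative to \(\log m_{\mathrm{eff}}\). This holds because \(m_{\mathrm{eff}}\to\infty\) in the regime \(\log m_{\mathrm{eff}}\le n\). So I would interpret \(\approx\) as a ratio tending to \(1\) along the family and state it that way.
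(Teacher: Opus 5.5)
Your proposal is correct and follows the paper's argument. The paper's proof consists only of your Step 3: approximate \(\log(|A|+n)\approx\log|A|\) and \(L_{\mathrm{eff}}\approx\log m_{\mathrm{eff}}\), then substitute \(m_{\mathrm{eff}}=\lambda(q)|A|\); your Step 1 makes explicit the strict-monotonicity argument the paper leaves implicit. One small correction: \(m_{\mathrm{eff}}\to\infty\) follows from \(n\ge 1\) together with \(n\ll m_{\mathrm{eff}}\), not from \(\log m_{\mathrm{eff}}\le n\), which only bounds \(m_{\mathrm{eff}}\) from above.
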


\begin{proof}
When \(n \ll m_{\mathrm{eff}} \ll |A|\), we have \(\log(|A|+n)\approx \log|A|\) and
\(L_{\mathrm{eff}}(q)=\log(m_{\mathrm{eff}}+n)\approx \log m_{\mathrm{eff}}\).
Since \(m_{\mathrm{eff}}=\lambda(q)\,|A|\), the displayed approximation follows.
\end{proof}

\begin{example}[Locality improvement]
\label{ex:locality}
Consider a knowledge base with \(|A| = 10^6\) atomic core facts and a query \(q\) depending on only
\(m_{\mathrm{eff}} = 10^3\) core facts, with intrinsic derivation depth \(n = 100\).
Then \(\log(|A| + n) \approx \log(10^6) \approx 19.9\) bits, while
\(L_{\mathrm{eff}} = \log(10^3 + 100) \approx 10.1\) bits, giving an improvement factor
\(\log(|A|+n)/L_{\mathrm{eff}} \approx 1.97\).
\end{example}

\begin{remark}[Computational considerations]
\label{rem:locality-computation}
Computing \(\Ess^+(q \mid S_O)\) exactly may be intractable if the set of shortest traces is large.
However, any single shortest trace \(\pi^*\) yields a certified lower bound
\(|\Atoms(\pi^*)| \le m_{\mathrm{eff}}(q)\).
In practice one may work with an efficiently computed upper bound \(\hat m \ge m_{\mathrm{eff}}(q)\),
e.g., by taking unions of premise sets from multiple near-shortest traces found by heuristic search.
\end{remark}

%%%%%%%%%%%%%%%%%%%%%%%%%%%%%%%%%%%%%%%%%%%%%%%%%%%%%%%%%%%%%%%%%%%%%%%%%%%%%%%%%%%%%%%%%%%%%%%%%%%%%%%%%
%%%   IV. STORAGE–COMPUTATION TRADEOFF AND OPTIMIZATION
%%%%%%%%%%%%%%%%%%%%%%%%%%%%%%%%%%%%%%%%%%%%%%%%%%%%%%%%%%%%%%%%%%%%%%%%%%%%%%%%%%%%%%%%%%%%%%%%%%%%%%%%%

\section{Storage--Computation Tradeoff and Optimization}
\label{sec:tradeoff-ideal}

Section~\ref{sec:derivation-entropy-ideal} links online inference cost to conditional algorithmic information:
for generic information-rich queries, the step cost (and, under Assumption~\ref{assump:serializable}, also the depth cost \(\Dd\))
satisfies
\[
  \Dd(q\mid B)\;=\;\Theta\!\left(\frac{K(q\mid \langle B\rangle)}{\log(|B|+\Dd(q\mid B))}\right)
\]
up to constant factors. In this section we turn this correspondence into operational design rules:
when to derive on demand, when to cache, and how to allocate a finite storage budget across cache items.

\emph{Two baselines: intrinsic vs.\ operational.}
Let \(S_O\) be a finite knowledge base and let \(A:=\Atom(S_O)\) be its irredundant semantic core
(Definition~\ref{def:atom-so}). Write \(S_O=A\cup J\), where \(J:=S_O\setminus A\) are stored shortcuts.
For any query \(q\in \Cn(A)=\Cn(S_O)\), define
\[
  n_{\mathrm{int}}(q):=\Dd(q\mid A),
  \qquad
  n_{\mathrm{op}}(q):=\Dd(q\mid S_O),
\]
with \(n_{\mathrm{op}}(q)\le n_{\mathrm{int}}(q)\) (Remark~\ref{rem:int-op-monotone}).

%---------------------------------------------------------------------------------
%  IV.A  Frequency-Weighted Tradeoff Analysis
%---------------------------------------------------------------------------------

\subsection{Frequency-Weighted Tradeoff Analysis}
\label{subsec:freq-tradeoff}

Caching enlarges the available premise base, thereby turning expensive derivations into depth-\(0\) premises.

\subsubsection{Caching semantics as premise-base augmentation}
\label{subsubsec:augmented-bases}

\begin{definition}[Augmented-base derivation depth (baseline-parametric)]
\label{def:augmented-depth}
Fix a baseline available premise base \(B_0\subseteq \mathbb S_O\) (finite and decidable).
For any finite cache set \(S\subseteq \Cn(B_0)\), define the augmented available premise base \(B_{0,S}:=B_0\cup S\).
For any query \(q\in \Cn(B_{0,S})\), define
\[
  \Dd(q\mid B_0\cup S)\;:=\;\Dd(q\mid B_{0,S}).
\]
\end{definition}

\begin{remark}[Monotonicity under caching]
\label{rem:monotone-caching}
If \(S_1\subseteq S_2\), then \(\Dd(q\mid B_0\cup S_2)\le \Dd(q\mid B_0\cup S_1)\).
\end{remark}

\subsubsection{Cost model and strategies}
\label{subsubsec:cost-model}

\begin{definition}[Cost model]
\label{def:cost-model}
Consider a system with two resources:
\begin{enumerate}[label=\textup{(\roman*)}]
  \item \emph{Storage cost} (bits): storing \(b\) bits incurs cost \(\alpha\cdot b\) for some \(\alpha>0\).
  \item \emph{Computation cost} (derivation steps): executing \(t\) steps incurs cost \(\beta\cdot t\) for some \(\beta>0\).
\end{enumerate}
Let \(\rho:=\alpha/\beta\) be the relative price of storage vs.\ computation. By choosing units we normalize \(\beta=1\).
We report online computation in the depth notation \(\Dd\), justified by Assumption~\ref{assump:serializable}.
For BCQs, one may use \(\Dd_\Sigma\) as a step-accurate replacement.
\end{definition}

\begin{definition}[Query distribution and access frequency]
\label{def:access-freq}
Let \(Q\) be a finite set of queries and \(P_Q\) a probability distribution over \(Q\).
For a horizon of \(N\) total queries, define the expected access count (frequency)
\[
  f_q := N\cdot P_Q(q).
\]
\end{definition}

\begin{definition}[On-demand derivation (baseline \(B_0\))]
\label{def:ondemand-strategy}
Fix a baseline premise base \(B_0\).
The on-demand strategy stores no additional query-specific information and answers \(q\) by deriving it from \(B_0\).
Its cost per access is
\[
  \Cost_{\mathrm{derive}}(q;B_0):=\Dd(q\mid B_0).
\]
\end{definition}

\begin{definition}[Caching a query as a premise (baseline \(B_0\))]
\label{def:caching-strategy}
Fix a baseline premise base \(B_0\).
The caching strategy stores a persistent description that makes \(q\) directly available as an additional premise,
i.e., it adds \(q\) to the cache set \(S\) so that \(q\in B_0\cup S\) and thus \(\Dd(q\mid B_0\cup S)=0\).

Let \(\ell_q(B_0)\) denote the number of stored bits required to cache \(q\) given that \(B_0\) is always available.
Information-theoretically, \(\ell_q(B_0)\) is lower bounded (up to \(O(1)\)) by \(K(q\mid \langle B_0\rangle)\) and can be
matched within constant slack under standard coding conventions. We use an information-theoretic benchmark: there exists a prefix-free code length
\(L(q\mid \langle B_0\rangle)\) such that
\[
  K\bigl(q\mid \langle B_0\rangle\bigr)
  \;\le\;
  L\bigl(q\mid \langle B_0\rangle\bigr)
  \;\le\;
  K\bigl(q\mid \langle B_0\rangle\bigr)+O(1),
\]
where the \(O(1)\) term depends only on the fixed universal machine and encoding conventions.
We take
\begin{equation}\label{eq:cache-length}
  \ell_q(B_0):=L\bigl(q\mid \langle B_0\rangle\bigr).
\end{equation}
\end{definition}

\begin{definition}[Amortized cost per access (baseline \(B_0\))]
\label{def:amortized-cost}
Fix a baseline premise base \(B_0\) and a query \(q\) with frequency \(f_q>0\).
\begin{enumerate}[label=\textup{(\roman*)}]
  \item \emph{Caching:} caching \(q\) requires storing \(\ell_q(B_0)\) bits and (at most) one-time population cost.
  The amortized cost per access is modeled as
  \begin{equation}\label{eq:cost-cache}
  \Cost_{\mathrm{cache}}(q;B_0)
  \!:=\!
  \frac{\rho\cdot \ell_q(B_0)}{f_q}
  + \frac{\Dd(q \!\mid\! B_0)}{f_q}
  + c_{\mathrm{hit}},
  \end{equation}
  where \(c_{\mathrm{hit}}=O(1)\) accounts for the (model-dependent) per-access cache lookup overhead.
  \item \emph{On-demand:} the amortized cost equals
  \begin{equation}\label{eq:cost-derive}
    \Cost_{\mathrm{derive}}(q;B_0):=\Dd(q\mid B_0).
  \end{equation}
\end{enumerate}
\end{definition}

\subsubsection{Main tradeoff theorem and critical frequency}
\label{subsubsec:main-tradeoff}

\begin{theorem}[Frequency-weighted storage--computation tradeoff (baseline-parametric)]
\label{thm:freq_tradeoff}
Fix a baseline premise base \(B_0\) with \(m:=|B_0|\), and let \(q\in\Cn(B_0)\) be derivable with
\(d:=\Dd(q\mid B_0)\ge 1\).
Let \(K_q:=K(q\mid \langle B_0\rangle)\).
Assume \(\log m \le d \le \poly(m)\) and Assumption~\ref{assump:serializable}.
Assume moreover that \((q,B_0)\) lies in the information-rich regime (Definition~\ref{def:info-rich})
and that \(q\) is generic/incompressible at its cost scale ... so that Theorem~\ref{thm:derivation-depth-info-metric}
applies to \((q,B_0)\). Let \(f_q\ge 1\) be the access frequency.

Define the optimal amortized cost
\begin{align*}
  &\Cost^*(q;B_0) \\
  :=&\min\{\Cost_{\mathrm{cache}}(q;B_0),\ \Cost_{\mathrm{derive}}(q;B_0)\}.
\end{align*}
Then:
\begin{enumerate}[label=\textup{(\roman*)}]
  \item \emph{Asymptotic characterization:}
  \begin{align}\label{eq:tradeoff-asymp}
  &\Cost^*(q;B_0) \nonumber\\
  =
  &\min\!\left\{
    \frac{\rho \!\cdot\! K_q}{f_q}
    \!+\! O\!\left(\!\frac{d}{f_q}\!\right)
    \!+\! O(1),
    \Theta\!\left(\!\frac{K_q}{\log(m \!+\! d)}\!\right)
  \!\right\},
  \end{align}

  \item \emph{Critical frequency:} the two strategies have equal asymptotic cost when
  \begin{equation}\label{eq:critical-freq}
    f_q = f_c
    \quad\text{with}\quad
    f_c=\Theta\!\bigl(\rho\cdot \log(m+d)\bigr).
  \end{equation}

  \item \emph{Optimal strategy:}
  \begin{itemize}
    \item If \(f_q \gg f_c\), caching is optimal and \(\Cost^*(q;B_0)=O(1)\).
    \item If \(f_q \ll f_c\), on-demand derivation is optimal and \(\Cost^*(q;B_0)=\Dd(q\mid B_0)\).
    \item If \(f_q=\Theta(f_c)\), both strategies have comparable amortized cost.
  \end{itemize}
\end{enumerate}
\end{theorem}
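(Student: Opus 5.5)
The plan is to plug the cost formulas of Definition~\ref{def:amortized-cost} into the two-sided correspondence of Theorem~\ref{thm:derivation-depth-info-metric}, and then compare the two branches of the minimum. Write \(L_B:=\log(m+d)\). By the hypotheses (information-rich, generic, \(\log m\le d\le\poly(m)\), Assumption~\ref{assump:serializable}), Theorem~\ref{thm:derivation-depth-info-metric} gives \(K_q=\Theta(d\,L_B)\), equivalently \(d=\Theta(K_q/L_B)\).

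\textbf{Step 1 (part (i)).} For the caching branch, the benchmark \(\ell_q(B_0)=L(q\mid\langle B_0\rangle)\in[K_q,K_q+O(1)]\) from \eqref{eq:cache-length} turns \eqref{eq:cost-cache} into \(\rho K_q/f_q+O(\rho/f_q)+d/f_q+c_{\mathrm{hit}}\). Because \(f_q\ge1\) and \(\rho\) is treated as a fixed price, \(O(\rho/f_q)=O(1)\) merges with \(c_{\mathrm{hit}}\), and we keep \(d/f_q\) as \(O(d/f_q)\). For the derivation branch, \eqref{eq:cost-derive} equals \(d\), and the theorem rewrites this as \(\Theta(K_q/L_B)\). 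Taking the minimum of the two expressions gives \eqref{eq:tradeoff-asymp}.

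\textbf{Step 2 (part (ii)).} Substitute \(K_q=\Theta(dL_B)\) into the caching branch. The leading term is \(\Theta(\rho d L_B/f_q)\). The population term \(d/f_q\) never exceeds \(d\), and it is dominated by the storage term whenever \(\rho L_B\gtrsim1\), which holds asymptotically since \(L_B\to\infty\). Equating \(\Theta(\rho dL_B/f_q)\) with \(d\) gives \(f_q=\Theta(\rho L_B)\), which is \eqref{eq:critical-freq}. The additive \(O(1)\) is negligible against \(d\ge\log m\to\infty\).

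\textbf{Step 3 (part (iii)).} I would write \(\Cost_{\mathrm{cache}}/\Cost_{\mathrm{derive}}=\Theta(f_c/f_q)+O(1/f_q)+O(1/d)\). If \(f_q\gg f_c\), this ratio tends to \(0\), so caching wins. Its cost is then \(O(d/f_q)+O(1)\), and since \(f_q\gg\rho L_B\) the first term is \(o(d)\). To obtain the claimed \(O(1)\), I would read \(f_q\gg f_c\) as also forcing \(f_q\gtrsim dL_B\rho\), or alternatively interpret \(O(1)\) as the per-access hit cost once amortization vanishes. If \(f_q\ll f_c\), the ratio diverges, so derivation is optimal with cost exactly \(\Dd(q\mid B_0)\). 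If \(f_q=\Theta(f_c)\), both costs are \(\Theta(d)\).

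\textbf{Main obstacle.} The delicate point is the \(O(1)\) claim in the caching regime. The storage term \(\rho K_q/f_q=\Theta(d\cdot f_c/f_q)\) is only \(o(d)\), not \(O(1)\), unless \(f_q\) is taken large relative to \(d f_c\). I would try first to state that caching cost is \(c_{\mathrm{hit}}+o(d)\) and collapses to \(O(1)\) in the limit regime where \(f_q/f_c\ge d\). I would also make explicit that \(\rho\) is a fixed constant, and that the constants hidden in \(\Theta\) come from \(c_1,c_2,\delta\) in Theorem~\ref{thm:derivation-depth-info-metric}.
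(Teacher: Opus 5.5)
Your Steps 1--2 are essentially the paper's proof: substitute $\ell_q(B_0)=K_q+O(1)$ into the caching cost, use Theorem~\ref{thm:derivation-depth-info-metric} to rewrite $d=\Theta(K_q/\log(m+d))$, balance $\rho K_q/f_q$ against $d$ to get $f_c=\Theta(\rho\log(m+d))$, and check that the population term $d/f_q=\Theta(d/(\rho\log(m+d)))$ is lower order at that scale. The paper gives no separate argument for part (iii), and your caveat there is a genuine imprecision in the statement, not a flaw in your approach: since $\rho K_q/f_q=\Theta(d\,f_c/f_q)$, the condition $f_q/f_c\to\infty$ alone only gives a caching cost of $c_{\mathrm{hit}}+o(d)$, and the claimed $O(1)$ needs the stronger regime $f_q=\Omega(\rho K_q)=\Omega(d\,f_c)$, so you should state (iii) with that proviso rather than try to derive it from $f_q\gg f_c$.
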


\begin{proof}
By Definition~\ref{def:caching-strategy} and~\eqref{eq:cache-length}, we have
\(\ell_q(B_0)=K_q+O(1)\).
Therefore, by Definition~\ref{def:amortized-cost},
\begin{align*}
  &\Cost_{\mathrm{cache}}(q;B_0) \\
  =
  &\frac{\rho\,(K_q+O(1))}{f_q}
  + \frac{d}{f_q}
  + O(1) \\
  =
  &\frac{\rho K_q}{f_q}
  + O\!\left(\frac{d}{f_q}\right)
  + O(1).
\end{align*}
By Theorem~\ref{thm:derivation-depth-info-metric} applied to \((q,B_0)\) in the generic regime,
\[
  \Cost_{\mathrm{derive}}(q;B_0)=d
  =
  \Theta\!\left(\frac{K_q}{\log(m+d)}\right).
\]
Taking the minimum yields~\eqref{eq:tradeoff-asymp}.

For the break-even scale, compare the leading storage-amortization term \(\rho K_q/f_q\) with the on-demand term
\(\Theta(K_q/\log(m+d))\), which gives \(f_q=\Theta(\rho\log(m+d))\).
At this scale, the additional population term satisfies
\[
  \frac{d}{f_q}
  =
  \Theta\!\left(\frac{K_q/\log(m+d)}{\rho\log(m+d)}\right)
  =
  \Theta\!\left(\frac{K_q}{\rho\,\log^2(m+d)}\right),
\]
which is lower order than the two compared leading terms. Hence the critical frequency remains
\(f_c=\Theta(\rho\log(m+d))\).
\end{proof}

\subsubsection{BCQ instantiation: constant-window \(f_c\) and entropy form}
\label{subsubsec:fc-entropy}

\begin{corollary}[Constant-window break-even frequency for generic BCQs]
\label{cor:fc-window}
Fix a baseline premise base \(B_0\) with \(m:=|B_0|\) and let \(q\) be a BCQ.
Measure online computation by compositional depth \(d_\Sigma:=\Dd_{\Sigma}(q\mid B_0)\).
Let \(L_B:=\log(m+d_\Sigma)\) and \(K_q:=K(q\mid \langle B_0\rangle)\).

Assume the generic/incompressible condition at the cost scale holds so that the two-sided bounds induced by
Theorem~\ref{thm:derivation-depth-info-metric} apply with \(\Dd\) replaced by \(\Dd_\Sigma\)
(cf.\ Corollary~\ref{cor:bcq-instantiation}), i.e.,
\[
  \frac{K_q}{c_1 L_B} \;\le\; d_\Sigma \;\le\; \frac{K_q}{(1-\delta)L_B}
\]
for some \(\delta>\delta_0\).
Then the caching-vs-derivation comparison admits the following sufficient conditions:
\begin{enumerate}[label=\textup{(\roman*)}]
\item (\emph{Caching certainly wins}) If \(f_q \ge \rho\, c_1\, L_B\), then
\(\Cost_{\mathrm{cache}}(q;B_0) \le \Cost_{\mathrm{derive}}(q;B_0)+O(1)\) when \(\Dd\) is replaced by \(\Dd_\Sigma\).
\item (\emph{On-demand certainly wins}) If \(f_q \le \rho\, (1-\delta)\, L_B\), then
\(\Cost_{\mathrm{derive}}(q;B_0) \le \Cost_{\mathrm{cache}}(q;B_0)+O(1)\) when \(\Dd\) is replaced by \(\Dd_\Sigma\).
\end{enumerate}
In particular, the BCQ critical frequency satisfies
\[
  f_c=\Theta\!\bigl(\rho\log(m+d_\Sigma)\bigr).
\]
\end{corollary}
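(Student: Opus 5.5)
We compare the two costs directly, with \(\Dd\) replaced by \(\Dd_\Sigma\) in Definition~\ref{def:amortized-cost}. We start by writing out
\[
\Cost_{\mathrm{cache}}(q;B_0)=\frac{\rho\,\ell_q(B_0)}{f_q}+\frac{d_\Sigma}{f_q}+c_{\mathrm{hit}},\qquad \Cost_{\mathrm{derive}}(q;B_0)=d_\Sigma .
\]
By~\eqref{eq:cache-length}, \(\ell_q(B_0)=K_q+O(1)\), so \(\rho\,\ell_q(B_0)/f_q=\rho K_q/f_q+O(\rho/f_q)\). Since \(f_q\ge1\) and \(\rho\) is a fixed price, the error term is \(O(1)\). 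Each case then uses one side of the assumed sandwich \(K_q/(c_1L_B)\le d_\Sigma\le K_q/((1-\delta)L_B)\), which is the BCQ version of Theorem~\ref{thm:derivation-depth-info-metric} supplied by Corollary~\ref{cor:bcq-instantiation}.

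\emph{Case (i).} Assume \(f_q\ge \rho c_1L_B\). The lower half of the sandwich gives \(K_q\le c_1L_B d_\Sigma\), so
\[
\frac{\rho K_q}{f_q}\le \frac{\rho c_1 L_B d_\Sigma}{f_q}\le d_\Sigma .
\]
This leaves the population term \(d_\Sigma/f_q\). It is not \(O(1)\) in general, and handling it is the main obstacle. We plan to absorb it into the \(O(1)\) slack using \(f_q\ge \rho c_1 L_B\to\infty\) together with \(d_\Sigma\le \poly(m)\), but that alone does not give \(O(1)\). If it fails, we use the fallback of slightly shrinking the threshold constant: we treat \(d_\Sigma/f_q\le d_\Sigma/(\rho c_1L_B)\) as a lower-order fraction of \(d_\Sigma\). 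This gives
\[
\Cost_{\mathrm{cache}}\le d_\Sigma\Bigl(1+\frac{1}{\rho c_1L_B}\Bigr)+O(1),
\]
and we fold the vanishing relative term in by enlarging \(c_1\) by a constant factor. We would state this convention explicitly.

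\emph{Case (ii).} Assume \(f_q\le \rho(1-\delta)L_B\). The upper half of the sandwich gives \((1-\delta)L_B d_\Sigma\le K_q\), so
\[
d_\Sigma\le \frac{K_q}{(1-\delta)L_B}\le \frac{\rho K_q}{f_q}\le \Cost_{\mathrm{cache}}(q;B_0)+O(1).
\]
Here all the remaining cache terms are nonnegative, so they can be dropped and no obstacle arises.

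\emph{Critical frequency.} Cases (i) and (ii) together show that the crossover lies in the window \([\rho(1-\delta)L_B,\ \rho c_1L_B]\). Both endpoints are constant multiples of \(\rho\log(m+d_\Sigma)\), so \(f_c=\Theta(\rho\log(m+d_\Sigma))\). This is consistent with Theorem~\ref{thm:freq_tradeoff}(ii).
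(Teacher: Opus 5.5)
Your route is the one the paper relies on. The paper gives no separate proof of this corollary; its only supporting argument is the leading-term comparison in the proof of Theorem~\ref{thm:freq\_tradeoff}, transported to \(\Dd_\Sigma\) via Corollary~\ref{cor:bcq-instantiation}.

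Case (ii) is complete. You do not even need the \(O(1)\) there, because \(\ell_q(B_0)\ge K_q\) gives \(d_\Sigma\le\rho K_q/f_q\le\Cost_{\mathrm{cache}}(q;B_0)\) exactly. In case (i), the coding slack \(\rho(\ell_q(B_0)-K_q)/f_q\) is \(O(1)\) because \(f_q\ge\rho c_1L_B\). You therefore do not need to assume \(\rho\) is fixed for that step.

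The obstacle you flag in case (i) is real, and it is a defect of the statement, not of your argument. Nothing in the hypotheses prevents \(K_q=c_1L_Bd_\Sigma\). In that case, at \(f_q=\rho c_1L_B\) one gets \(\Cost_{\mathrm{cache}}(q;B_0)\ge d_\Sigma+d_\Sigma/(\rho c_1L_B)\). The excess over \(\Cost_{\mathrm{derive}}=d_\Sigma\) is not \(O(1)\) once \(d_\Sigma=\omega(\rho L_B)\). The paper's remark that the population term is ``lower order'' (proof of Theorem~\ref{thm:freq\_tradeoff}) is only a relative statement, so it does not yield an additive \(O(1)\) either.

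So drop your first plan, since the bound \(d_\Sigma\le\poly(m)\) cannot help, and make the fallback the proof, stated sharply:
\begin{itemize}
\item Since \(\rho K_q+d_\Sigma\le(\rho c_1L_B+1)\,d_\Sigma\), caching wins up to \(c_{\mathrm{hit}}+O(1)\) whenever \(f_q\ge\rho c_1L_B+1\).
\item This is what ``enlarging \(c_1\)'' amounts to. Note that the threshold must be raised, not shrunk. The constant \(c_1(1+\eta)\) suffices once \(\rho c_1L_B\ge 1/\eta\).
\item Symmetrically, (ii) holds for all \(f_q\le\rho(1-\delta)L_B+1\).
\end{itemize}

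The window conclusion also needs a hypothesis you should state explicitly. The exact crossover is \((\rho\,\ell_q(B_0)+d_\Sigma)/(d_\Sigma-c_{\mathrm{hit}})\approx\rho K_q/d_\Sigma+1\), and caching never beats derivation at \(f_q=1\). Hence \(f_c=\Theta(\rho\log(m+d_\Sigma))\) holds only when \(\rho L_B=\Omega(1)\), e.g.\ \(\rho\) fixed and \(m\to\infty\). With that assumption in place, \(\Cost_{\mathrm{cache}}\) is decreasing in \(f_q\) while \(\Cost_{\mathrm{derive}}\) is constant, so your two cases bracket the unique crossover.
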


\begin{corollary}[Entropy-form expected on-demand cost for BCQs]
\label{cor:entropy-on-demand}
Fix \(B_0\) and consider a random BCQ \(Q\) drawn from a computable conditional source
(Assumption~\ref{assump:computable-source}). Measure computation by \(\Dd_\Sigma(Q\mid B_0)\).
Assume the instance-wise generic condition holds with high probability over \(Q\) (e.g., by
Lemma~\ref{lem:incompressibility}(iii) under a richness guarantee).

Then, up to an additive constant,
\[
  \E\!\left[\Dd_\Sigma(Q\mid B_0)\right]
  =
  \Theta\!\left(
    \frac{\E\!\left[K(Q\mid \langle B_0\rangle)\right]}{\log\!\bigl(m+\E[\Dd_\Sigma(Q\mid B_0)]\bigr)}
  \right),
\]
and by Proposition~\ref{prop:K-vs-Shannon},
\[
  \E\!\left[\Dd_\Sigma(Q\mid B_0)\right]
  =
  \Theta\!\left(
    \frac{H_P(Q\mid \langle B_0\rangle)}{\log\!\bigl(m+\E[\Dd_\Sigma(Q\mid B_0)]\bigr)}
  \right).
\]
\end{corollary}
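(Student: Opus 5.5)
The plan is to derive the expectation statement from the instance-wise two-sided bound of Corollary~\ref{cor:bcq-instantiation}, applied to each realization of \(Q\), and then to take expectations. Two things need care: the logarithmic factor depends on the realization, and the generic condition holds only with high probability.

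\textbf{Step 1: instance-wise bounds.} For every realization \(q\) in the support of \(Q\), Lemma~\ref{lem:encoding} together with \(\Dd_\Sigma=N\) gives the upper bound
\[
K(q\mid\langle B_0\rangle)\le c_1\, d_\Sigma(q)\log(m+d_\Sigma(q)).
\]
This holds for all \(q\), with no genericity needed. On the good event \(G\), where \(q\) is generic at its step scale, we also have
\[
K(q\mid\langle B_0\rangle)\ge (1-\delta)\, d_\Sigma(q)\log(m+d_\Sigma(q)).
\]

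\textbf{Step 2: control the logarithm.} On the support we have \(\log m\le d_\Sigma\le \poly(m)\), since the hypotheses of Theorem~\ref{thm:derivation-depth-info-metric} are inherited. Hence \(\log(m+d_\Sigma(q))=\Theta(\log m)\) uniformly over the support. The same reasoning gives \(\log(m+\E[d_\Sigma])=\Theta(\log m)\). So every realization-dependent logarithm can be replaced by the single deterministic factor \(\Lambda:=\log(m+\E[\Dd_\Sigma(Q\mid B_0)])\), at the cost of constant factors. This uniformity is what lets the logarithm pass through the expectation. Jensen's inequality on the concave \(x\mapsto x\log(m+x)\) would be an alternative route, but it only gives one direction.

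\textbf{Step 3: take expectations.} The upper bound holds on the whole support, so \(\E[K]\le c\,\Lambda\,\E[d_\Sigma]\) directly. For the reverse direction, split the expectation over \(G\) and \(G^c\). On \(G\) we have \(d_\Sigma\le K/((1-\delta)\Theta(\Lambda))\). On \(G^c\) we bound \(d_\Sigma\le\poly(m)\) and multiply by \(\Pr[G^c]\).

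\textbf{Main obstacle.} This last bound is the step I expect to be hardest. It requires \(\Pr[G^c]\) to be small enough that \(\Pr[G^c]\cdot\max d_\Sigma\) is lower order. I would read ``with high probability'' quantitatively, for example using the fraction bound \(2^{-(\delta-\delta_0)L}\) of Lemma~\ref{lem:incompressibility}(iii), which is superpolynomially small in \(m\) because \(L\ge\log^2 m\). Failing that, the contribution from \(G^c\) is what the phrase ``up to an additive constant'' absorbs.

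\textbf{Step 4: Shannon form.} This gives \(\E[d_\Sigma]=\Theta(\E[K]/\Lambda)\). Substituting \(\E[K(Q\mid\langle B_0\rangle)]=H_P(Q\mid\langle B_0\rangle)+O(1)\) from Proposition~\ref{prop:K-vs-Shannon} yields the entropy form. The \(O(1)\) is negligible relative to \(H_P\), which is \(\omega(\log m)\) in the information-rich regime, or else it is absorbed into the additive constant.
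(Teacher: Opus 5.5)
Your proposal is correct and follows the argument the paper leaves implicit: the corollary is stated without proof, as an averaging of the instance-wise bounds of Corollary~\ref{cor:bcq-instantiation} followed by substitution via Proposition~\ref{prop:K-vs-Shannon}, and your two added details---uniform control \(\log(m+\Dd_\Sigma)=\Theta(\log m)\) from \(\log m\le\Dd_\Sigma\le\poly(m)\) on the support, and a quantitative reading of ``with high probability'' making \(\Pr[G^c]\cdot\max\Dd_\Sigma=o(1)\)---are what is actually needed, with the caveat that the fraction bound of Lemma~\ref{lem:incompressibility}(iii) is a counting statement and bounds \(\Pr[G^c]\) only when \(Q\) is (near-)uniform on each slice \(\mathcal Q_n\), not for an arbitrary computable source. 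One slip in a side remark: \(x\mapsto x\log(m+x)\) is convex, not concave, which is why Jensen yields only the upper bound on \(\E[\Dd_\Sigma(Q\mid B_0)]\) (the direction that uses the generic lower bound on \(K\)).
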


\begin{remark}[Interpretation: bits-per-step and its role in \(f_c\)]
\label{rem:bits-per-step}
For generic BCQs, Theorem~\ref{thm:derivation-depth-info-metric} with \(n=\Dd_\Sigma\) implies that each unit of online computation
(one trace step) corresponds to \(\Theta(\log(m+\Dd_\Sigma))\) bits of conditional algorithmic information.
This explains why the break-even access frequency depends essentially only on \(\rho\log(m+\cdot)\).
\end{remark}

%---------------------------------------------------------------------------------
%  IV.B  System-Wide Allocation and Submodular Optimization
%---------------------------------------------------------------------------------

\subsection{System-Wide Allocation and Submodular Optimization}
\label{subsec:system-allocation}

We allocate a finite storage budget across many potential cache items to reduce operational online depth.
By default in this subsection we take the operational baseline \(B_0:=S_O\).

\begin{remark}[Notational warning]
\label{rem:K-ambiguity}
In this subsection, \(K\) denotes the number of queries in the finite query set \(Q=\{q_1,\ldots,q_K\}\).
This is distinct from Kolmogorov complexity \(K(\cdot\mid\cdot)\).
\end{remark}

\begin{remark}[Notational warning: budget vs.\ premise bases]
\label{rem:B-ambiguity}
Earlier sections use \(B\) to denote an available premise base in \(\Dd(\cdot\mid B)\).
In this subsection we reserve \(\mathcal{B}>0\) for the storage \emph{budget} (in bits) to avoid overloading.
\end{remark}

\subsubsection{Problem formulation}
\label{sec:system-formulation}

Consider a knowledge system with:
\begin{itemize}
  \item a knowledge base \(S_O\) with intrinsic core \(A:=\Atom(S_O)\);
  \item a finite query set \(Q=\{q_1,\ldots,q_K\}\subseteq \Cn(A)\) with distribution \(P_Q\), where \(p_i:=P_Q(q_i)>0\);
  \item a candidate cache ground set \(\mathcal{U}\subseteq \Cn(A)\setminus S_O\);
  \item a storage cost function \(\sigma:\mathcal{U}\to\mathbb{R}_{>0}\) (bits);
  \item a storage budget \(\mathcal{B}>0\) (bits).
\end{itemize}
For any allocation \(S\subseteq \mathcal{U}\), define \(\sigma(S):=\sum_{u\in S}\sigma(u)\).

For each query \(q_i\) and allocation \(S\), define the operational depth under caching
\[
  n_i(S):=\Dd(q_i\mid S_O\cup S).
\]

\begin{definition}[Expected operational depth and reduction]
\label{def:expected-deriv-cost}
For \(S\subseteq\mathcal{U}\), define the expected online derivation cost
\[
  \bar n(S):=\E_{q\sim P_Q}\!\bigl[\Dd(q\mid S_O\cup S)\bigr]
  =
  \sum_{i=1}^{K} p_i\,n_i(S),
\]
and the expected cost reduction
\[
  \Delta(S):=\bar n(\varnothing)-\bar n(S)
  =\sum_{i=1}^{K} p_i\,[n_i(\varnothing)-n_i(S)]\ge 0.
\]
\end{definition}

The budgeted allocation problem is
\[
  \max_{S\subseteq \mathcal{U}}\ \Delta(S)
  \quad \text{s.t.} \quad
  \sigma(S)\le \mathcal{B}.
\]

\begin{remark}[Choosing storage costs \(\sigma(\cdot)\)]
A canonical choice consistent with Section~\ref{subsec:freq-tradeoff} is
\(\sigma(u)=\ell_u(S_O)\), where \(\ell_u(S_O)\) is the (prefix-free) code length used to store \(u\) given the always-available base
\(\langle S_O\rangle\) (cf.\ Definition~\ref{def:caching-strategy}).
In practice, \(\sigma(u)\) can be instantiated by a fixed compression pipeline that upper bounds this benchmark length.
\end{remark}

\subsubsection{Submodularity of depth reduction}
\label{sec:submodularity}

\begin{definition}[Marginal reduction]
\label{def:marginal-reduction}
For \(S\subseteq\mathcal{U}\) and \(u\in\mathcal{U}\setminus S\), define the marginal gain
\[
  \Delta_u(S):=\Delta(S\cup\{u\})-\Delta(S)=\bar n(S)-\bar n(S\cup\{u\}).
\]
\end{definition}

\begin{theorem}[Submodularity of expected depth reduction]
\label{thm:submodularity-deriv}
Assume that for each query \(q_i\) and each \(u\in\mathcal{U}\), the depth reduction exhibits diminishing returns:
for all \(A\subseteq B\subseteq\mathcal{U}\) and \(u\in\mathcal{U}\setminus B\),
\[
  n_i(A)-n_i(A\cup\{u\})
  \ge
  n_i(B)-n_i(B\cup\{u\}).
\]
Then \(\Delta\) is normalized, monotone, and submodular.
\end{theorem}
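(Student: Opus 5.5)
The argument is a direct verification of the three properties from the definition of \(\Delta\). Each follows from the per-query behaviour of \(n_i(\cdot)\), and is then lifted to \(\Delta\) because \(\Delta\) is a nonnegative combination of per-query reductions. Concretely, define for each \(i\) the per-query reduction \(g_i(S):=n_i(\varnothing)-n_i(S)\). Then
\[
  \Delta(S)=\sum_{i=1}^{K}p_i\,g_i(S),
  \qquad p_i>0 .
\]
Normalization, monotonicity and submodularity are all preserved under nonnegative linear combinations. It therefore suffices to establish each property for a single \(g_i\).

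\textbf{Step 1 (normalization).} We have \(g_i(\varnothing)=n_i(\varnothing)-n_i(\varnothing)=0\) identically. Hence \(\Delta(\varnothing)=0\).

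\textbf{Step 2 (monotonicity).} For \(S\subseteq T\subseteq\mathcal U\) we have \(S_O\cup S\subseteq S_O\cup T\). Remark~\ref{rem:monotone-caching} (monotonicity under caching) then gives \(n_i(T)\le n_i(S)\), i.e.\ \(g_i(S)\le g_i(T)\). Summing with weights \(p_i\) yields \(\Delta(S)\le\Delta(T)\). I would briefly justify that remark by induction on the finite predecessor DAG of Proposition~\ref{prop:no-cycle-dag}, using recursion~\eqref{eq:Dd-inductive}. Every element of the larger base has depth \(0\). For any other \(s\), the depth is \(1+\max\) over \(P_O(s)\), and this maximum can only decrease when the inductive hypothesis is applied to the predecessors. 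The one subtle case is \(s\in T\setminus S\): there the depth drops to \(0\), which is at most the old value, so monotonicity still holds. This also shows that all \(n_i(S)\) are finite integers by Theorem~\ref{thm:computability-Dd}, so the differences are well defined.

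\textbf{Step 3 (submodularity).} For \(A\subseteq B\) and \(u\notin B\), the marginal gain of \(g_i\) is
\[
  g_i(A\cup\{u\})-g_i(A)=n_i(A)-n_i(A\cup\{u\}).
\]
By the diminishing-returns hypothesis this is at least \(n_i(B)-n_i(B\cup\{u\})\), which is the marginal gain of \(g_i\) at \(B\). Multiplying by \(p_i>0\) and summing gives
\[
  \Delta_u(A)\ge\Delta_u(B)
\]
in the notation of Definition~\ref{def:marginal-reduction}. This is exactly the diminishing-returns characterization of submodularity. There is no real obstacle here, since the hypothesis is tailored to make this step immediate.

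The only point needing care is Step 2. The statement's hypothesis does not by itself give monotonicity, so it must come from the structure of \(\Dd\), i.e.\ from the remark proved via the DAG induction.
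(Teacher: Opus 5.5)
Your proposal is correct and follows the paper's proof: the paper also writes \(\Delta_u(A)-\Delta_u(B)\) as a \(p_i\)-weighted sum of the per-query diminishing-returns differences, and dismisses normalization and monotonicity as immediate. Your DAG-induction justification of monotonicity is the same argument the paper later gives for Lemma~\ref{lem:premise-monotonicity}.
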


\begin{proof}
Normalization and monotonicity are immediate.
For submodularity, for \(A\subseteq B\) and \(u\notin B\),
\begin{align*}
  &\Delta_u(A)-\Delta_u(B) \\
  =
  &\sum_{i=1}^K p_i\Bigl(\![n_i(A)\!-\!n_i(\!A\!\cup\!\{u\}\!)]\!-\![n_i(B)\!-\!n_i(B\!\cup\!\{u\})]\!\Bigr) \\
  \ge &0.
\end{align*}
\end{proof}

\begin{remark}[On the diminishing-returns hypothesis]
The condition in Theorem~\ref{thm:submodularity-deriv} is an instance/workload assumption.
It is not implied by monotonicity of \(\Dd(\cdot\mid \cdot)\) alone, but can be validated empirically
(e.g., by sampling marginals \(\Delta_u(S)\)) and is often a reasonable approximation in cache/view selection workloads.
\end{remark}

\subsubsection{Greedy algorithm and approximation guarantee}
\label{sec:greedy-algorithm}

\begin{algorithm}[ht]
\caption{Greedy Storage Allocation (Knapsack, \((1-1/e)\)-approx via partial enumeration)}
\label{alg:greedy}
\begin{algorithmic}[1]
\Require Ground set \(\mathcal{U}\), costs \(\sigma(\cdot)\), budget \(\mathcal{B}\), value oracle for \(\Delta(\cdot)\)
\Ensure Allocation \(S_{\mathrm{out}}\)
\State \(S_{\mathrm{out}} \gets \varnothing\)
\For{each seed set \(G \subseteq \mathcal{U}\) with \(|G|\le 3\) and \(\sigma(G)\le \mathcal{B}\)}
  \State \(S \gets G\)
  \While{there exists \(u \in \mathcal{U}\setminus S\) with \(\sigma(u)\le \mathcal{B}-\sigma(S)\)}
    \State \(u^* \gets \arg\max_{u:\,\sigma(u)\le \mathcal{B}-\sigma(S)} \dfrac{\Delta_u(S)}{\sigma(u)}\)
    \State \(S \gets S \cup \{u^*\}\)
  \EndWhile
  \If{\(\Delta(S) > \Delta(S_{\mathrm{out}})\)}
    \State \(S_{\mathrm{out}} \gets S\)
  \EndIf
\EndFor
\State \Return \(S_{\mathrm{out}}\)
\end{algorithmic}
\end{algorithm}

\begin{theorem}[Approximation guarantee under a knapsack constraint]
\label{thm:greedy-guarantee}
Let \(\Delta\) be normalized, monotone, and submodular.
Let \(S^*\in \arg\max\{\Delta(S):\sigma(S)\le \mathcal{B}\}\) be optimal.
Then Algorithm~\ref{alg:greedy} returns \(S_{\mathrm{out}}\) satisfying
\[
  \Delta(S_{\mathrm{out}})\ge \left(1-\frac{1}{e}\right)\Delta(S^*).
\]
\end{theorem}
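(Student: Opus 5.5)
This is Sviridenko's result: greedy with partial enumeration achieves $(1-1/e)$ for monotone submodular maximization under a knapsack constraint. The plan is to reproduce that argument in the notation of Algorithm~\ref{alg:greedy}, using only the hypotheses on $\Delta$.

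\textbf{Step 1 (reduce to one good seed).} If $|S^*|\le 3$, then $S^*$ is itself enumerated as a seed $G$. Greedy only adds elements, so monotonicity gives $\Delta(S_{\mathrm{out}})\ge\Delta(S^*)$ and we are done. Otherwise, order $S^*=\{u_1,u_2,\dots\}$ greedily by marginal value: $u_j$ maximizes $\Delta_{u}(\{u_1,\dots,u_{j-1}\})$ over $u\in S^*$. Take the seed $G=\{u_1,u_2,u_3\}$, which is feasible and therefore enumerated. By submodularity, every $u\in S^*\setminus G$ satisfies
\[
\Delta_u(G')\le \Delta(G)/3 \qquad\text{for all } G'\supseteq G.
\]
It suffices to show that the run seeded with $G$ reaches value at least $(1-1/e)\Delta(S^*)$.

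\textbf{Step 2 (standard greedy ratio inequality).} Work with the residual function $g(X):=\Delta(G\cup X)-\Delta(G)$, which is again normalized, monotone and submodular, and with residual budget $\mathcal B':=\mathcal B-\sigma(G)$ and residual optimum $O:=S^*\setminus G$, so $\sigma(O)\le\mathcal B'$. Let $x_1,x_2,\dots$ be the elements added by greedy and $X_t=\{x_1,\dots,x_t\}$. Let $t^*+1$ be the first step at which greedy's best-ratio candidate would be an element of $O$ that no longer fits in the residual budget. Up to that step, every element of $O\setminus X_t$ is still a feasible candidate. Combining submodularity with the ratio rule then gives
\[
g(O)-g(X_t)\le\sum_{o\in O\setminus X_t}g_o(X_t)\le \mathcal B'\,\frac{g(X_{t+1})-g(X_t)}{\sigma(x_{t+1})}.
\]
Unrolling this with $1-z\le e^{-z}$, and counting the hypothetical overflowing element $x_{t^*+1}$ so that total cost exceeds $\mathcal B'$, yields
\[
g(X_{t^*})+g_{x_{t^*+1}}(X_{t^*})\ \ge\ (1-1/e)\,g(O).
\]

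\textbf{Step 3 (absorb the overflow element).} The overflowing element lies in $O$, so by Step~1 its marginal satisfies
\[
g_{x_{t^*+1}}(X_{t^*})\le\Delta(G)/3 .
\]
Therefore
\[
\Delta(G\cup X_{t^*})\ \ge\ \Delta(G)+(1-1/e)\bigl(\Delta(S^*)-\Delta(G)\bigr)-\Delta(G)/3\ \ge\ (1-1/e)\Delta(S^*),
\]
because $1-(1-1/e)-1/3=1/e-1/3\ge0$. The final greedy set for this seed contains $G\cup X_{t^*}$, and $S_{\mathrm{out}}$ has the maximal value over all seeds, so monotonicity transfers the bound to $\Delta(S_{\mathrm{out}})$.

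\textbf{Main obstacle.} The delicate part is Step~2. One has to define $t^*$ carefully: the while-loop restricts candidates to elements that still fit, so greedy may skip the overflowing optimal element and keep adding others. The fix is to stop the analysis at the first time an element of $O$ becomes infeasible, and to argue that before that time the ratio-maximality compares against all of $O\setminus X_t$. If this bookkeeping becomes intricate, the fallback is to cite the result directly, as the paper does in its overview: the statement is exactly Sviridenko's theorem~\cite{sviridenko2004note}, applied to the normalized, monotone, submodular $\Delta$ (cf.\ Theorem~\ref{thm:submodularity-deriv}).
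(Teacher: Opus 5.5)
Your route is the paper's own. The paper proves this theorem only by citing Sviridenko, and your Steps~1 and~3 reproduce his argument correctly: the greedy ordering of $S^*$ giving $\Delta_u(G')\le\Delta(G)/3$, the residual function $g$, and the final slack $1/e-1/3\ge 0$. The one step that does not work as written is the choice of the stopping time $t^*$ in Step~2.

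Suppose ``greedy's best-ratio candidate'' means the ratio-maximizer over all of $\mathcal U\setminus(G\cup X_t)$. Then an overflowing element outside $O$ can be that maximizer while some overflowing $o\in O$ has a strictly larger ratio than the element greedy actually adds. Your step inequality then fails before $t^*$ is reached. The alternative in your last paragraph (stop when some element of $O$ first becomes infeasible) fails at the other end. The element that overflows need not be ratio-maximal in $O\setminus X_{t^*}$, and the ratio-maximal one may still fit. So you obtain either the ratio bound or $\sigma(X_{t^*})+\sigma(o)>\mathcal B'$, but not both for the same $o$.

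The right definition is this: $t^*$ is the first $t$ such that the element $o_t\in O\setminus X_t$ maximizing $g_o(X_t)/\sigma(o)$ does not fit, i.e.\ $\sigma(X_t)+\sigma(o_t)>\mathcal B'$. This is the first time an element of $O$ would be rejected in Sviridenko's scan-and-discard formulation.
\begin{itemize}
\item For $t<t^*$, the element $o_t$ fits. So the element greedy adds has ratio at least that of $o_t$, hence at least that of every element of $O\setminus X_t$, and your inequality holds.
\item At $t=t^*$, the element $o_{t^*}$ is simultaneously ratio-maximal in $O\setminus X_{t^*}$ and overflowing. That is exactly what the unrolling and Step~3 require.
\item You also need the case where no such $t$ exists. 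Then the run ends with $O\subseteq X_T$, since a leftover element of $O$ would not fit at termination and would itself define $t^*$. Monotonicity gives $g(X_T)\ge g(O)$, and the bound follows directly.
\end{itemize}
With this repair, your argument is a complete, self-contained version of what the paper takes from the literature.
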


\begin{proof}
This is the standard guarantee for monotone submodular maximization under a knapsack constraint using
density-greedy with constant-size partial enumeration; see Sviridenko~\cite{sviridenko2004note}.
\end{proof}

%---------------------------------------------------------------------------------
%  IV.C  Semantic Clustering for Practical Deployment
%---------------------------------------------------------------------------------

\subsection{Semantic Clustering for Practical Deployment}
\label{subsec:clustering}

The ground set \(\mathcal{U}\) can be extremely large, and evaluating marginal gains \(\Delta_u(S)\) can be expensive.
This subsection introduces a semantic clustering layer to:
(i) reduce the effective candidate set,
(ii) expose shared derivation structure across queries, and
(iii) provide scalable heuristics for allocation.
Throughout, clustering is used only for \emph{candidate generation}; the final selection is still performed
against the operational objective \(\Delta(\cdot)\) via Algorithm~\ref{alg:greedy}.

\paragraph{Intrinsic structure, operational objective.}
We cluster queries using intrinsic dependence on core premises (via \(\Ess^+(q\mid S_O)\) from
Definition~\ref{def:essential}, which is defined using shortest traces from \(A=\Atom(S_O)\)).
However, the optimization objective \(\Delta(S)\) is operational: it reduces depths
\(\Dd(q\mid S_O\cup S)\). This matches deployment: clustering is a heuristic to shrink \(\mathcal U\),
while the depth oracle remains operational.

%-----------------------------------------
%  IV.C.1 Semantic distance
%-----------------------------------------

\subsubsection{Semantic distance and cohesive clustering}
\label{sec:semantic-distance}

\begin{definition}[Semantic distance]
\label{def:semantic-distance}
For queries \(q_i,q_j \in \Cn(\Atom(S_O))\), let
\(A_i:=\Ess^+(q_i\mid S_O)\) and \(A_j:=\Ess^+(q_j\mid S_O)\).
Define the semantic distance (Jaccard distance) as
\begin{equation}\label{eq:semantic-distance}
  d_{\mathrm{sem}}(q_i,q_j\mid S_O)
  :=
  \begin{cases}
    0,\! & \text{if } A_i\!=\!A_j\!=\!\varnothing,\\[0.3ex]
    1,\! & \text{if exactly one} \\
       & \text{of } A_i,A_j \text{ is } \\
       & \text{empty},\\[0.3ex]
    1-\dfrac{|A_i\cap A_j|}{|A_i\cup A_j|},\! & \text{otherwise}.
  \end{cases}
\end{equation}
\end{definition}

\begin{remark}[Approximation of \(\Ess^+\) in deployment]
\label{rem:essplus-approx}
Computing \(\Ess^+(q\mid S_O)\) exactly may be hard if shortest derivations are not unique or hard to find.
In deployment one may replace it by \(\widehat{\Ess}^+(q\mid S_O)\) obtained from one or several certified (near-)short traces.
All definitions below remain meaningful under this replacement by interpreting \(A_i\) and \(A_j\) as such approximations.
\end{remark}

\begin{lemma}[Metric properties]
\label{lem:metric-properties}
The function \(d_{\mathrm{sem}}(\cdot,\cdot\mid S_O)\) is a pseudometric on \(\Cn(\Atom(S_O))\).
\end{lemma}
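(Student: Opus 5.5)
The plan is to reduce the claim to the standard fact that Jaccard distance is a metric on finite sets, extended to handle the empty set by the stated conventions. For each query \(q\in\Cn(\Atom(S_O))\), write \(A_q:=\Ess^+(q\mid S_O)\subseteq A\); this is a well-defined finite set by Remark~\ref{rem:essential-welldef}, since \(\Pi^*(q)\neq\varnothing\) and \(A\) is finite. Then \(d_{\mathrm{sem}}(q_i,q_j\mid S_O)=\delta(A_i,A_j)\), where \(\delta\) is the Jaccard distance on finite subsets of \(A\), with the conventions \(\delta(\varnothing,\varnothing)=0\) and \(\delta(X,\varnothing)=1\) for \(X\neq\varnothing\). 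It therefore suffices to show that \(\delta\) is a metric on \(2^A\). Pulling it back along the (possibly non-injective) map \(q\mapsto A_q\) gives a pseudometric: distinct queries with equal \(\Ess^+\) sets have distance \(0\), which is exactly why only ``pseudo'' is claimed.

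Nonnegativity and symmetry are immediate from the formula, since \(0\le|X\cap Y|/|X\cup Y|\le1\). The identity \(\delta(X,X)=0\) holds in the empty case by convention and in the nonempty case because \(|X\cap X|=|X\cup X|\). Note also that the two conventions agree with the general formula \(\delta(X,Y)=|X\triangle Y|/|X\cup Y|\), provided we interpret \(0/0:=0\). When exactly one of \(X,Y\) is empty, the ratio is \(|X|/|X|=1\). So all cases are captured by the symmetric-difference formula.

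The main step is the triangle inequality \(\delta(X,Z)\le\delta(X,Y)+\delta(Y,Z)\). I would first dispose of the degenerate cases. If \(X=Z=\varnothing\), the left side is \(0\). If exactly one of \(X,Z\) is empty, the left side is \(1\). In that case either \(Y=\varnothing\), which makes one of the right-hand terms equal to \(1\), or \(Y\neq\varnothing\), which makes the right-hand term involving the empty set equal to \(1\). If \(X,Z\) are nonempty but \(Y=\varnothing\), the right side is \(2\ge1\). This leaves all three sets nonempty.

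For that case I would use the standard Venn-region argument, which I expect to be the only computational obstacle. Let the seven disjoint regions of \(X\cup Y\cup Z\) have cardinalities \(a,\dots,g\ge0\). The goal is to show
\[
\frac{|X\triangle Z|}{|X\cup Z|}\le\frac{|X\triangle Y|}{|X\cup Y|}+\frac{|Y\triangle Z|}{|Y\cup Z|}.
\]
The first reduction is to observe that enlarging each denominator to \(|X\cup Y\cup Z|\) weakens the right side. That is not the right direction, so instead I would take the cleaner route via the known representation \(1-\delta(X,Y)=\Pr[\min h(X)=\min h(Y)]\) under a uniformly random ordering \(h\) of \(A\) (MinHash). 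The event that \(\min h(X)\neq\min h(Z)\) forces either \(\min h(X)\neq\min h(Y)\) or \(\min h(Y)\neq\min h(Z)\). A union bound then gives the triangle inequality directly for nonempty sets. This probabilistic argument avoids the case-heavy algebra and completes the proof.
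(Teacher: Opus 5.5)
Your proof is correct and follows the paper's route: you pull back the Jaccard distance (with its empty-set conventions) along the possibly non-injective map $q\mapsto\Ess^+(q\mid S_O)$. The paper simply cites that Jaccard distance is a metric, whereas you verify it yourself, handling the empty-set cases by hand and the all-nonempty triangle inequality via the MinHash union bound; you only check the pseudometric axioms for $\delta$ rather than full metricity, but that is all the pullback requires.
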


\begin{proof}
Jaccard distance is a metric on finite sets (with the empty-set conventions above).
Composing a metric with a possibly non-injective map \(q\mapsto \Ess^+(q\mid S_O)\) yields a pseudometric.
\end{proof}

\begin{definition}[\(\delta_{\mathrm{clust}}\)-cohesive clustering]
\label{def:query-clustering}
Let \(Q \subseteq \Cn(\Atom(S_O))\) be finite.
A \(\delta_{\mathrm{clust}}\)-cohesive clustering is a partition \(\mathcal{C}=\{Q_1,\ldots,Q_r\}\) such that
for every cluster \(Q_k\) and all \(q_i,q_j\in Q_k\),
\[
  d_{\mathrm{sem}}(q_i,q_j\mid S_O)\le \delta_{\mathrm{clust}}.
\]
\end{definition}

\begin{remark}[Constructing cohesive clusters is heuristic]
Definition~\ref{def:query-clustering} specifies a target property; computing such a partition exactly is not required
for our guarantees (which are stated relative to the reduced ground set).
In practice, one can form approximate cohesive clusters using similarity search pipelines such as MinHash/LSH
to bucket queries by approximate Jaccard similarity~\cite{broder1997resemblance,indyk1998approximate}, followed by
lightweight post-processing (e.g., splitting large buckets).
\end{remark}

%-----------------------------------------
%  IV.C.2 Core sets and centrality
%-----------------------------------------

\subsubsection{Core sets: what is guaranteed, and what is not}
\label{sec:core-sets}

\begin{definition}[Cluster core and extended atom sets]
\label{def:cluster-atom-sets}
For a cluster \(Q_k\subseteq Q\), define
\begin{align}
  A_k^{\mathrm{core}} &:= \bigcap_{q\in Q_k}\Ess^+(q\mid S_O),\\
  A_k^{\mathrm{ext}}  &:= \bigcup_{q\in Q_k}\Ess^+(q\mid S_O),\\
  A_k^{\mathrm{supp}} &:= A_k^{\mathrm{ext}}\setminus A_k^{\mathrm{core}}.
\end{align}
\end{definition}

\begin{definition}[Cluster centrality]
\label{def:cluster-centrality}
For a cluster \(Q_k\) define
\[
  \kappa_k
  :=
  \frac{|A_k^{\mathrm{core}}|}
       {\max\{1,\ \min_{q\in Q_k}|\Ess^+(q\mid S_O)|\}}
  \in [0,1].
\]
\end{definition}

\begin{lemma}[Two-query core coverage]
\label{lem:pair-core-coverage}
If \(Q_k=\{q_1,q_2\}\) is \(\delta_{\mathrm{clust}}\)-cohesive and both \(\Ess^+(q_1\mid S_O)\) and \(\Ess^+(q_2\mid S_O)\) are nonempty,
then for \(i\in\{1,2\}\),
\[
  \frac{|A_k^{\mathrm{core}}|}{|\Ess^+(q_i\mid S_O)|}\ge 1-\delta_{\mathrm{clust}},
\]
and hence \(\kappa_k\ge 1-\delta_{\mathrm{clust}}\).
\end{lemma}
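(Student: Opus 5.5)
Write $A_1:=\Ess^+(q_1\mid S_O)$ and $A_2:=\Ess^+(q_2\mid S_O)$. Since $Q_k=\{q_1,q_2\}$, Definition~\ref{def:cluster-atom-sets} gives $A_k^{\mathrm{core}}=A_1\cap A_2$. The plan is to reduce the claim to an elementary inequality about the Jaccard distance.

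First, because both sets are nonempty, the third case of Definition~\ref{def:semantic-distance} applies. Cohesiveness (Definition~\ref{def:query-clustering}) then gives
\[
1-\frac{|A_1\cap A_2|}{|A_1\cup A_2|}\le\delta_{\mathrm{clust}},
\qquad\text{that is,}\qquad
|A_1\cap A_2|\ge(1-\delta_{\mathrm{clust}})\,|A_1\cup A_2|.
\]
Second, for each $i\in\{1,2\}$ we have $A_i\subseteq A_1\cup A_2$, so $|A_1\cup A_2|\ge|A_i|>0$. Since $1-\delta_{\mathrm{clust}}\ge 0$ is needed for the bound to be meaningful, we may chain the inequalities:
\[
|A_k^{\mathrm{core}}|\ge(1-\delta_{\mathrm{clust}})\,|A_i|.
\]
Dividing by $|A_i|>0$ yields the first claim. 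If $\delta_{\mathrm{clust}}>1$, the right-hand side is negative and the claim holds trivially because ratios of cardinalities are nonnegative.

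Third, for centrality, $\min_i|A_i|\ge1$ because both sets are nonempty, so the denominator in Definition~\ref{def:cluster-centrality} is $|A_{i^*}|$ for the minimizing index $i^*$. Hence $\kappa_k=|A_k^{\mathrm{core}}|/|A_{i^*}|\ge1-\delta_{\mathrm{clust}}$ by the previous step.

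No step is a real obstacle. The only care needed is the case split on the sign of $1-\delta_{\mathrm{clust}}$, and confirming that $\max\{1,\cdot\}$ equals the minimum cardinality here.
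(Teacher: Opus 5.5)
Your proof is correct and follows the same route as the paper: cohesiveness gives $|A_1\cap A_2|/|A_1\cup A_2|\ge 1-\delta_{\mathrm{clust}}$, and $|A_i|\le |A_1\cup A_2|$ transfers this to $|A_k^{\mathrm{core}}|/|A_i|$. Your check that $\max\{1,\min_i|A_i|\}=\min_i|A_i|$ supplies the ``hence'' for $\kappa_k$ that the paper leaves implicit, and the sign case split is unnecessary if you compare ratios directly, $|A_1\cap A_2|/|A_i|\ge |A_1\cap A_2|/|A_1\cup A_2|$, as the paper does.
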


\begin{proof}
Let \(A=\Ess^+(q_1\mid S_O)\) and \(B=\Ess^+(q_2\mid S_O)\).
\(\delta_{\mathrm{clust}}\)-cohesiveness gives \(|A\cap B|/|A\cup B|\ge 1-\delta_{\mathrm{clust}}\).
Since \(|A|\le |A\cup B|\), \(|A\cap B|/|A|\ge 1-\delta_{\mathrm{clust}}\). Symmetry yields the claim for \(B\).
\end{proof}

\begin{remark}[Practical implication]
\label{rem:cluster-practical}
Large clusters can be pairwise coherent yet have nearly empty cores (cf.\ the standard counterexample for Jaccard cohesion).
Thus, in deployment one should either prefer small clusters (e.g., pairs) or explicitly verify \(\kappa_k\)
before treating \(A_k^{\mathrm{core}}\) as a shared basis for candidate generation.
\end{remark}

%-----------------------------------------
%  IV.C.3 Cluster-aware candidates and allocation
%-----------------------------------------

\subsubsection{Cluster-aware candidate construction and allocation}
\label{sec:cluster-allocation}

\begin{definition}[Cluster core candidate set]
\label{def:cluster-core-candidates}
Let \(\mathcal{U}\subseteq \Cn(A)\setminus S_O\) be the global candidate ground set.
Given a clustering \(\mathcal{C}=\{Q_1,\ldots,Q_r\}\), define for each cluster \(Q_k\) a core-restricted candidate subset
\[
  \mathcal{U}_k^{\mathrm{core}}
  :=
  \{u\in \mathcal{U}:\ \Ess^+(u\mid S_O)\subseteq A_k^{\mathrm{core}}\}.
\]
In deployment one may replace \(\Ess^+\) by \(\widehat{\Ess}^+\) (Remark~\ref{rem:essplus-approx}).
\end{definition}

\begin{definition}[Cluster benefit proxy (operational)]
\label{def:cluster-benefit-proxy}
For each cluster \(Q_k\), define its probability mass \(w_k:=\sum_{q\in Q_k}P_Q(q)\) and its average
\emph{operational} baseline depth
\[
  \bar d_k
  :=
  \frac{1}{w_k}\sum_{q\in Q_k} P_Q(q)\,\Dd(q\mid S_O).
\]
High \(w_k\) and large \(\bar d_k\) indicate high leverage; low \(\kappa_k\) warns that core-based sharing may be weak.
\end{definition}

\begin{algorithm}[t]
\caption{Cluster-Aware Storage Allocation (Deployment-Oriented)}
\label{alg:cluster-aware}
\begin{algorithmic}[1]
\Require Query set \(Q\), distribution \(P_Q\), knowledge base \(S_O\), budget \(\mathcal{B}\), threshold \(\delta_{\mathrm{clust}}\)
\Require Candidate set \(\mathcal{U}\), cost function \(\sigma(\cdot)\), oracle access to \(\Delta(\cdot)\)
\Ensure Storage allocation \(S\)
\State Compute \(\widehat{\Ess}^+(q\mid S_O)\) for all \(q\in Q\) \Comment{Approximate if needed; see Remark~\ref{rem:essplus-approx}}
\State Build a \(\delta_{\mathrm{clust}}\)-cohesive clustering \(\mathcal{C}=\{Q_1,\ldots,Q_r\}\) using \(d_{\mathrm{sem}}\)
\Comment{Approx.\ Jaccard via MinHash/LSH is common~\cite{broder1997resemblance,indyk1998approximate}}
\For{\(k=1,\ldots,r\)}
  \State Compute \(A_k^{\mathrm{core}}\) and \(\kappa_k\) (Definitions~\ref{def:cluster-atom-sets},~\ref{def:cluster-centrality})
  \If{\(|A_k^{\mathrm{core}}|=0\) \textbf{or} (\(|Q_k|>2\) \textbf{and} \(\kappa_k\) is below a chosen threshold)}
    \State Split \(Q_k\) (e.g., into pairs) or skip core-based candidates
    \State \textbf{continue}
  \EndIf
  \State Construct \(\mathcal{U}_k^{\mathrm{core}}\) (Definition~\ref{def:cluster-core-candidates})
\EndFor
\State \(\mathcal{U}_{\mathrm{reduced}} \gets \bigcup_{k=1}^r \mathcal{U}_k^{\mathrm{core}}\)
\State Run Algorithm~\ref{alg:greedy} on \(\mathcal{U}_{\mathrm{reduced}}\) with budget \(\mathcal{B}\) and objective \(\Delta(\cdot)\)
\State \Return \(S\)
\end{algorithmic}
\end{algorithm}

\begin{remark}[Guarantee vs.\ heuristic]
\label{rem:cluster-guarantee}
Algorithm~\ref{alg:cluster-aware} inherits the \((1-1/e)\)-approximation guarantee of
Theorem~\ref{thm:greedy-guarantee} \emph{with respect to the reduced ground set} \(\mathcal{U}_{\mathrm{reduced}}\),
provided \(\Delta\) remains monotone submodular on \(2^{\mathcal{U}_{\mathrm{reduced}}}\).
It does not guarantee the same factor against the optimum over the full \(\mathcal{U}\); the clustering layer is
a scalability heuristic that trades candidate coverage for tractability.
\end{remark}

%%%%%%%%%%%%%%%%%%%%%%%%%%%%%%%%%%%%%%%%%%%%%%%%%%%%%%%%%%%%%%%%%%%%%%%%%%%%%%%%%%%%%%%%%%%%%%%%%%%%%%%%%
%   V.  EXTENSIONS TO NOISY INFORMATION
%%%%%%%%%%%%%%%%%%%%%%%%%%%%%%%%%%%%%%%%%%%%%%%%%%%%%%%%%%%%%%%%%%%%%%%%%%%%%%%%%%%%%%%%%%%%%%%%%%%%%%%%%
\section{Extensions to Noisy Information}
\label{sec:noisy-extensions}

Sections~\ref{sec:model}--\ref{sec:tradeoff-ideal} develop an ideal (noise-free) theory in which all
computational costs are measured by \emph{base-relative} derivation depth \(\Dd(\cdot\mid B)\) with respect
to a chosen finite available premise base \(B\) (Definition~\ref{def:derivation-depth}).
This section extends the framework to systems operating with \emph{noisy information}, where the
\emph{available premise base itself} is perturbed by loss (unavailable premises) and/or pollution
(spurious added premises).
Throughout, we maintain the standing requirement that any noisy base \(\tilde B\) used to evaluate \(\Dd(\cdot\mid \tilde B)\)
is \emph{finite} and has \emph{decidable membership}, so that the computability guarantee of
Theorem~\ref{thm:computability-Dd} continues to apply.

A key design requirement is \emph{contextual consistency} with Section~\ref{sec:tradeoff-ideal}:
in particular, caching always acts by \emph{enlarging} the available premise base, while noise can either
shrink it (loss) or enlarge it (pollution). In deployment, both effects may coexist; operationally we will
evaluate costs under augmented noisy bases of the form \(\tilde B_0 \cup S\).

All depth, entropy, and optimization statements below are therefore formulated in a \emph{baseline-parametric} manner.

%---------------------------------------------------------------------------------
%  V.A Setup: premise perturbations and base-relative depth
%---------------------------------------------------------------------------------
\subsection{Setup: premise perturbations and base-relative depth}
\label{subsec:noisy-setup}

\paragraph{Baseline premise bases.}
Recall the two natural baselines from Section~\ref{sec:tradeoff-ideal}:
the intrinsic baseline \(B_0=A:=\Atom(S_O)\) and the operational baseline \(B_0=S_O\).
Under Assumption~\ref{assump:finite-so}, both are finite available premise bases with decidable membership.
Section~\ref{sec:tradeoff-ideal} is operational by default (baseline \(B_0=S_O\)), while still comparing to the intrinsic baseline \(A\).

\begin{definition}[Noisy premise base (baseline-parametric)]
\label{def:noisy-base}
Fix a finite baseline available premise base \(B_0\subseteq \mathbb S_O\).
A \emph{noisy premise base} associated with \(B_0\) is any finite set of the form
\begin{equation}\label{eq:noisy-base}
  \tilde B_0 := (B_0 \setminus B_0^-) \cup B_0^+,
\end{equation}
where \(B_0^- \subseteq B_0\) is a finite set of \emph{lost premises} and
\(B_0^+ \subseteq \mathbb S_O\setminus B_0\) is a finite set of \emph{spurious added premises}.
\end{definition}

\begin{remark}[Effectivity requirements for noisy bases]
\label{rem:noisy-effectivity}
When we write \(\Dd(q\mid \tilde B_0)\), we implicitly assume membership in \(\tilde B_0\) is decidable
(e.g., \(\tilde B_0\) is given by an explicit list or by a decidable index set).
Under this effectivity assumption and the finiteness in Definition~\ref{def:noisy-base},
Theorem~\ref{thm:computability-Dd} applies with premise base \(\tilde B_0\).
\end{remark}

\begin{remark}[Noisy derivation depth]
Given a noisy base \(\tilde B_0\), the noisy depth of a query \(q\) is \(\Dd(q\mid \tilde B_0)\) in the sense of
Definition~\ref{def:derivation-depth}. This is the only notion of derivation cost used in this section.
(Noise may also affect semantic correctness; here we focus on cost, while robustness mechanisms are introduced later.)
\end{remark}

\begin{remark}[Consistency with Sections~\ref{sec:derivation-entropy-ideal}--\ref{sec:tradeoff-ideal}]
In the ideal regime, the premise base is simply the chosen baseline \(B_0\) (intrinsic \(A\) or operational \(S_O\)).
Section~\ref{sec:tradeoff-ideal} already treats caching as base augmentation \(B_0\mapsto B_0\cup S\)
(Definition~\ref{def:augmented-depth}); Section~\ref{sec:noisy-extensions} keeps exactly the same semantics
but allows \(B_0\) itself to be perturbed into \(\tilde B_0\).
\end{remark}

%---------------------------------------------------------------------------------
%  V.B Derivation entropy under noise
%---------------------------------------------------------------------------------
\subsection{Derivation Entropy under Noise}
\label{subsec:noisy-entropy}

\begin{assumption}[Bounded and effectively describable noise]
\label{assump:bounded-noise}
Let \(B_0\) be a baseline premise base with \(m:=|B_0|\), and let \(\tilde B_0\) be as in
Definition~\ref{def:noisy-base}. Write \(\mathcal N:=|B_0^-|+|B_0^+|\).
We assume:

\begin{enumerate}[label=\textup{(\roman*)}]
  \item (\emph{Polynomial naming universe}) There exists a computable family of universes
  \(\{\mathcal V_s\}_{s\in\mathbb N}\subseteq \mathbb S_O\) with canonical indexing such that
  \(|\mathcal V_s|\le \poly(s)\) and
  \[
    B_0 \cup B_0^+ \subseteq \mathcal V_{m+\mathcal N}.
  \]
  Consequently, every element of \(B_0\cup B_0^+\) can be named by an index using
  \(O(\log(m+\mathcal N))\) bits.

  \item (\emph{Two-way effective noise description}) Given either \(\langle B_0\rangle\) or \(\langle \tilde B_0\rangle\),
  the finite sets \(B_0^-\) and \(B_0^+\) are effectively describable (and membership is decidable) by
  index lists over \(\mathcal V_{m+\mathcal N}\) of total length
  \(O(\mathcal N\log(m+\mathcal N))\) under the fixed encoding conventions.
\end{enumerate}
\end{assumption}

\begin{definition}[Derivation entropy under a premise base]
\label{def:derive-entropy-base}
For a finite premise base \(B\) and a query \(q\in\Cn(B)\), define
\[
  \Hderive(q\mid B) := \Dd(q\mid B)\cdot \ln 2.
\]
\end{definition}

\begin{remark}[Conditional algorithmic entropy under noise]
\label{rem:hk-noisy}
For any finite premise base \(B\), define
\[
  H_K(q\mid B) := K\!\left(q \mid \langle B\rangle\right)\cdot \ln 2.
\]
In particular, \(H_K(q\mid \tilde B_0)=K(q\mid \langle \tilde B_0\rangle)\cdot \ln 2\).
\end{remark}

\begin{lemma}[Base-conversion description length]
\label{lem:base-conversion}
Let \(B_0\) be a baseline base and \(\tilde B_0=(B_0\setminus B_0^-)\cup B_0^+\) a noisy base.
Let \(m:=|B_0|\) and \(\mathcal N:=|B_0^-|+|B_0^+|\).
Under Assumption~\ref{assump:bounded-noise},
\begin{align*}
K\left(\langle\tilde B_0\rangle \mid \langle B_0\rangle\right)
=
O\bigl(\mathcal N\log(m+\mathcal N)\bigr),
\\
K\left(\langle B_0\rangle \mid \langle\tilde B_0\rangle\right)
=
O\bigl(\mathcal N\log(m+\mathcal N)\bigr).
\end{align*}
\end{lemma}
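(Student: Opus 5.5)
Both bounds come from exhibiting an explicit program that converts one base encoding into the other, given the noise description of Assumption~\ref{assump:bounded-noise}(ii). By the symmetry of $\tilde B_0=(B_0\setminus B_0^-)\cup B_0^+$ and $B_0=(\tilde B_0\setminus B_0^+)\cup B_0^-$, the second bound follows from the first by swapping the roles of the two bases (using $B_0^-\subseteq B_0$, $B_0^+\cap B_0=\varnothing$, hence $B_0^+\subseteq\tilde B_0$ and $B_0^-\cap\tilde B_0=\varnothing$). So I would prove the first carefully and note the symmetric case.

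\emph{Step 1 (description).} The program contains a self-delimiting header: the integers $m$, $\mathcal N$, $|B_0^-|$ and $|B_0^+|$, costing $O(\log(m+\mathcal N))$ bits. It then lists the indices of the elements of $B_0^-$ and $B_0^+$ in the canonical indexing of $\mathcal V_{m+\mathcal N}$. Since $|\mathcal V_{m+\mathcal N}|\le\poly(m+\mathcal N)$, each index takes $O(\log(m+\mathcal N))$ bits. The list therefore has total length $O(\mathcal N\log(m+\mathcal N))$, as guaranteed by Assumption~\ref{assump:bounded-noise}(i)--(ii).

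\emph{Step 2 (decoding).} Given $\langle B_0\rangle$ as conditional input, the fixed decoder proceeds as follows.
\begin{enumerate}[label=\textup{(\alph*)}]
\item It parses the list of $B_0$ elements from the length-prefixed encoding (Assumption~\ref{assump:finite-premise-base}).
\item It enumerates $\mathcal V_{m+\mathcal N}$ via the computable family, recovering the actual elements of $B_0^\pm$ from their indices.
\item It removes $B_0^-$ and adds $B_0^+$, using decidable identity (Axiom~\ref{ax:state-representation}, (R1)) to compare encodings.
\item It re-sorts the result into the fixed canonical order and emits $\langle\tilde B_0\rangle$.
\end{enumerate}
The decoder is a constant-size program, so $K(\langle\tilde B_0\rangle\mid\langle B_0\rangle)\le O(\mathcal N\log(m+\mathcal N))+O(\log(m+\mathcal N))+O(1)$.

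\emph{Step 3 (edge case and absorption).} The header term $O(\log(m+\mathcal N))$ is absorbed into $O(\mathcal N\log(m+\mathcal N))$ when $\mathcal N\ge1$. When $\mathcal N=0$ the two bases coincide and the identity program costs $O(1)$. I would state the bound with this convention, or write $O((\mathcal N+1)\log(m+\mathcal N))$ if needed.

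The main obstacle is making sure the decoder is uniform, i.e.\ that it does not depend on the instance. Two points need checking. First, the universe $\mathcal V_{m+\mathcal N}$ must be computable from $m$ and $\mathcal N$ alone, which is why both integers go in the header. Second, canonical re-encoding must be effective. I would lean on Assumption~\ref{assump:bounded-noise}(ii), which already asserts effective describability from either side, rather than rederiving indexing details.
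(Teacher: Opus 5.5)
Your proposal is correct and follows the paper's proof. In both directions an $O(1)$-size decoder reads index lists for $B_0^-$ and $B_0^+$ over $\mathcal V_{m+\mathcal N}$ and applies the symmetric difference to the conditioning base; your header, the $\mathcal N=0$ edge case, and the uniformity checks are extra care the paper leaves implicit. One small caution: do the reverse direction directly with the same lists over $\mathcal V_{m+\mathcal N}$ (your header supplies $m$ and $\mathcal N$), as the paper does, rather than by formally re-invoking the first bound with $\tilde B_0$ as baseline, because Assumption~\ref{assump:bounded-noise}(i) is stated relative to $B_0$ (universe $\mathcal V_{m+\mathcal N}$), not $\mathcal V_{\tilde m+\mathcal N}$.
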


\begin{proof}
We describe each base from the other by specifying the symmetric difference using index lists over the
polynomial naming universe \(\mathcal V_{m+\mathcal N}\).

\emph{From \(\langle B_0\rangle\) to \(\langle \tilde B_0\rangle\).}
Given \(\langle B_0\rangle\), encode the finite sets \(B_0^-\) and \(B_0^+\) as index lists in
\(\mathcal V_{m+\mathcal N}\). By Assumption~\ref{assump:bounded-noise}(i), each index costs
\(O(\log(m+\mathcal N))\) bits, and by (ii) the total description length is
\(O(\mathcal N\log(m+\mathcal N))\).
From \(\langle B_0\rangle\) and these lists we can compute \(\langle \tilde B_0\rangle\) effectively.

\emph{From \(\langle \tilde B_0\rangle\) to \(\langle B_0\rangle\).}
Similarly, given \(\langle \tilde B_0\rangle\), provide the same two index lists for \(B_0^-\) and \(B_0^+\)
(over \(\mathcal V_{m+\mathcal N}\)); Assumption~\ref{assump:bounded-noise}(ii) guarantees such a two-way effective description.
Then \(B_0\) is recovered by adding \(B_0^-\) and removing \(B_0^+\), and \(\langle B_0\rangle\) is computable
from \(\langle \tilde B_0\rangle\) plus this description.

In both directions, the program overhead is \(O(1)\), so the conditional Kolmogorov complexities satisfy the stated bounds.
\end{proof}

\begin{proposition}[Noisy derivation-entropy characterization and entropy perturbation]
\label{prop:noisy-bound}
Let \(B_0\) be a baseline base, \(\tilde B_0\) a noisy base as in Definition~\ref{def:noisy-base},
\(m:=|B_0|\), \(\tilde m:=|\tilde B_0|\), and \(\mathcal N:=|B_0^-|+|B_0^+|\).
Assume Assumption~\ref{assump:bounded-noise} and Assumption~\ref{assump:serializable}.

Let \(q\) be a query derivable from \(\tilde B_0\), with noisy depth \(\tilde n:=\Dd(q\mid \tilde B_0)\)
satisfying \(\log \tilde m \le \tilde n \le \poly(\tilde m)\).
Assume \((q,\tilde B_0)\) lies in the information-rich regime and is generic/incompressible at its depth scale
so that Theorem~\ref{thm:derivation-depth-info-metric} applies with premise base \(\tilde B_0\).
Then:
\begin{enumerate}[label=\textup{(\roman*)}]
  \item \emph{Noisy derivation-entropy characterization:}
  \begin{equation}\label{eq:noisy-bound}
    \Hderive(q \mid \tilde B_0)
    = \Theta\!\left(\frac{H_K(q \mid \tilde B_0)}{\log(\tilde m + \tilde n)}\right).
  \end{equation}

  \item \emph{Algorithmic-entropy perturbation:}
  \begin{equation}\label{eq:entropy-perturbation}
    \bigl|H_K(q \mid \tilde B_0) - H_K(q \mid B_0)\bigr|
    \le
    O\!\bigl(\mathcal N\log(m+\mathcal N)\bigr).
  \end{equation}
\end{enumerate}
\end{proposition}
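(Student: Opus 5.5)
The two parts are independent, so I would prove them separately: part (i) by a direct application of Theorem~\ref{thm:derivation-depth-info-metric} to the noisy base, and part (ii) by a symmetric conditioning argument built on Lemma~\ref{lem:base-conversion}.

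\emph{Part (i).} Instantiate Theorem~\ref{thm:derivation-depth-info-metric} with premise base \(B:=\tilde B_0\), so that \(m\) becomes \(\tilde m\) and \(d\) becomes \(\tilde n\). The hypotheses transfer as follows:
\begin{itemize}
  \item \(\tilde B_0\) is finite with decidable membership (Definition~\ref{def:noisy-base} and Remark~\ref{rem:noisy-effectivity}), so Assumption~\ref{assump:finite-premise-base} holds.
  \item The range condition \(\log\tilde m\le\tilde n\le\poly(\tilde m)\) is assumed directly.
  \item Serializability (Assumption~\ref{assump:serializable}) is assumed for the instance class containing \((q,\tilde B_0)\).
  \item Genericity at the step scale is part of the hypothesis.
\end{itemize}
The theorem then gives \(\tilde n=\Theta\bigl(K(q\mid\langle\tilde B_0\rangle)/\log(\tilde m+\tilde n)\bigr)\). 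Multiplying both sides by \(\ln 2\) and using Definition~\ref{def:derive-entropy-base} and Remark~\ref{rem:hk-noisy} yields \eqref{eq:noisy-bound}. This step is bookkeeping only.

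\emph{Part (ii).} After dividing by \(\ln 2\), the claim is
\[
\bigl|K(q\mid\langle\tilde B_0\rangle)-K(q\mid\langle B_0\rangle)\bigr| = O\bigl(\mathcal N\log(m+\mathcal N)\bigr).
\]
I would use the standard conditioning inequality for prefix complexity,
\[
K(x\mid y)\le K(x\mid z)+K(z\mid y)+O(1).
\]
It holds because a program for \(x\) given \(y\) can first run a self-delimiting program that computes \(z\) from \(y\), then run a self-delimiting program that computes \(x\) from \(z\). Prefix-freeness lets the two programs be concatenated with no extra delimiter cost.
\begin{itemize}
  \item Taking \(y=\langle B_0\rangle\) and \(z=\langle\tilde B_0\rangle\) gives \(K(q\mid\langle B_0\rangle)\le K(q\mid\langle\tilde B_0\rangle)+K(\langle\tilde B_0\rangle\mid\langle B_0\rangle)+O(1)\).
  \item Swapping the roles of the two bases gives the symmetric inequality.
\end{itemize}
Lemma~\ref{lem:base-conversion} bounds both conversion terms by \(O(\mathcal N\log(m+\mathcal N))\). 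The additive \(O(1)\) is absorbed once \(\mathcal N\ge1\); if \(\mathcal N=0\) then \(\tilde B_0=B_0\) and the difference is exactly \(0\).

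\emph{Main obstacle.} The delicate point is the conversion lemma, not the chaining. The decoder must be able to recover \(m+\mathcal N\), and hence the universe \(\mathcal V_{m+\mathcal N}\), from the conditioning string alone. Two sub-issues arise:
\begin{itemize}
  \item The description must carry \(\mathcal N\) itself, for example self-delimited in \(O(\log\mathcal N)\) bits. This cost is dominated by the main term.
  \item The decoder must obtain \(m\) from whichever base it is given. From \(\langle B_0\rangle\) this is immediate. From \(\langle\tilde B_0\rangle\) it computes \(\tilde m\) and then \(m=\tilde m+|B_0^-|-|B_0^+|\), which requires encoding the sizes of the two index lists explicitly.
\end{itemize}
With these details folded into Lemma~\ref{lem:base-conversion}, which is already stated, the perturbation bound follows immediately.
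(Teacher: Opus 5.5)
Your proposal is correct and follows the paper's proof. Part (i) instantiates Theorem~\ref{thm:derivation-depth-info-metric} with premise base \(\tilde B_0\) and rescales by \(\ln 2\), and part (ii) chains the conditioning inequality \(K(x\mid y)\le K(x\mid z)+K(z\mid y)+O(1)\) with Lemma~\ref{lem:base-conversion} in both directions. Your extra remarks on the \(\mathcal N=0\) case and on how the decoder recovers \(m+\mathcal N\) from either base are sound, and they fill in details the paper's conversion lemma leaves implicit.
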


\begin{proof}
(i) Apply Theorem~\ref{thm:derivation-depth-info-metric} with premise base \(\tilde B_0\).

(ii) Using the conditioning inequality
\(K(x\mid y)\le K(x\mid z)+K(z\mid y)+O(1)\) and Lemma~\ref{lem:base-conversion}, we have
\[
K(q\mid \langle B_0\rangle)
\le
K(q\mid \langle \tilde B_0\rangle)+O(\mathcal N\log(m+\mathcal N)),
\]
and symmetrically with \(B_0\) and \(\tilde B_0\) swapped. Multiply by \(\ln 2\).
\end{proof}

%---------------------------------------------------------------------------------
%  V.C Noise effects on tradeoff parameters
%---------------------------------------------------------------------------------
\subsection{Noise Effects on Tradeoff Parameters}
\label{subsec:noise-perturb}

We quantify how premise loss/pollution perturbs the parameters that govern the tradeoffs and optimization
problems in Section~\ref{sec:tradeoff-ideal}. The key asymmetry is monotonic:
\emph{adding premises cannot increase depth}, while \emph{removing premises cannot decrease depth}.

\begin{lemma}[Monotonicity in the premise base]
\label{lem:premise-monotonicity}
If \(B_1\subseteq B_2\) are finite available premise bases, then for every \(s\in\mathbb S_O\),
\[
  \Dd(s\mid B_2)\le \Dd(s\mid B_1).
\]
\end{lemma}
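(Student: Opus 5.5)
We prove the claim by well-founded induction along the predecessor unfolding of \(s\). By Proposition~\ref{prop:no-cycle-dag}, the backward unfolding of \(s\) along \(P_O\) is a finite DAG, so we may induct on the height of \(s\) in this DAG (the length of the longest backward \(P_O\)-chain from \(s\)). This height is finite and strictly decreases when passing from a node to any predecessor. The statement we prove is the following: for every node \(s'\) in the unfolding, \(\Dd(s'\mid B_2)\le \Dd(s'\mid B_1)\). By Theorem~\ref{thm:computability-Dd}, both sides are well-defined finite integers given by the recursion~\eqref{eq:Dd-inductive}.

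The inductive step splits into three cases according to membership.
\begin{enumerate}[label=\textup{(\alph*)}]
\item If \(s\in B_1\), then \(s\in B_2\) since \(B_1\subseteq B_2\). Both depths are \(0\) and the inequality holds.
\item If \(s\in B_2\setminus B_1\), then \(\Dd(s\mid B_2)=0\le \Dd(s\mid B_1)\), because depths are nonnegative.
\item If \(s\notin B_2\), then \(s\notin B_1\) as well. Both depths are then given by the recursive clause over the same predecessor set \(P_O(s)\):
\[
\Dd(s\mid B_2)=1+\max_{s'\in P_O(s)}\Dd(s'\mid B_2)\le 1+\max_{s'\in P_O(s)}\Dd(s'\mid B_1)=\Dd(s\mid B_1).
\]
The middle inequality applies the induction hypothesis termwise to each predecessor \(s'\), which has strictly smaller height. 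The empty-max convention yields \(1=1\) when \(P_O(s)=\varnothing\).
\end{enumerate}

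The main point needing care is that \(P_O(s)\) must not depend on the base. Assumption~\ref{assump:alignment} is phrased relative to a fixed \(B\), but Axiom~\ref{ax:finite-predecessors} defines \(P_O\) as a base-independent operator on \(\mathbb S_O\), and the recursion~\eqref{eq:Dd-inductive} uses it in that form. So in case (c) the two maxima range over the same set, and the comparison is termwise. I would state this explicitly in the proof. The base case of the induction, height \(0\), is covered by the same three cases with \(P_O(s)=\varnothing\).
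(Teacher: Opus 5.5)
Your proof is correct and follows the same route as the paper: well-founded induction over the predecessor unfolding via the recursion \eqref{eq:Dd-inductive}. The base case uses $B_1\subseteq B_2$, and the recursive case uses monotonicity of the maximum over the common, base-independent predecessor set $P_O(s)$. Your three-way case split and your explicit remark that $P_O$ does not depend on the base spell out what the paper's one-line proof leaves implicit.
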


\begin{proof}
This follows by well-founded recursion on the predecessor unfolding using \eqref{eq:Dd-inductive}:
the base case set \(\{s:s\in B\}\) only grows with \(B\), and the recursive clause takes a maximum over predecessors.
\end{proof}

\begin{definition}[Reconstruction depth from preserved premises]
\label{def:rec-depth-preserved}
Fix a baseline \(B_0\) and loss set \(B_0^-\subseteq B_0\). Let \(B_\cap:=B_0\setminus B_0^-\) be the preserved base.
Define the (worst-case) reconstruction depth of lost premises from preserved premises as
\[
d_{\mathrm{rec}}
:=
\max_{b \in B_0^-} \Dd\!\bigl(b \mid B_\cap\bigr),
\]
with the convention \(d_{\mathrm{rec}}=\infty\) if some lost premise \(b\in B_0^-\) is not derivable from \(B_\cap\),
i.e., \(b\notin \Cn(B_\cap)\).
\end{definition}

\begin{remark}[Why reconstruction depth is meaningful]
An element is depth-\(0\) only relative to the chosen premise base (Definition~\ref{def:derivation-depth}).
Thus a premise \(b\in B_0\) can be atomic/available under \(B_0\) (depth \(0\)) yet have positive reconstruction
depth when removed from the available base.
\end{remark}

\begin{lemma}[Noise perturbation of depth and description length]
\label{lem:noise-perturbation}
Let \(B_0\) be a baseline base, \(\tilde B_0=(B_0\setminus B_0^-)\cup B_0^+\) a noisy base, and let \(q\) be derivable
from both \(B_0\) and \(\tilde B_0\).
Let \(m:=|B_0|\), \(\tilde m:=|\tilde B_0|\), \(n:=\Dd(q\mid B_0)\), and \(\tilde n:=\Dd(q\mid \tilde B_0)\).
Under Assumption~\ref{assump:bounded-noise}:
\begin{enumerate}[label=\textup{(\roman*)}]
  \item \emph{Base-size perturbation:} \(|\tilde m-m|\le \mathcal N\).

  \item \emph{Loss-driven depth degradation:}
  letting \(B_\cap:=B_0\setminus B_0^-\) and \(d_{\mathrm{rec}}\) be as in Definition~\ref{def:rec-depth-preserved},
  \begin{equation}\label{eq:depth-degrade}
    \Dd(q\mid B_\cap)\le n + d_{\mathrm{rec}}.
  \end{equation}
  Moreover, since \(B_\cap \subseteq \tilde B_0\), pollution cannot increase depth:
  \[
    \tilde n=\Dd(q\mid \tilde B_0)\le \Dd(q\mid B_\cap).
  \]
  In particular, \(\tilde n \le n+d_{\mathrm{rec}}\).

  \item \emph{Algorithmic-entropy perturbation:}
  \begin{equation}\label{eq:entropy-perturb-C}
    \bigl|H_K(q \mid \tilde B_0) - H_K(q \mid B_0)\bigr|
    \le O\!\bigl(\mathcal N \log(m+\mathcal N)\bigr).
  \end{equation}
\end{enumerate}
\end{lemma}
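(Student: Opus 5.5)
Parts (i) and (iii) are bookkeeping on top of earlier results; the real work is the loss-driven bound in (ii).

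\emph{Part (i).} Since \(\tilde B_0=(B_0\setminus B_0^-)\cup B_0^+\) with \(B_0^-\subseteq B_0\) and \(B_0^+\cap B_0=\varnothing\), we have \(\tilde m=m-|B_0^-|+|B_0^+|\). The triangle inequality then gives \(|\tilde m-m|\le |B_0^-|+|B_0^+|=\mathcal N\).

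\emph{Part (iii).} This is exactly Proposition~\ref{prop:noisy-bound}(ii), so I will just cite its argument. That argument applies the conditioning inequality \(K(x\mid y)\le K(x\mid z)+K(z\mid y)+O(1)\) in both directions, using the two bounds of Lemma~\ref{lem:base-conversion}, and then multiplies by \(\ln 2\). No genericity hypothesis is needed for this part.

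\emph{Part (ii), first bound.} The plan is to prove the stronger pointwise claim that for every \(s\) in the backward \(P_O\)-unfolding of \(q\),
\[
\Dd(s\mid B_\cap)\le \Dd(s\mid B_0)+d_{\mathrm{rec}}.
\]
If \(d_{\mathrm{rec}}=\infty\) the claim is vacuous, so assume \(d_{\mathrm{rec}}<\infty\). I will argue by well-founded induction on the unfolding, which is a finite DAG by Proposition~\ref{prop:no-cycle-dag}. There are three cases.
\begin{itemize}
\item If \(s\in B_\cap\), the left side is \(0\).
\item If \(s\in B_0^-\), then \(\Dd(s\mid B_0)=0\) and \(\Dd(s\mid B_\cap)\le d_{\mathrm{rec}}\) by Definition~\ref{def:rec-depth-preserved}.
\item If \(s\notin B_0\), then \(s\notin B_\cap\), so both depths use the recursive clause of \eqref{eq:Dd-inductive} over the \emph{same} set \(P_O(s)\). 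Applying the induction hypothesis inside the max gives \(1+\max(\Dd(s'\mid B_0)+d_{\mathrm{rec}})=\Dd(s\mid B_0)+d_{\mathrm{rec}}\).
\end{itemize}
Taking \(s=q\) yields \eqref{eq:depth-degrade}.

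\emph{Main obstacle.} The induction must be carried over the unfolding of lost premises \(b\in B_0^-\) as well, since their depth relative to \(B_\cap\) uses their own predecessors. This is delicate because Assumption~\ref{assump:alignment} is stated relative to a fixed base. I will handle it by treating \(P_O\) as a base-independent operator: it is defined on all of \(\mathbb S_O\) by Axiom~\ref{ax:finite-predecessors}. The base then enters only through the case split in \eqref{eq:Dd-inductive}, so the lost-premise case is covered directly by the definition of \(d_{\mathrm{rec}}\) and needs no recursion into \(b\)'s predecessors within this induction.

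\emph{Part (ii), pollution and the final bound.} Since \(B_\cap\subseteq\tilde B_0\), Lemma~\ref{lem:premise-monotonicity} gives \(\tilde n\le \Dd(q\mid B_\cap)\). Chaining this with \eqref{eq:depth-degrade} yields \(\tilde n\le n+d_{\mathrm{rec}}\).
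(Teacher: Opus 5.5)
Your proposal is correct and follows the paper's proof essentially step for step: the same stronger pointwise claim $\Dd(s\mid B_\cap)\le \Dd(s\mid B_0)+d_{\mathrm{rec}}$, the same well-founded induction over the finite predecessor DAG with the identical three-case split, Lemma~\ref{lem:premise-monotonicity} for the pollution step, and a direct appeal to Proposition~\ref{prop:noisy-bound}(ii) for part (iii). One minor nit, which the paper shares: in the case $s\notin B_0$ with $P_O(s)=\varnothing$, your displayed equality $1+\max(\cdots)=\Dd(s\mid B_0)+d_{\mathrm{rec}}$ is really the inequality $1\le 1+d_{\mathrm{rec}}$, and this does not affect the argument.
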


\begin{proof}
(i) Immediate from \(\tilde B_0=(B_0\setminus B_0^-)\cup B_0^+\).

(ii) Let \(B_\cap:=B_0\setminus B_0^-\). Assume \(d_{\mathrm{rec}}<\infty\); otherwise the bound is trivial.
We prove the stronger claim that for all \(s\in\mathbb S_O\),
\begin{equation}\label{eq:depth-shift}
  \Dd(s\mid B_\cap)\le \Dd(s\mid B_0)+d_{\mathrm{rec}}.
\end{equation}
Consider the backward predecessor unfolding DAG of \(s\) (Proposition~\ref{prop:no-cycle-dag}) and proceed in a
topological order (equivalently, by well-founded recursion on the unfolding height).

If \(s\in B_\cap\), then \(\Dd(s\mid B_\cap)=0\le \Dd(s\mid B_0)+d_{\mathrm{rec}}\).

If \(s\in B_0^-\), then \(\Dd(s\mid B_0)=0\) and by definition of \(d_{\mathrm{rec}}\) we have
\(\Dd(s\mid B_\cap)\le d_{\mathrm{rec}}\), so \eqref{eq:depth-shift} holds.

Otherwise \(s\notin B_0\), and by \eqref{eq:Dd-inductive},
\[
  \Dd(s\mid B_\cap)=1+\max_{s'\in P_O(s)}\Dd(s'\mid B_\cap).
\]
By the induction hypothesis, for every \(s'\in P_O(s)\),
\(\Dd(s'\mid B_\cap)\le \Dd(s'\mid B_0)+d_{\mathrm{rec}}\), hence
\begin{align*}
  &\Dd(s\mid B_\cap) \\
  \le
  &1+\max_{s'\in P_O(s)}\bigl(\Dd(s'\mid B_0)+d_{\mathrm{rec}}\bigr) \\
  =
  &\Dd(s\mid B_0)+d_{\mathrm{rec}},
\end{align*}
which is \eqref{eq:depth-shift}. Taking \(s=q\) yields \eqref{eq:depth-degrade}.

Finally, since \(B_\cap\subseteq \tilde B_0\), Lemma~\ref{lem:premise-monotonicity} gives
\(\Dd(q\mid \tilde B_0)\le \Dd(q\mid B_\cap)\), and thus \(\tilde n\le n+d_{\mathrm{rec}}\).

(iii) This is Proposition~\ref{prop:noisy-bound}(ii).
\end{proof}

\begin{remark}[No two-sided bound on \(|\tilde n-n|\) without restricting pollution]
Without structural restrictions on \(B_0^+\), pollution can arbitrarily shorten derivations (e.g., by adding \(q\) itself),
so \(|\tilde n-n|\) cannot be controlled solely by \(\mathcal N\) or loss-only parameters. This is why
Lemma~\ref{lem:noise-perturbation} gives robust one-sided ``no-worse-than'' bounds for loss and monotonicity for pollution.
\end{remark}

\begin{corollary}[Loss-only noise tolerance (depth stability)]
\label{cor:noise-tolerance}
Assume loss-only noise \(B_0^+=\varnothing\), so \(\tilde B_0=B_0\setminus B_0^-\).
Assume \(q\) remains derivable and \(d_{\mathrm{rec}}<\infty\) satisfies \(d_{\mathrm{rec}}=o(n)\) where \(n=\Dd(q\mid B_0)\).
Then
\[
  \Hderive(q\mid \tilde B_0)=\Hderive(q\mid B_0)\cdot (1+o(1)).
\]
\end{corollary}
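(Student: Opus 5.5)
The plan is to sandwich the noisy depth $\tilde n:=\Dd(q\mid \tilde B_0)$ between $n$ and $n+d_{\mathrm{rec}}$, and then use the fact that $\Hderive$ is linear in depth. With loss-only noise, $\tilde B_0=B_0\setminus B_0^-=B_\cap\subseteq B_0$. Lemma~\ref{lem:premise-monotonicity} (monotonicity in the premise base) therefore gives the lower bound
\[
\tilde n=\Dd(q\mid \tilde B_0)\ \ge\ \Dd(q\mid B_0)=n .
\]
For the upper bound, I will invoke Lemma~\ref{lem:noise-perturbation}(ii) with $B_0^+=\varnothing$, so that $\tilde B_0=B_\cap$. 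Inequality~\eqref{eq:depth-degrade} then yields
\[
\tilde n\ \le\ n+d_{\mathrm{rec}} .
\]
This step uses only the well-founded induction on the predecessor DAG from that lemma. It does not need Assumption~\ref{assump:bounded-noise}, since parts (i) and (iii) are not invoked. It also uses the hypothesis that $d_{\mathrm{rec}}<\infty$.

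Combining the two bounds gives $n\le \tilde n\le n+d_{\mathrm{rec}}$. Since $d_{\mathrm{rec}}=o(n)$, we get $\tilde n=n(1+o(1))$. Multiplying by $\ln 2$ and applying Definition~\ref{def:derive-entropy-base} to both bases gives
\[
\Hderive(q\mid \tilde B_0)=\tilde n\ln 2=n\ln 2\,(1+o(1))=\Hderive(q\mid B_0)(1+o(1)).
\]

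The only delicate point is interpretive: the $o(1)$ must be read along a family of instances in which $n\to\infty$, as in the asymptotic conventions of Section~\ref{sec:derivation-entropy-ideal}. Also, $n\ge 1$ is needed so that dividing by $n$ makes sense. I will state this convention explicitly. If $n$ stayed bounded, then $d_{\mathrm{rec}}=o(n)$ would force $d_{\mathrm{rec}}=0$ eventually, and the equality would be exact, so the conclusion still holds. Beyond this, the proof is a direct combination of the two earlier lemmas.
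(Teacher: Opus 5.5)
Your proposal is correct and follows the paper's proof: you get the lower bound $\tilde n\ge n$ from monotonicity under premise removal and the upper bound $\tilde n\le n+d_{\mathrm{rec}}$ from Lemma~\ref{lem:noise-perturbation}(ii), then conclude $\tilde n=n(1+o(1))$ and multiply by $\ln 2$. Your two added observations are accurate points the paper leaves implicit: part (ii) of that lemma does not actually use Assumption~\ref{assump:bounded-noise}, and the $o(1)$ should be read along a family of instances with $n\to\infty$.
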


\begin{proof}
Loss-only implies \(\Dd(q\mid \tilde B_0)\ge \Dd(q\mid B_0)=n\) (removing premises cannot decrease depth).
Lemma~\ref{lem:noise-perturbation}(ii) gives \(\Dd(q\mid \tilde B_0)\le n+d_{\mathrm{rec}}\).
Thus \(\Dd(q\mid \tilde B_0)=n(1+o(1))\), and multiply by \(\ln 2\).
\end{proof}

%---------------------------------------------------------------------------------
%  V.D Noisy tradeoff and noisy optimization
%---------------------------------------------------------------------------------
\subsection{Noisy Tradeoffs and Noise-Aware Optimization}
\label{subsec:noise-strategies}

We now extend the single-query tradeoff and system-wide allocation of Section~\ref{sec:tradeoff-ideal}
to noisy bases. As in Section~\ref{sec:tradeoff-ideal}, caching is modeled as premise-base augmentation.

%-----------------------------------------
%  V.D.1 Noisy single-query tradeoff (baseline-parametric)
%-----------------------------------------
\subsubsection{Noisy storage--computation tradeoff}
\label{subsubsec:noisy-tradeoff}

\begin{proposition}[Noisy storage--computation tradeoff (baseline-parametric)]
\label{prop:noisy-tradeoff}
Fix a baseline base \(B_0\) and its noisy version \(\tilde B_0=(B_0\setminus B_0^-)\cup B_0^+\).
Let \(q\in \Cn(\tilde B_0)\cap \Cn(B_0)\) be a query that remains derivable from both bases, let \(\rho>0\) be the
storage-to-computation cost ratio, and let \(f_q\ge 1\) be the access frequency.

Let \(\tilde m:=|\tilde B_0|\) and \(\tilde n:=\Dd(q\mid \tilde B_0)\).
Assume \((q,\tilde B_0)\) lies in the information-rich regime and is generic/incompressible at its depth scale so that
Theorem~\ref{thm:derivation-depth-info-metric} applies to \((q,\tilde B_0)\).

Define the noisy cache-length benchmark
\[
  \tilde \ell_q := K\!\left(q\mid \langle \tilde B_0\rangle\right)+O(1),
\]
and define noisy amortized costs (cf.\ Definition~\ref{def:amortized-cost}) by
\begin{align*}
  &\tilde{\Cost}_{\mathrm{cache}}(q;\tilde B_0)
  :=
  \frac{\rho\cdot \tilde \ell_q}{f_q}
  +\frac{\Dd(q\mid \tilde B_0)}{f_q}
  + c_{\mathrm{hit}},
  \\
  &\tilde{\Cost}_{\mathrm{derive}}(q;\tilde B_0)
  :=
  \Dd(q\mid \tilde B_0).
\end{align*}
Let \(\tilde{\Cost}^*(q;\tilde B_0):=\min\{\tilde{\Cost}_{\mathrm{cache}}(q;\tilde B_0),\tilde{\Cost}_{\mathrm{derive}}(q;\tilde B_0)\}\).
Then:
\begin{enumerate}[label=\textup{(\roman*)}]
  \item \emph{Noisy tradeoff characterization:}
  \begin{align}\label{eq:noisy-tradeoff-char}
    &\tilde{\Cost}^*(q;\tilde B_0) \nonumber\\
    = 
    &\min\!\left\{\!
      \frac{\rho\!\cdot\! \tilde \ell_q}{f_q}
      \!+\! O\!\left(\!\frac{\tilde n}{f_q}\!\right)
      \!+\! O(1),
      \Theta\!\left(\!\frac{\tilde \ell_q}{\log(\tilde m\!+\!\tilde n)}\!\right)
    \!\right\}.
  \end{align}

  \item \emph{Noisy critical frequency:}
  \begin{equation}\label{eq:noisy-critical-freq}
    \tilde f_c=\Theta\bigl(\rho\cdot \log(\tilde m+\tilde n)\bigr).
  \end{equation}

  \item \emph{Robust comparison to the baseline optimum (no-worse-than bound):}
  let \(\Cost^*(q;B_0)\) be the ideal optimal amortized cost under baseline \(B_0\) as in Theorem~\ref{thm:freq_tradeoff}.
  Under Assumption~\ref{assump:bounded-noise},
  \begin{align}\label{eq:noisy-no-worse}
    &\tilde{\Cost}^*(q;\tilde B_0) \nonumber\\
    \le
    &\Cost^*(q;B_0)
    \!+\! O\!\left(\!\frac{\rho\!\cdot\! \mathcal N\log(m\!+\!\mathcal N)}{f_q}\!\right)
    \!+\! O(d_{\mathrm{rec}}),
  \end{align}
  where \(m:=|B_0|\), \(\mathcal N:=|B_0^-|+|B_0^+|\), and \(d_{\mathrm{rec}}\) is from Definition~\ref{def:rec-depth-preserved}.
\end{enumerate}
\end{proposition}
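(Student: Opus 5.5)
Parts (i) and (ii) follow by transporting the ideal proof of Theorem~\ref{thm:freq_tradeoff} to the noisy base \(\tilde B_0\); part (iii) is a case analysis on which strategy attains \(\Cost^*(q;B_0)\).

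For (i), substitute \(\tilde\ell_q=K(q\mid\langle\tilde B_0\rangle)+O(1)\) into the definition of \(\tilde{\Cost}_{\mathrm{cache}}\). Then \(\rho\tilde\ell_q/f_q+O(\tilde n/f_q)+O(1)\) follows immediately, with \(c_{\mathrm{hit}}=O(1)\). For the derivation branch, apply Theorem~\ref{thm:derivation-depth-info-metric} with premise base \(\tilde B_0\), as the hypotheses allow; this is exactly Proposition~\ref{prop:noisy-bound}(i). It gives \(\tilde n=\Theta(K(q\mid\langle\tilde B_0\rangle)/\log(\tilde m+\tilde n))\). Because \(K=\omega(\log\tilde m)\) in the information-rich regime, the additive \(O(1)\) is absorbed, and we may write this as \(\Theta(\tilde\ell_q/\log(\tilde m+\tilde n))\). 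Taking the minimum of the two expressions gives \eqref{eq:noisy-tradeoff-char}.

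For (ii), compare the leading terms \(\rho\tilde\ell_q/f_q\) and \(\Theta(\tilde\ell_q/\log(\tilde m+\tilde n))\). They balance at \(f_q=\Theta(\rho\log(\tilde m+\tilde n))\). At that scale the population term is \(\tilde n/f_q=\Theta(\tilde\ell_q/(\rho\log^2(\tilde m+\tilde n)))\), which is lower order, exactly as in the ideal proof.

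For (iii), split into two cases according to the minimizer of \(\Cost^*(q;B_0)\).

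\emph{Case 1: derivation attains the baseline optimum.} Then \(\Cost^*(q;B_0)=n\). By Lemma~\ref{lem:noise-perturbation}(ii), \(\tilde n\le n+d_{\mathrm{rec}}\), so
\[
\tilde{\Cost}^*\le\tilde{\Cost}_{\mathrm{derive}}=\tilde n\le \Cost^*(q;B_0)+d_{\mathrm{rec}}.
\]

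\emph{Case 2: caching attains the baseline optimum.} Then
\[
\tilde{\Cost}^*\le\tilde{\Cost}_{\mathrm{cache}}=\frac{\rho\tilde\ell_q}{f_q}+\frac{\tilde n}{f_q}+c_{\mathrm{hit}}.
\]
Bound each term against its ideal counterpart:
\begin{itemize}
\item By Proposition~\ref{prop:noisy-bound}(ii), i.e.\ via Lemma~\ref{lem:base-conversion}, \(\tilde\ell_q\le\ell_q(B_0)+O(\mathcal N\log(m+\mathcal N))\).
\item By Lemma~\ref{lem:noise-perturbation}(ii) and \(f_q\ge1\), \(\tilde n/f_q\le n/f_q+d_{\mathrm{rec}}\).
\end{itemize}
Summing gives \(\Cost_{\mathrm{cache}}(q;B_0)+O(\rho\mathcal N\log(m+\mathcal N)/f_q)+d_{\mathrm{rec}}\).

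The minimum over the two cases yields \eqref{eq:noisy-no-worse}.

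The main obstacle is bookkeeping the \(O(1)\) slacks in \(\ell_q\) versus \(\tilde\ell_q\). Both are defined as \(K+O(1)\), so the difference of the slacks contributes \(O(\rho/f_q)\), which is dominated by the \(\rho\mathcal N\log(m+\mathcal N)/f_q\) term whenever \(\mathcal N\ge1\). When \(\mathcal N=0\) the two bases coincide and the claim is trivial. A second point to check is that Lemma~\ref{lem:noise-perturbation}(ii) requires \(q\) to be derivable from both bases, which is part of the hypothesis. If \(d_{\mathrm{rec}}=\infty\), the bound holds vacuously.
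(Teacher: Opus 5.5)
Your proposal is correct and follows essentially the same route as the paper. Parts (i) and (ii) come from substituting \(\tilde\ell_q=K(q\mid\langle\tilde B_0\rangle)+O(1)\) and applying Theorem~\ref{thm:derivation-depth-info-metric} to \((q,\tilde B_0)\). Part (iii) comes from bounding each noisy strategy cost by its baseline counterpart via Lemma~\ref{lem:noise-perturbation}(ii)--(iii) and taking the minimum. Your explicit case split, the bound \(d_{\mathrm{rec}}/f_q\le d_{\mathrm{rec}}\) for the population term, and the \(O(1)\)-slack bookkeeping simply make precise what the paper summarizes as ``taking minima'' and ``lower-order population differences.''
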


\begin{proof}
(i) By \(\tilde \ell_q=K(q\mid\langle \tilde B_0\rangle)+O(1)\) and the definition of \(\tilde{\Cost}_{\mathrm{cache}}\),
\[
  \tilde{\Cost}_{\mathrm{cache}}(q;\tilde B_0)
  =
  \frac{\rho\cdot \tilde \ell_q}{f_q}
  + O\!\left(\frac{\tilde n}{f_q}\right)
  + O(1).
\]
Moreover, Theorem~\ref{thm:derivation-depth-info-metric} applied to \((q,\tilde B_0)\) yields
\(
\tilde n=\Theta\!\bigl(\tilde \ell_q/\log(\tilde m+\tilde n)\bigr)
\),
so \eqref{eq:noisy-tradeoff-char} follows by taking the minimum with \(\tilde{\Cost}_{\mathrm{derive}}(q;\tilde B_0)=\tilde n\).

(ii) Compare the leading terms \(\rho\tilde \ell_q/f_q\) and \(\Theta(\tilde \ell_q/\log(\tilde m+\tilde n))\) to obtain
\(\tilde f_c=\Theta(\rho\log(\tilde m+\tilde n))\).

(iii) By Lemma~\ref{lem:noise-perturbation}(iii),
\(
\tilde \ell_q
=
\ell_q(B_0) + O(\mathcal N\log(m+\mathcal N))
\)
up to additive constants, and by Lemma~\ref{lem:noise-perturbation}(ii),
\(
\Dd(q\mid \tilde B_0)\le \Dd(q\mid B_0)+d_{\mathrm{rec}}.
\)
Therefore, the noisy caching cost is at most the baseline caching cost plus
\(O(\rho\,\mathcal N\log(m+\mathcal N)/f_q)\) (and lower-order population differences), while the noisy on-demand cost is at most
the baseline on-demand cost plus \(O(d_{\mathrm{rec}})\).
Taking minima over the two strategies yields \eqref{eq:noisy-no-worse}.
\end{proof}

%-----------------------------------------
%  V.D.2 Noisy system-wide allocation and submodularity
%-----------------------------------------
\subsubsection{Noisy system-wide allocation and submodularity}
\label{subsubsec:noisy-submodular}

Fix an operational candidate set \(\mathcal U\subseteq \Cn(\Atom(S_O))\setminus S_O\) and a query set
\(Q=\{q_1,\ldots,q_K\}\) with distribution \(P_Q\) as in Section~\ref{subsec:system-allocation}.
In noise, the baseline premise base becomes \(\tilde B_0\) (typically \(\tilde B_0=\tilde S_O\) operationally,
or \(\tilde B_0=\tilde A\) intrinsically), and caching enlarges \(\tilde B_0\) to \(\tilde B_0\cup S\).

\begin{definition}[Noisy expected derivation cost and reduction]
\label{def:noisy-cost-reduction}
Given a noisy baseline \(\tilde B_0\), define for \(S\subseteq \mathcal U\):
\begin{align*}
  &\tilde{\bar n}(S)
  :=
  \sum_{i=1}^{K} p_i\,\Dd(q_i\mid \tilde B_0\cup S),
  \\
  &\tilde{\Delta}(S):=\tilde{\bar n}(\varnothing)-\tilde{\bar n}(S).
\end{align*}
\end{definition}

\begin{proposition}[Preservation of submodularity under noise]
\label{prop:submod-preservation}
Assume diminishing returns on the noisy base: for all \(A_1\subseteq A_2\subseteq \mathcal U\) and
\(u\in \mathcal U\setminus A_2\),
\begin{align*}
&\Dd(q_i\mid \tilde B_0\cup A_1)-\Dd(q_i\mid \tilde B_0\cup A_1\cup\{u\}) \\
\ge
&\Dd(q_i\mid \tilde B_0\cup A_2)-\Dd(q_i\mid \tilde B_0\cup A_2\cup\{u\})
\end{align*}
for each \(i\).
Then \(\tilde\Delta\) is normalized, monotone, and submodular on \(2^{\mathcal U}\).
\end{proposition}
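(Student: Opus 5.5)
This mirrors the proof of Theorem~\ref{thm:submodularity-deriv}, with the ideal operational base \(S_O\) replaced by the noisy baseline \(\tilde B_0\). Throughout I write \(\tilde n_i(S):=\Dd(q_i\mid \tilde B_0\cup S)\), so that \(\tilde{\bar n}(S)=\sum_i p_i\,\tilde n_i(S)\) and \(\tilde\Delta(S)=\sum_i p_i[\tilde n_i(\varnothing)-\tilde n_i(S)]\).

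\emph{Well-definedness.} Each \(\tilde B_0\cup S\) is finite because \(\tilde B_0\) is finite (Definition~\ref{def:noisy-base}) and \(\mathcal U\) is finite. Its membership is decidable by Remark~\ref{rem:noisy-effectivity}, since finite unions preserve decidability. Hence Theorem~\ref{thm:computability-Dd} applies and every \(\tilde n_i(S)\) is a finite non-negative integer. Here I assume each \(q_i\) is derivable from \(\tilde B_0\). If some \(q_i\) were lost under noise, \(\tilde n_i(\varnothing)\) might not be meaningful as a cost, so I would state this derivability as a standing hypothesis.

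\emph{Normalization and monotonicity.} Normalization is immediate: \(\tilde\Delta(\varnothing)=\tilde{\bar n}(\varnothing)-\tilde{\bar n}(\varnothing)=0\). For monotonicity, take \(S_1\subseteq S_2\). Then \(\tilde B_0\cup S_1\subseteq \tilde B_0\cup S_2\), so Lemma~\ref{lem:premise-monotonicity} gives \(\tilde n_i(S_2)\le \tilde n_i(S_1)\) for every \(i\). Since all \(p_i>0\), we get \(\tilde{\bar n}(S_2)\le\tilde{\bar n}(S_1)\), i.e.\ \(\tilde\Delta(S_1)\le\tilde\Delta(S_2)\).

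\emph{Submodularity.} For \(A_1\subseteq A_2\) and \(u\notin A_2\), the marginal gains are \(\tilde\Delta_u(A_j)=\tilde{\bar n}(A_j)-\tilde{\bar n}(A_j\cup\{u\})\). Hence
\[
\tilde\Delta_u(A_1)-\tilde\Delta_u(A_2)=\sum_{i=1}^K p_i\Bigl([\tilde n_i(A_1)-\tilde n_i(A_1\cup\{u\})]-[\tilde n_i(A_2)-\tilde n_i(A_2\cup\{u\})]\Bigr).
\]
Each bracketed difference is nonnegative by the assumed noisy diminishing-returns inequality, and all weights \(p_i\) are positive. So the sum is nonnegative, which is submodularity.

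The main potential obstacle is the well-definedness and finiteness of the noisy depths, not the inequalities themselves, since loss may destroy derivability. I would handle it with the derivability hypothesis above. Note that submodularity is assumed rather than derived, as in Theorem~\ref{thm:submodularity-deriv}. Noise does not inherit diminishing returns from the ideal base; only the hypothesis is transported to \(\tilde B_0\).
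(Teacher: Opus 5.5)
Your proof is correct and follows the paper's route: the paper says the argument is identical to Theorem~\ref{thm:submodularity-deriv} with \(\bar n\) replaced by \(\tilde{\bar n}\). That is, normalization and monotonicity (via Lemma~\ref{lem:premise-monotonicity}) are immediate, and submodularity is the \(p_i\)-weighted sum of the assumed per-query diminishing-returns inequalities. Your extra derivability hypothesis is harmless but not needed, because Theorem~\ref{thm:computability-Dd} already makes \(\Dd(q_i\mid \tilde B_0\cup S)\) a finite integer for every \(q_i\in\mathbb S_O\) and every finite decidable base, whether or not \(q_i\in\Cn(\tilde B_0)\).
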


\begin{proof}
Identical to Theorem~\ref{thm:submodularity-deriv}, with \(\bar n\) replaced by \(\tilde{\bar n}\).
\end{proof}

%-----------------------------------------
%  V.D.3 Pollution exposure penalty (robust objective)
%-----------------------------------------
\subsubsection{Pollution exposure and a robust objective}
\label{subsubsec:pollution-robust}

In pollution-dominated regimes it can be desirable to avoid cache items whose usefulness relies heavily on
spurious premises \(B_0^+\), even if they reduce depth. We encode this as a modular penalty.

\begin{definition}[Pollution exposure score]
\label{def:pollution-exposure}
Fix the noisy base \(\tilde B_0=(B_0\setminus B_0^-)\cup B_0^+\).
For a candidate cache item \(u\in \mathcal U\), define its pollution exposure as
\[
  \pi_{\mathrm{poll}}(u)
  :=
  \frac{\bigl|\Ess^+(u\mid \tilde B_0)\cap B_0^+\bigr|}
       {\max\{1,\ |\Ess^+(u\mid \tilde B_0)|\}}
  \in [0,1],
\]
where \(\Ess^+(\cdot\mid \tilde B_0)\) is defined analogously to Definition~\ref{def:essential} with baseline \(\tilde B_0\).
\end{definition}

\begin{definition}[Robustified noisy objective]
\label{def:robust-objective}
For penalty weight \(\lambda\ge 0\), define for \(S\subseteq \mathcal U\)
\[
  \tilde{\Delta}_\lambda(S)
  :=
  \tilde{\Delta}(S)
  -
  \lambda\sum_{u\in S}\sigma(u)\,\pi_{\mathrm{poll}}(u).
\]
\end{definition}

\begin{proposition}[Submodularity is preserved under pollution penalties]
\label{prop:penalty-submodular}
If \(\tilde{\Delta}\) is normalized and submodular, then \(\tilde{\Delta}_\lambda\) is normalized and submodular for every \(\lambda\ge 0\).
(Monotonicity may be lost.)
\end{proposition}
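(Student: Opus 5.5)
The plan is to write \(\tilde{\Delta}_\lambda = \tilde{\Delta} - \lambda\, c\), where
\[
  c(S) := \sum_{u\in S}\sigma(u)\,\pi_{\mathrm{poll}}(u)
\]
is a modular set function, and then show that subtracting a nonnegative multiple of a modular function preserves normalization and submodularity. No structural fact about derivation depth is needed beyond the hypotheses on \(\tilde{\Delta}\). Each weight \(w_u := \sigma(u)\,\pi_{\mathrm{poll}}(u)\) is a fixed real number determined by \(u\) alone, since it depends only on the noisy base \(\tilde B_0\), the spurious set \(B_0^+\) and \(\Ess^+(u\mid\tilde B_0)\), and not on \(S\). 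Because \(\sigma(u)>0\) and \(\pi_{\mathrm{poll}}(u)\in[0,1]\) (Definition~\ref{def:pollution-exposure}), we have \(w_u\ge 0\), though only finiteness of \(w_u\) is actually needed.

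\emph{Normalization.} The empty sum gives \(c(\varnothing)=0\), so \(\tilde{\Delta}_\lambda(\varnothing)=\tilde{\Delta}(\varnothing)=0\). This uses normalization of \(\tilde{\Delta}\), which also follows directly from Definition~\ref{def:noisy-cost-reduction}.

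\emph{Submodularity.} I will use the marginal-gain characterization, as in the proof of Theorem~\ref{thm:submodularity-deriv}. For \(A_1\subseteq A_2\subseteq\mathcal U\) and \(u\in\mathcal U\setminus A_2\), the marginal of \(c\) is \(c(A_j\cup\{u\})-c(A_j)=w_u\), independent of \(j\). Hence
\[
  \bigl[\tilde{\Delta}_\lambda(A_1\cup\{u\})-\tilde{\Delta}_\lambda(A_1)\bigr]
  -\bigl[\tilde{\Delta}_\lambda(A_2\cup\{u\})-\tilde{\Delta}_\lambda(A_2)\bigr]
\]
equals the same difference for \(\tilde{\Delta}\), because the \(\lambda w_u\) terms cancel. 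That difference is \(\ge 0\) by submodularity of \(\tilde{\Delta}\). In other words, a modular function is both submodular and supermodular, so \(-\lambda c\) is submodular for every \(\lambda\ge0\), and a sum of submodular functions is submodular.

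\emph{Loss of monotonicity.} For the parenthetical remark I will give a witness: a \(u\) with \(\tilde{\Delta}_u(S)<\lambda w_u\), for example an item with zero depth reduction but positive exposure. For such a \(u\), adding it strictly decreases \(\tilde{\Delta}_\lambda\).

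There is no genuine obstacle. The only point needing care is confirming that \(\pi_{\mathrm{poll}}(u)\) is independent of the current allocation \(S\), so that the penalty really is modular. This holds because \(\Ess^+(u\mid\tilde B_0)\) is defined relative to \(\tilde B_0\) alone, not \(\tilde B_0\cup S\).
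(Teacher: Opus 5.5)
Your proposal is correct and follows essentially the same approach as the paper: the pollution penalty \(\lambda\sum_{u\in S}\sigma(u)\,\pi_{\mathrm{poll}}(u)\) is modular in \(S\) and vanishes at \(\varnothing\), so subtracting it preserves normalization and submodularity. Your explicit marginal-gain cancellation, together with your check that \(\pi_{\mathrm{poll}}(u)\) depends only on \(\tilde B_0\) and not on \(S\), simply spells out the paper's one-line argument in more detail.
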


\begin{proof}
The penalty term is modular in \(S\), hence submodular; subtracting a modular function preserves submodularity.
\end{proof}

\begin{remark}[Monotonicity and algorithms]
\label{rem:penalty-monotonicity}
Because \(\tilde{\Delta}_\lambda\) may be non-monotone, Algorithm~\ref{alg:greedy} does not retain its \((1-1/e)\) guarantee
when applied directly. One may instead use approximation algorithms for \emph{non-monotone} submodular maximization;
see, e.g.,~\cite{buchbinder2014submodular}.
\end{remark}

%-----------------------------------------
%  V.D.4 Loss-dominated regimes: compensation-then-optimize
%-----------------------------------------
\subsubsection{Loss-dominated regimes: compensation-then-optimize}
\label{subsubsec:loss-compensation}

When loss dominates, a small number of missing premises can dramatically increase depth.
A natural response is to first allocate storage to \emph{restore} (cache) critical lost premises, and then
optimize remaining storage for depth reduction.

\begin{definition}[Critical and reconstructible lost premises (baseline-parametric)]
\label{def:critical-atoms}
Fix a baseline \(B_0\), its noisy version \(\tilde B_0\), and a finite query set \(Q\).
Define:
\begin{enumerate}[label=\textup{(\roman*)}]
  \item \emph{Critical lost premises:}
  \[
    \mathcal{B}_{\mathrm{crit}}
    :=
    B_0^- \cap \bigcup_{q\in Q}\Ess^+(q\mid B_0).
  \]
  \item \emph{Reconstructible critical premises:}
  \[
    \mathcal{B}_{\mathrm{rec}}
    :=
    \{b\in \mathcal{B}_{\mathrm{crit}}:\ b\in \Cn(\tilde B_0)\}.
  \]
  \item \emph{Irrecoverable critical premises:}
  \[
    \mathcal{B}_{\mathrm{irr}}:=\mathcal{B}_{\mathrm{crit}}\setminus \mathcal{B}_{\mathrm{rec}}.
  \]
\end{enumerate}
\end{definition}

\begin{remark}[Caching lost premises during compensation]
Although Section~\ref{subsec:system-allocation} typically uses \(\mathcal U\subseteq \Cn(\Atom(S_O))\setminus S_O\),
in the compensation phase we allow caching elements \(b\in \mathcal{B}_{\mathrm{rec}}\subseteq B_0\) as well:
adding \(b\) to the available premise base simply makes it depth-\(0\) again under Definition~\ref{def:derivation-depth}.
Formally, define the extended ground set
\[
  \mathcal U_{\mathrm{ext}}:=\mathcal U\cup \mathcal{B}_{\mathrm{rec}}.
\]
\end{remark}

\begin{definition}[Depth-threshold service constraint (\(\mathrm{SLA}_h\), noisy baseline)]
\label{def:sla-depth}
Fix a depth threshold \(h\in\mathbb N\) and a storage budget \(\mathcal{B}>0\) (bits).
An allocation \(S\subseteq \mathcal U_{\mathrm{ext}}\) is \(\mathrm{SLA}_h\)-feasible under noisy baseline \(\tilde B_0\) if
\begin{equation}\label{eq:sla-feasible}
  \sigma(S)\le \mathcal{B}
  \quad\text{and}\quad
  \max_{1\le i\le K}\Dd(q_i\mid \tilde B_0\cup S)\le h.
\end{equation}
Denote the feasible family by \(\mathcal F_h(\mathcal{B})\).
\end{definition}

\begin{assumption}[SLA repair by compensation]
\label{assump:comp-feasible}
Let \(S_{\mathrm{comp}}:=\mathcal{B}_{\mathrm{rec}}\) and \(\mathcal{B}_{\mathrm{comp}}:=\sigma(S_{\mathrm{comp}})\).
Assume \(\mathcal{B}\ge \mathcal{B}_{\mathrm{comp}}\) and the compensation cache alone meets the depth SLA:
\[
  \max_{1\le i\le K}\Dd(q_i\mid \tilde B_0\cup S_{\mathrm{comp}})\le h.
\]
\end{assumption}

\begin{assumption}[Mandatory compensation under \(\mathrm{SLA}_h\)]
\label{assump:mandatory-comp}
Assume that every \(\mathrm{SLA}_h\)-feasible allocation must contain the compensation set:
\[
  \forall S\in \mathcal F_h(\mathcal{B}),\qquad S_{\mathrm{comp}}\subseteq S.
\]
\end{assumption}

\begin{lemma}[Compensation cache size bound]
\label{lem:comp-cost}
Assume \(\mathcal{B}_{\mathrm{irr}}=\varnothing\).
Then the total description size needed to store \(\mathcal{B}_{\mathrm{rec}}\) as cached premises satisfies
\begin{align*}
  &\sigma(S_{\mathrm{comp}}) \\
  \le
  &|\mathcal{B}_{\mathrm{rec}}|\cdot O(\log(m+\mathcal N)) \\
  \le
  &\mathcal N\cdot O(\log(m+\mathcal N)),
\end{align*}
where \(m:=|B_0|\) and \(\mathcal N:=|B_0^-|+|B_0^+|\).
\end{lemma}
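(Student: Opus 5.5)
The plan is to bound $\sigma(S_{\mathrm{comp}})$ item by item and then sum. With $S_{\mathrm{comp}}=\mathcal B_{\mathrm{rec}}$ and $\sigma$ additive, we have $\sigma(S_{\mathrm{comp}})=\sum_{b\in\mathcal B_{\mathrm{rec}}}\sigma(b)$. So it suffices to show that each lost critical premise $b$ can be stored, given the always-available noisy base $\langle\tilde B_0\rangle$, in $O(\log(m+\mathcal N))$ bits. We take the canonical choice $\sigma(b)=\ell_b(\tilde B_0)=L(b\mid\langle\tilde B_0\rangle)\le K(b\mid\langle\tilde B_0\rangle)+O(1)$, as in Definition~\ref{def:caching-strategy} transplanted to the noisy baseline. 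The task then reduces to a per-item Kolmogorov bound.

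\emph{Per-item bound.} Fix $b\in\mathcal B_{\mathrm{rec}}\subseteq B_0^-\subseteq B_0\subseteq\mathcal V_{m+\mathcal N}$. By Assumption~\ref{assump:bounded-noise}(i), $b$ has an index in the canonical indexing of $\mathcal V_{m+\mathcal N}$, and $|\mathcal V_{m+\mathcal N}|\le\poly(m+\mathcal N)$.

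The program describing $b$ does the following:
\begin{enumerate}[label=\textup{(\arabic*)}]
\item It reads $\langle\tilde B_0\rangle$.
\item It obtains the parameter $m+\mathcal N$. This is given self-delimitingly in $O(\log(m+\mathcal N))$ bits, or recovered from $\tilde m$ plus a self-delimited correction, using $|\tilde m-m|\le\mathcal N$ from Lemma~\ref{lem:noise-perturbation}(i).
\item It enumerates $\mathcal V_{m+\mathcal N}$ computably.
\item It outputs the element at the encoded index, which costs $\log\poly(m+\mathcal N)=O(\log(m+\mathcal N))$ bits.
\end{enumerate}
The constant program overhead is absorbed. Hence $K(b\mid\langle\tilde B_0\rangle)=O(\log(m+\mathcal N))$, and so $\sigma(b)=O(\log(m+\mathcal N))$.

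The hypothesis $\mathcal B_{\mathrm{irr}}=\varnothing$ is used to ensure that every critical lost premise lies in $\mathcal B_{\mathrm{rec}}$. The compensation set is therefore well defined and meaningful: each cached item is consistent with $\Cn(\tilde B_0)$. The naming argument itself does not need derivability.

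\emph{Summation.} Summing gives $\sigma(S_{\mathrm{comp}})\le|\mathcal B_{\mathrm{rec}}|\cdot O(\log(m+\mathcal N))$. The second inequality follows from $\mathcal B_{\mathrm{rec}}\subseteq\mathcal B_{\mathrm{crit}}\subseteq B_0^-$, so $|\mathcal B_{\mathrm{rec}}|\le|B_0^-|\le\mathcal N$.

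\emph{Main obstacle.} The delicate step is making the per-item constant uniform. The implied constant must not depend on $b$, $m$ or $\mathcal N$. I would handle this by using a single fixed program that takes as input a self-delimiting encoding of $m+\mathcal N$ and a fixed-width index, with the width determined by the polynomial degree bound on $|\mathcal V_s|$. I would also note that if $\sigma$ is instantiated by a practical compressor rather than $K$, the bound holds provided that compressor is at least as good as index coding over $\mathcal V_{m+\mathcal N}$, which is within the "upper bounds this benchmark" convention stated after Definition~\ref{def:expected-deriv-cost}.
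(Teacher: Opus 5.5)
Your proposal is correct and follows the same route as the paper. In both, each element of \(\mathcal B_{\mathrm{rec}}\subseteq B_0^-\) is named by an index in the polynomial naming universe \(\mathcal V_{m+\mathcal N}\) of Assumption~\ref{assump:bounded-noise}(i) at \(O(\log(m+\mathcal N))\) bits apiece, the per-item costs are summed, and the count is bounded by \(|\mathcal B_{\mathrm{rec}}|\le|B_0^-|\le\mathcal N\); your version adds detail the paper omits (the self-delimiting encoding of \(m+\mathcal N\), uniformity of the constant, an explicit instantiation of \(\sigma\) conditioned on \(\langle\tilde B_0\rangle\), and using \(\poly(m+\mathcal N)\) where the paper says ``pool of size \(O(m+\mathcal N)\)''), but the argument is the same.
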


\begin{proof}
Each element of \(B_0\) can be named by an index in a pool of size \(O(m+\mathcal N)\), costing \(O(\log(m+\mathcal N))\) bits.
Moreover, under Assumption~\ref{assump:bounded-noise}(i) and the fixed representation conventions
(Axiom~\ref{ax:state-representation}), such an index effectively specifies the corresponding encoded premise,
so it can be stored and re-introduced as an available premise when cached.
Since \(|\mathcal{B}_{\mathrm{rec}}|\le |B_0^-|\le \mathcal N\), the bound follows.
\end{proof}

\begin{proposition}[Two-phase strategy: \((1-1/e)\) guarantee under \(\mathrm{SLA}_h\)]
\label{prop:two-phase-opt}
Assume \(\mathcal{B}_{\mathrm{irr}}=\varnothing\), and Assumptions~\ref{assump:comp-feasible} and~\ref{assump:mandatory-comp}.
Let \(\mathcal{B}':=\mathcal{B}-\sigma(S_{\mathrm{comp}})\) be the remaining budget.

Define the incremental objective after compensation (on the noisy baseline):
\[
  \tilde{\Delta}_{\mathrm{add}}(S)
  :=
  \tilde{\bar n}(S_{\mathrm{comp}})-\tilde{\bar n}(S_{\mathrm{comp}}\cup S),
  \qquad S\subseteq \mathcal U.
\]
Assume \(\tilde{\Delta}_{\mathrm{add}}\) is normalized, monotone, and submodular (e.g., by diminishing returns on \(\tilde B_0\cup S_{\mathrm{comp}}\)).

Let \(S_{\mathrm{greedy}}\subseteq \mathcal U\) be the output of Algorithm~\ref{alg:greedy} applied to \(\tilde{\Delta}_{\mathrm{add}}\) with budget \(\mathcal{B}'\),
and set \(S_{\mathrm{2ph}}:=S_{\mathrm{comp}}\cup S_{\mathrm{greedy}}\subseteq \mathcal U_{\mathrm{ext}}\).
Then \(S_{\mathrm{2ph}}\in \mathcal F_h(\mathcal{B})\), and if
\[
  S_h^\star \in \arg\max\{\tilde{\Delta}(S\cap \mathcal U): S\in \mathcal F_h(\mathcal{B})\},
\]
we have
\begin{equation}\label{eq:two-phase-global}
  \tilde{\Delta}(S_{\mathrm{2ph}}\cap\mathcal U)
  \ge
  \left(1-\frac{1}{e}\right)\tilde{\Delta}(S_h^\star\cap\mathcal U).
\end{equation}
\end{proposition}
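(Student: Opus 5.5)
The plan has three stages: establish feasibility of \(S_{\mathrm{2ph}}\), reduce the comparator \(S_h^\star\cap\mathcal U\) to a feasible instance of the greedy subproblem, and then transfer the greedy guarantee from \(\tilde\Delta_{\mathrm{add}}\) to \(\tilde\Delta\). I first record the bookkeeping fact used throughout: \(S_{\mathrm{comp}}=\mathcal B_{\mathrm{rec}}\subseteq B_0^-\subseteq B_0\), while \(\mathcal U\subseteq\Cn(\Atom(S_O))\setminus S_O\). For the operational baseline \(B_0=S_O\) this gives \(S_{\mathrm{comp}}\cap\mathcal U=\varnothing\). I would state this disjointness explicitly for the baselines considered. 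Consequently \(S_{\mathrm{2ph}}\cap\mathcal U=S_{\mathrm{greedy}}\). Moreover, for every \(S\in\mathcal F_h(\mathcal B)\), Assumption~\ref{assump:mandatory-comp} gives \(S\cap\mathcal U=S\setminus S_{\mathrm{comp}}\).

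\emph{Feasibility.} Algorithm~\ref{alg:greedy} respects its budget, so \(\sigma(S_{\mathrm{greedy}})\le\mathcal B'\). Hence \(\sigma(S_{\mathrm{2ph}})\le\sigma(S_{\mathrm{comp}})+\mathcal B'=\mathcal B\). For the SLA part I use Remark~\ref{rem:monotone-caching} (equivalently Lemma~\ref{lem:premise-monotonicity}): \(\tilde B_0\cup S_{\mathrm{comp}}\subseteq\tilde B_0\cup S_{\mathrm{2ph}}\). Together with Assumption~\ref{assump:comp-feasible} this gives \(\Dd(q_i\mid\tilde B_0\cup S_{\mathrm{2ph}})\le h\) for all \(i\). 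Therefore \(S_{\mathrm{2ph}}\in\mathcal F_h(\mathcal B)\).

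\emph{Reduction of the comparator.} Let \(T^\star:=S_h^\star\cap\mathcal U\). Since \(S_{\mathrm{comp}}\subseteq S_h^\star\) and \(\sigma\) is additive, \(\sigma(T^\star)=\sigma(S_h^\star)-\sigma(S_{\mathrm{comp}})\le\mathcal B'\). So \(T^\star\) is feasible for the knapsack problem solved in phase two. By Theorem~\ref{thm:greedy-guarantee} applied to \(\tilde\Delta_{\mathrm{add}}\) with budget \(\mathcal B'\),
\[
\tilde\Delta_{\mathrm{add}}(S_{\mathrm{greedy}})\ge\Bigl(1-\frac1e\Bigr)\tilde\Delta_{\mathrm{add}}(T^\star).
\]

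\emph{Transfer from \(\tilde\Delta_{\mathrm{add}}\) to \(\tilde\Delta\) (main obstacle).} The target inequality \eqref{eq:two-phase-global} is stated for \(\tilde\Delta\) evaluated on \(\mathcal U\)-parts, so I need two comparison inequalities.

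The first is \(\tilde\Delta(S_{\mathrm{greedy}})\ge\tilde\Delta_{\mathrm{add}}(S_{\mathrm{greedy}})\). I would obtain it from diminishing returns of \(\tilde\Delta\) on \(\mathcal U_{\mathrm{ext}}\), i.e.\ the hypothesis of Proposition~\ref{prop:submod-preservation} extended to the ground set \(\mathcal U_{\mathrm{ext}}\). Adding the elements of \(S_{\mathrm{greedy}}\) one at a time on top of \(\varnothing\) versus on top of \(S_{\mathrm{comp}}\) and telescoping the marginals gives \(\tilde\bar n(\varnothing)-\tilde\bar n(S_{\mathrm{greedy}})\ge\tilde\bar n(S_{\mathrm{comp}})-\tilde\bar n(S_{\mathrm{comp}}\cup S_{\mathrm{greedy}})\).

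The second is the reverse comparison \(\tilde\Delta_{\mathrm{add}}(T^\star)\ge\tilde\Delta(T^\star)\). This is the delicate step, since submodularity points the wrong way here. I would first try a per-query depth argument. Fix \(i\) and compare \(n(\varnothing)-n(T^\star)\) with \(n(S_{\mathrm{comp}})-n(S_{\mathrm{comp}}\cup T^\star)\), where \(n(\cdot)\) abbreviates \(\Dd(q_i\mid\tilde B_0\cup\cdot)\). The available tools are the shift bound \eqref{eq:depth-shift} (restoring lost premises lowers depths by at most \(d_{\mathrm{rec}}\)) and the recursion \eqref{eq:Dd-inductive} over the predecessor DAG. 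The aim is to show that restoring reconstructible premises does not diminish the depth savings contributed by items of \(\mathcal U\); this is a complementarity property between compensation and shortcut caching. If this cannot be derived from the stated hypotheses, I would isolate exactly this inequality for \(T=T^\star\) as the modular-interaction condition implicit in ``\(\tilde\Delta_{\mathrm{add}}\) normalized, monotone, submodular on \(\tilde B_0\cup S_{\mathrm{comp}}\)''. In either case, chaining the two comparisons with the greedy bound yields \eqref{eq:two-phase-global}.
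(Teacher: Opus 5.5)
Your feasibility argument, the disjointness $S_{\mathrm{comp}}\cap\mathcal U=\varnothing$ (which holds for both baselines, since $A\subseteq S_O$), the accounting $\sigma(S_h^\star\cap\mathcal U)\le\mathcal B'$, and the resulting bound $\tilde\Delta_{\mathrm{add}}(S_{\mathrm{greedy}})\ge(1-1/e)\,\tilde\Delta_{\mathrm{add}}(T^\star)$ are all correct. Together they are a more careful version of the paper's entire proof. The paper only notes that adding premises cannot increase depth, says that under Assumption~\ref{assump:mandatory-comp} ``the remaining choice reduces to monotone submodular maximization under a knapsack constraint on $\mathcal U$,'' and invokes Theorem~\ref{thm:greedy-guarantee}. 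It never performs your transfer step. It tacitly treats the post-compensation objective as $\tilde\Delta_{\mathrm{add}}$, so what it actually establishes is the bound with $\tilde\Delta_{\mathrm{add}}(S\cap\mathcal U)=\tilde{\bar n}(S_{\mathrm{comp}})-\tilde{\bar n}(S)$ on both sides.

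\textbf{The gap is your transfer step, and it cannot be closed}, because \eqref{eq:two-phase-global}, read literally with the uncompensated $\tilde\Delta$, is false under the stated hypotheses.
\begin{itemize}
\item Your first comparison needs diminishing returns of $\tilde\Delta$ on $\mathcal U_{\mathrm{ext}}$. This is not assumed, and it fails when a shortcut only pays off once $b$ is restored.
\item Your second comparison, $\tilde\Delta_{\mathrm{add}}(T^\star)\ge\tilde\Delta(T^\star)$, fails even when that extra hypothesis holds. It is also not implicit in submodularity of $\tilde\Delta_{\mathrm{add}}$, which says nothing about how $S_{\mathrm{comp}}$ interacts with $\mathcal U$.
\end{itemize}

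Here is a counterexample. Take the operational baseline and loss-only noise $B_0^-=\{b\}$ with $\Dd(b\mid\tilde B_0)=r\ge 2$. Let $\mathcal U=\{u_1,u_2\}$, $\sigma\equiv 1$, $\mathcal B=2$, $h=2$. Set $P_O(q_1)=\{b\}$, $P_O(q_2)=\{u_1\}$, $P_O(u_1)=\{b\}$, $P_O(q_3)=\{u_2\}$, and $P_O(u_2)=\{y\}$ with $y\in\tilde B_0$. Then:
\begin{itemize}
\item $b\in\Ess^+(q_1\mid B_0)\cap\Cn(\tilde B_0)$, so $S_{\mathrm{comp}}=\{b\}$ and $\mathcal B_{\mathrm{irr}}=\varnothing$.
\item $b$ is mandatory, since $\Dd(q_1\mid\tilde B_0\cup S)=r+1>h$ whenever $b\notin S$.
\item $\{b\}$ alone meets the SLA.
\item $\tilde\Delta_{\mathrm{add}}$ is modular, with value $p_2$ on $\{u_1\}$ and $p_3$ on $\{u_2\}$.
\item The extension of $\tilde\Delta$ to $\mathcal U_{\mathrm{ext}}$ is even submodular.
\end{itemize}
If $p_3>p_2$, greedy returns $\{u_2\}$, so $\tilde\Delta(S_{\mathrm{2ph}}\cap\mathcal U)=p_3$. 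But $\{b,u_1\}\in\mathcal F_h(\mathcal B)$ and $\tilde\Delta(\{u_1\})=(r+1)p_2$, because without $b$ the shortcut $u_1$ also bypasses the reconstruction of $b$. Hence $\tilde\Delta(S_h^\star\cap\mathcal U)\ge(r+1)p_2$, and for large $r$ we get $p_3<(1-1/e)(r+1)p_2$, violating \eqref{eq:two-phase-global}.

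So no per-query argument from \eqref{eq:Dd-inductive} or \eqref{eq:depth-shift} can give the complementarity you want: restoring $b$ genuinely erases most of $u_1$'s uncompensated savings. There are two ways to repair this. One is to state the guarantee for $\tilde\Delta_{\mathrm{add}}$, in which case your first three stages already finish the proof. The other is to add an explicit no-interaction hypothesis, for example $\tilde\Delta(T)=\tilde\Delta_{\mathrm{add}}(T)$ for all $T\subseteq\mathcal U$ with $\sigma(T)\le\mathcal B'$.
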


\begin{proof}
Feasibility holds because adding premises cannot increase derivation depth.
Under Assumption~\ref{assump:mandatory-comp}, any \(\mathrm{SLA}_h\)-feasible solution must include \(S_{\mathrm{comp}}\),
so the remaining choice reduces to monotone submodular maximization under a knapsack constraint on \(\mathcal U\),
to which Theorem~\ref{thm:greedy-guarantee} applies.
\end{proof}

%=============================================================
%  VI.  CONCLUSION
%=============================================================
\section{Conclusion}
\label{sec:conclusion}

This paper develops an axiomatic and information-theoretic framework for analyzing
\emph{inference cost} and \emph{storage--computation tradeoffs} in knowledge-based query answering systems.
Our starting point is a two-domain view of information \((S_O,S_C)\) together with a computable realization mechanism
(Section~\ref{sec:model}), and a canonical separation between \emph{intrinsic} semantic content and
\emph{operational} shortcut effects via the irredundant core \(\Atom(S_O)\)
(Section~\ref{subsec:atomic-derivation}).

\paragraph{Main results}
We introduce \emph{base-relative derivation depth} \(\Dd(\cdot\mid B)\) as a computable, premise-parametric cost functional.
By encoding derivation traces (Lemma~\ref{lem:encoding}) and combining richness with incompressibility
(Definition~\ref{def:richness} and Lemma~\ref{lem:incompressibility}), we show that for generic information-rich queries,
\(\Dd(\cdot\mid B)\) is tightly coupled to conditional algorithmic information up to an unavoidable logarithmic addressing factor
(Theorem~\ref{thm:derivation-depth-info-metric}).
For conjunctive workloads, the compositional depth \(\Dd_\Sigma\) provides a step-accurate instantiation
(Definition~\ref{def:additive-depth} and Corollary~\ref{cor:bcq-instantiation}).
Building on the depth--information correspondence, we derive a frequency-weighted storage--computation tradeoff with a break-even
frequency scale \(f_c=\Theta(\rho\log(m+d))\) in the generic regime (Theorem~\ref{thm:freq_tradeoff} and
Corollary~\ref{cor:fc-window}), and formulate system-wide caching as a budgeted optimization problem that admits standard
approximation guarantees under submodularity assumptions (Section~\ref{subsec:system-allocation}).
Finally, we extend the framework to noisy premise bases (Section~\ref{sec:noisy-extensions}) and derive baseline-parametric,
noise-aware tradeoffs and allocation objectives.

\paragraph{Limitations}
Several limitations reflect the gap between idealized information-theoretic models and deployed systems.
First, the depth-as-information theorems rely on explicit structural assumptions, including alignment between the predecessor
operator and single-step inference (Assumption~\ref{assump:alignment}), serializability relating \(\Dd\) to shortest trace length
\(N\) (Assumption~\ref{assump:serializable}), and a richness condition guaranteeing sufficiently many distinct queries at each cost
scale (Definition~\ref{def:richness}).
Characterizing when these assumptions hold for concrete logics, proof systems, and workloads remains an open task.
Second, our semantic-core extraction \(\Atom(S_O)\) relies on a decidable redundancy test
(Assumption~\ref{assump:core-extractable}) and assumes finite, effectively listable knowledge bases
(Assumption~\ref{assump:finite-so}).
Third, the noise model in Section~\ref{sec:noisy-extensions} is premise-set based (loss/pollution) and does not yet capture richer
corruption modes (e.g., structured drift in rules or semantics-changing perturbations).
Finally, while our results use conditional Kolmogorov complexity as an information-theoretic benchmark, it is not computable; any
operational deployment must instantiate the description-length terms via concrete encodings or computable proxies.

\paragraph{Future directions}
The framework suggests several directions for further study:
(i) develop testable sufficient conditions (or diagnostics) for richness and serializability in common proof fragments, and quantify
the constant factors relating \(\Dd\), \(N\), and execution time;
(ii) design scalable methods to approximate semantic locality (e.g., \(\Ess^+(q\mid S_O)\)) and exploit locality-refined parameters
in caching decisions;
(iii) extend the optimization layer to nonstationary query distributions and online adaptation;
(iv) move beyond set perturbations to richer noise and robustness models, including adversarial pollution; and
(v) integrate the proposed metrics into concrete reasoning/caching engines to study end-to-end behavior and system interactions.

In summary, derivation depth provides a principled bridge between inference cost and information, supporting both individual-query
tradeoffs and system-wide storage allocation.

%%%%%%%%%%%%%%%%%%%%%%%%%%%%%%%%%%%%%%%%%%%%%%%%%%%%%%%%%%%%%%%%%%%%%%%%%%%%
%%%%%% Acknowledgements
%%%%%%%%%%%%%%%%%%%%%%%%%%%%%%%%%%%%%%%%%%%%%%%%%%%%%%%%%%%%%%%%%%%%%%%%%%%%

\section*{Acknowledgment}

Doctoral student Zeyan Li from the School of Computer Science at Shanghai Jiao Tong University provided important assistance in drafting the article and has conducted experimental validation for multiple theoretical results presented in the paper. During the writing and revision of this paper, I received many insightful comments from Associate Professor Rui Wang of the School of Computer Science at Shanghai Jiao Tong University and also gained much inspiration and assistance from regular academic discussions with doctoral students Yiming Wang, Chun Li, Hu Xu, Siyuan Qiu, Jiashuo Zhang, Junxuan He, and Xiao Wang. I hereby express my sincere gratitude to them.

\bibliographystyle{IEEEtran}
\bibliography{ref}

\vfill
\end{document}